\newcommand{\be}[0]{\begin{equation}}
\newcommand{\ee}[0]{\end{equation}}
\DeclareMathAlphabet{\mathpzc}{OT1}{pzc}{m}{it}
\DeclareMathAlphabet{\mathpzc}{OT1}{pzc}{m}{it}
\newcommand{\A}{\ensuremath{\mathcal{A}}}
\newcommand{\B}{\ensuremath{\mathcal{B}}}
\newcommand{\C}{\ensuremath{\mathcal{C}}}
\newcommand{\D}{\ensuremath{\mathcal{D}}}
\newcommand{\Hil}{\ensuremath{\mathcal{H}}}
\newcommand{\JZ}{\ensuremath{\mathcal{Z}}}
\newcommand{\KA}{\ensuremath{{\mathbb A}}}
\newcommand{\KC}{\ensuremath{{\mathbb C}}}
\newcommand{\KD}{\ensuremath{{\mathcal D}}}
\newcommand{\KG}{\ensuremath{{\mathcal G}}}
\newcommand{\KH}{\ensuremath{{\mathcal H}}}
\newcommand{\KR}{\ensuremath{{\mathbb R}}}
\newcommand{\KZ}{\ensuremath{{\mathbb Z}}}
\newcommand{\KQ}{\ensuremath{{\mathbb Q}}}
\newcommand{\KT}{\ensuremath{{\mathbb T}}}
\newcommand{\KO}{\ensuremath{{\mathcal O}}}
\newcommand{\U}{\ensuremath{{\mathcal{U}}}}
\newcommand{\KW}{\ensuremath{{\mathcal{W}}}}
\newcommand{\Aut}{\operatorname{Aut}}
\newcommand{\Ad}{\operatorname{Ad}}
\newcommand{\Br}{\operatorname{Br}}
\newcommand{\Bun}{\operatorname{\ensuremath{\mathpzc{Bun}}}}
\newcommand{\Cpct}{\ensuremath{\mathcal{K}}}
\newcommand{\CrPr}[3]{ \ensuremath{#1 \underset{#3}{\rtimes} #2} }
\newcommand{\Ind}{\operatorname{Ind}}
\newcommand{\into}{\ensuremath{\hookrightarrow}}
\newcommand{\KK}{\operatorname{KK}}
\newcommand{\id}{\operatorname{id}}
\newcommand{\Prim}{\operatorname{Prim}}
\newcommand{\Tor}{\operatorname{Tor}}
\newcommand{\ch}{\operatorname{ch}}
\numberwithin{equation}{section}
\theoremstyle{plain}
\newtheorem{theorem}{Theorem}[section]
\newtheorem{lemma}[theorem]{Lemma}
\newtheorem{definition}{Def}[section]
\begin{document}

\title[An Extension of Topological T-duality]{Topological T-duality For Bundles Of Strongly Self-Absorbing $C^{\ast}$-Algebras And Some Physical Applications}

\author[Ashwin S. Pande]{Ashwin S. Pande}

\begin{abstract}
We extend the $C^{\ast}-$algebraic formalism of Topological T-duality
to section algebras of locally trivial bundles
of strongly self-absorbing $C^{\ast}-$algebras and to a larger class
of String Theoretic dualities.
We argue that physically this corresponds extending Topological T-duality to
Flux Backgrounds of Type II String Theory which possess topologically
nontrivial sourceless Ramond-Ramond flux. We demonstrate a map in $K-$theory
for the $C^{\ast}-$algebras involved in both sides of this generalized duality.
We calculate a few examples. We discuss the physical relevance of the above formalism in
some detail, in particular, we argue that the above formalism models String
Theoretic tree-level dualities found in such Flux Backgrounds such as Fermionic T-duality and Timelike T-duality.  
\end{abstract}

\maketitle

\section{Introduction\label{SecIntro}}

\subsection{Type II T-duality and Continuous-Trace $C^{\ast}-$algebras\label{S2T2CT}}

Consider Type IIA or Type IIB String Theory propagating on
a spacetime background $X.$ Suppose there is a 
free circle action $\alpha: S^1 \times X \to X$ on $X$
and $W = X/S^1.$
Then the quotient map $p:X \to W$ gives $X$ the structure
of a principal circle bundle over a base space $W.$ 

Type IIA and Type IIB String Theories propagating on these backgrounds
with sourceless $H-$flux  posesses a symmetry termed `T-duality' (see Ref.\ \cite{JMRCBMS} for
a good introduction). Roughly speaking, Type IIA String Theory on
the spacetime background with underlying topological space $X$ and
$H-$flux $[H] \in H^3(X,\KZ)$ is exactly
equivalent to Type IIB String theory on another `dual'
spacetime background with underlying dual topological space
$X^{\#}$ with dual $H-$flux $[H^{\#}] \in H^3(X^{\#}, \KZ).$
It can be shown that the dual spacetime $X^{\#}$
has a dual circle action $\alpha^{\#}: S^1 \times X^{\#} \to X^{\#}.$
In addtion $X^{\#}$ is also a principal circle bundle $p^{\#}:X^{\#} \to W$ 
due to the dual quotient map by the dual circle action $\alpha^{\#}.$

Topological T-duality is an attempt to understand the
T-duality symmetry of Type II String Theory on the target spacetime $X$
while retaining only the purely topological information about the string theory.
(It is unclear as to whether a similar approach will work for the other
known types of String Theories).
In Topological T-duality, the only information about type II string theory 
that we retain are the topological 
quantities associated to the zero energy modes of Type II
(A or B) String Theory propagating on the background $X.$ 
It is surprising that just these topological quantities alone are enough to construct
a mathematically rigorous, purely 
topological theory of Type II T-duality termed Topological T-duality (see the monograph
Ref.\ \cite{JMRCBMS} by J. Rosenberg for a detailed exposition).

In Type II String Theory propagating on a spacetime background $X,$ the massless
modes in the the $NS-NS$ sector are the
$B-$field-which corresponds to the data of a smooth
gerbe with connection on $X$-the metric and dilaton.

In topological T-duality (\cite{JMRCBMS}) only the following purely
topological information about the background $X$ are
retained: The characteristic class of the curvature $H$ of the $B-$field
together with the characteristic class of the principal circle bundle 
$p: X \to W.$
The characteristic class of the curvature $H$ is an integral 
three-form on $X$ and we identify it with a class $[H]\in H^3(X,\KZ)$ in the following.
The characteristic class of the principal circle bundle $p:X \to W$ is naturally
an element $[p] \in H^2(W,\KZ).$

It is remarkable that with such sparse information about the string background,
the formalism of Topological T-duality always correctly calculates the
topological type of the String Theoretic T-dual spacetime 
(see Ref.\ \cite{MRCMP, JMRCBMS} and references therein).
\subsection{Ramond-Ramond Fields}
Type II String Theory on a spacetime background $X$ possesses excitations termed
Ramond-Ramond fields whose field strengths are differential $p-$form valued fields (denoted $G_p$) on $X.$
It is well known that Type IIA String Theory has even-dimensional forms $$G_0, \ldots, G_{2k}, \ldots$$ as
Ramond-Ramond field strengths, while Type IIB String theory has odd-dimensional forms 
$$G_1, \ldots, G_{2k+1}, \ldots$$ as Ramond-Ramond field strengths.

We usually write the total field strength $G$ of the Ramond-Ramond fields
as an inhomogenous  sum of field strength forms
$$G = \Sigma_{i = 0}^5 G_{2i}$$ for Type IIA String Theory and
$$G = \Sigma_{i = 1}^4 G_{2i+1}$$ for Type IIB String Theory
(see Ref.\ \cite{Sati} Sec.\ (1)).

For any value of $p$ above, the field strength $G_{p}$ is the
curvature of a gauge potential $C_{p-1},$ i.e., 
$$G_{p} = dC_{p-1}.$$ 
$G_p$ has a gauge invariance 
$$C_{p-1} \rightarrow C_{p-1} + dB_{p-2}$$
with $B_{p-2}$ a $(p-2)$-form gauge field (The
$(p-2)-$form field $B_{p-2}$ should not to be 
confused with the Kalb-Ramond field $B$ which is
always a $2-$form field).
$D-$branes are sources of Ramond-Ramond fields.

The above is only an approximation, since it is known (see Ref.\ \cite{MW} Sec.\ (2), 
discussion around Eq.\ (2.19)) that Ramond-Ramond fields may be described by a hierarchy 
of successively finer descriptions. 

At the most basic level, the Ramond-Ramond field strengths 
$G_i$ on $X$ are closed forms on $X$ with the gauge symmetry described
above and should define natural classes in the de Rham cohomology of $X$. 
Most of these forms correspond to integral cohomology classes, but it may be 
argued from anomaly cancellation arguments (see Ref.\ \cite{MW}) 
that some of these classes are nonintegral. Hence,  this level of description is incomplete.

At a finer level of description, in Ref.\ \cite{MW}
Moore and Witten argued that Ramond-Ramond fields define natural 
classes in the $K-$theory group of $X,$ and that
the above field strength forms $G_i$ may be obtained from the associated 
$K-$theory class by the Minasian-Moore formula (see \cite{MW} Eq.\ (2.17)).

At a still finer level, Freed and Hopkins (see Ref.\ \cite{FH}) argued that
Ramond-Ramond fields on $X$ define natural classes in what is termed the
differential $K-$theory group of the spacetime background $X$
(see also Ref.\ \cite{Sati} Sec.\ (1) for an overview). 

We now give a brief description of the Minasian-Moore formula and its generalizations
for flux backgrounds:
If the spacetime has no background sourceless $B-$field
Witten and Moore (see Ref.\ \cite{MW, FH}) showed that
Ramond-Ramond fields naturally define elements of
the $K-$theory groups of spacetime.
Witten and Moore argued in Ref.\ \cite{MW} that for a Ramond-Ramond field
represented by a class $x \in K(X)$ the characteristic class $\omega$ of the Ramond-Ramond field 
associated to $x$ must be equal to the image of $x$ in the de Rham 
cohomology of $X$ via the Chern character, i.e., $\ch(x) = [\omega].$ Note that $\omega$ is a 
sum of differential forms of varying degrees. (For the connection of the above with the differential $K-$theory
group of $X,$ see Eqs. (10,11) of Ref.\ \cite{FH}.)

The inhomogeneous differential form $G$ representing the Ramond-Ramond field
is obtained from $\omega$ above by the Minasian-Moore formula 
\begin{equation}
G = \sqrt{\hat{A}(X)} \omega.
\end{equation}
where, as above, $G = \Sigma_i G_i$ where $G_i$ are the even or odd dimensional Ramond-Ramond field strength forms 
$G_i$ on $X$ above. Here, $\hat{A}(X)$ is the $\hat{A}-$genus of $X.$
While $\omega$ is an integral form, from the above formula, 
$G$ (since its obtained by multiplying $\omega$ by the form $\sqrt{\hat{A}(X)}$ which
might give rise to nonintegral forms) might not be integral (see Ref.\ \cite{MW}). 
Thus the Minasian-Moore formula gives an explanation of the nonintegral forms
in $G$ which were obtained from anomaly cancellation arguments.

If the spacetime has a background $B-$field with or without $H-$flux,
it has been argued that by various authors (see Refs.\ \cite{MS, Sati, Brodzki2})
that a generalization of the above argument should hold-in particular, we expect the $B-$field
to 'twist' various quantities. In Ref.\ \cite{MoSau}, Moore and Saulina show that for a flux
background with a $B-$field without $H-$flux there is an analogue of the 
Minasian-Moore formula twisted by the $B-$field: 
\begin{equation}
G(x)  = e^{B_2} \ch(X) \sqrt{\hat(A)(X)}
\end{equation}

When the background $B-$field possesses a nontrivial $H-$flux then 
Mathai and Sati in Ref.\ \cite{MS} and Brodzki et al in Ref.\ \cite{Brodzki2} argue that
the Ramond-Ramond fields satisfy a twisted Bianchi identity and are no longer closed under 
$d$ but vanish under a 'twisted differential' $d_H = d + H_3.$ The authors define a $H-$twisted
de Rham cohomology $H^{\ast}_H(X)$ using this twisted differential. 
It can also be shown (see Ref.\ \cite{Sati} and references therein), that there is a twisted
Chern character map $ch_H:K^{\ast}_H(X) \to H^{\ast}_H(X)$ where $K^{\ast}_H(X)$ is 
the $K-$theory of $X$ twisted by the closed, integral three form $[H]$ naturally 
defined by the $H-$flux $H$ in $H^3(X,\KZ).$ 

The authors argue that Ramond-Ramond fields define a class $y$ in $K^{\ast}_H(X)$ and the characteristic class of the Ramond-Ramond field in this flux background
is the image of $y$ via the twisted Chern character $\ch_H: K^{\ast}_H(X) \to H^{\ast}_H(X).$ 
In addition the authors argue that we should have $\ch_H(y) = [\omega]$ with a twisted version of the Minasian-Moore formula
\begin{equation}
G = \sqrt{\hat{A}(X)} \omega
\end{equation}
(For the generalization of the Minasian-Moore formula to twisted versions of differential $K-$theory
see Ref.\ \cite{Sati} discussion around Eqs. (1.7,1.8) and references therein).

Brodzki et al in Ref.\ \cite{Brodzki2} define a noncommutative version of the above for spacetime backgrounds which are `noncommutative manifolds', and in particular, derive a noncommutative Minasian-Moore formula. In Ref.\ \cite{Brodzki3}, Sec.\ (4.4) the authors show that the above twisted Minasian-Moore formula for backgrounds with sourceless $H-$flux agrees with the result in the previous paragraph for noncommutative manifolds which are described by continuous-trace $C^{\ast}-$algebras with spectrum $X$ (These were used to describe spacetime backgrounds with sourceless $H-$flux in Ref.\ \cite{MRCMP, JMRCBMS}).

In the rest of this paper, we argue that for spacetimes with a background sourceless $H-$flux 
and background sourceless Ramond-Ramond flux, the background Ramond-Ramond fields should be represented by a class $z \in K_{Tw}(X)$ the generalized twisted $K-$theory of $X$ in the sense of Ref.\ \cite{DadarlatK}. The characteristic classes of the Ramond-Ramond fields above should be related to the image of $z$ via a generalized twisted Chern character $\ch_{Tw}$ taking values in the generalized twisted cohomology of $X.$  

We use the results of Refs.\ \cite{DP1, DP2} to identify these characteristic classes with the characteristic classes of generalized twisted gerbes on the space $X$ and further identify these with the generalized Dixmier-Douady characteristic classes of the $C^{\ast}-$algebras of Ref.\ \cite{DP1, DP2}.
We study the effect of T-duality on these backgrounds. We define an extension of the
$C^{\ast}-$algebraic formalism of Topological T-duality to these $C^{\ast}-$algebras in
Secs.\ (\ref{SecLift}, \ref{SecMathEx}) below. We study some physical examples of this
in Sec.\ (\ref{SecPhys}) below.

We further argue below that the noncommutative Minasian-Moore formula of Brodzki et al
in Ref.\ \cite{Brodzki2} should describe some of the backgrounds in this paper in
Subsec.\ (\ref{S2TDPurInf}).

\subsection{String Theory in Flux Backgrounds}
In this section we discuss Type II String Theory in backgrounds with sourceless
$H-$flux or Ramond-Ramond flux-as mentioned in the previous section
these are termed {\em Flux Backgrounds}.

It was originally suspected that Type II String Theory was inconsistent in
backgrounds with Ramond-Ramond flux. However, even though these
backgrounds were suspected of being inconsistent,
nevertheless they were studied using String Field Theory techniques
beginning with the work of Bernstein and Leigh (see the references in Ref.\ \cite{BL}).
Over the past few years, the remarkable work of Sen and collaborators (see Ref.\ \cite{Sen} ) 
has shown that Type II String Theory is stable in these backgrounds: Type II Superstring Field 
Theory equations of motion exist in these backgrounds, but can't be obtained from an action. However,
the authors obtain gauge invariant nonlocal 1PI equations of motion from String Field Theory
and from this they obtain the full $S-$matrix of the String
Theory including all quantum corrections.

$D-$branes in Flux backgrounds have also been studied (see Refs.\ \cite{ CCS} for example):
For example, in Ref.\cite{CCS},  Cornalba et al argue that $D-$branes in flux backgrounds can be studied
using String Field Theory. They derive a number of interesting results from this which we will discuss below.

In this paper we attempt to study $D-$brane charge in flux backgrounds by extending
(or deforming) the $C^{\ast}-$algebraic formalism of Topological T-duality mentioned in
the previous section. 

Using a result of Cornalba et al. (in Ref.\ \cite{CCS}), we argue that just as switching 
on the $H-$flux on a Type II String Background causes the
spacetime to become noncommutative, similarly, switching on the Ramond-Ramond flux should cause
the spacetime to become a noncommutative manifold in the sense of Brodzki et al (Ref.\ \cite{Brodzki1,Brodzki2,Brodzki3}).
We claim that analogously with the formalism of Mathai and Rosenberg this noncommutative manifold may be described by the section algebra of a locally trivial $C^{\ast}-$bundle over the spacetime background with fiber a stabilized strongly self-absorbing $C^{\ast}-$algebra  $\A \otimes \Cpct$
(Note that the formalism of Mathai and Rosenberg is
the case with fiber $\KC \otimes \Cpct$). We propose that changes in the fiber $C^{\ast}-$algebra
reflect changes in the String Theory phenomenology.
We argue that a generalization the `Axiomatic T-duality' of Sec.\ (3.1) of Ref.\ \cite{Brodzki1} 
should hold for such backgrounds. We conjecture that the $D-$brane charge formula of Ref.\ \cite{Brodzki3} 
should be valid for these backgrounds. 

We define a generalized noncommutative Topological T-dual of the spaces discussed in this paper.
We argue that the formalism we propose displays a generalization of the Axiomatic T-duality of Brodzki et al.
We relate this generalized Topological T-dual to two phenomenological Tree-Level dualities in these
backgrounds-Fermionic T-duality and Timelike T-duality.

\subsection{The $C^{\ast}-$algebraic formalism of Topological T-duality}
In Ref.\ \cite{MRCMP}, Mathai and Rosenberg
developed the theory of Topological T-duality using ideas from
noncommutative geometry based on the crossed product of 
continuous-trace $C^{\ast}-$algebras by $\KR^n$-actions.

Continuous-trace $C^{\ast}$-algebras are a very interesting class 
of $C^{\ast}$-algebras which have been studied for a long time. 
Briefly, a continuous-trace $C^{\ast}-$algebra $\B$ over a compact metrizable space $X$ 
is a section algebra of a locally trivial $C^{\ast}$-bundle over
$X$ with fiber the $C^{\ast}$-algebra of compact operators 
$\Cpct$ on a countably infinite dimensional, separable Hilbert space.
The space $X$ may be recovered from $\B$ as the spectrum of
$\B.$ The set of isomorphism classes of continuous-trace algebras
with spectrum $X$ forms a group under $C_0(X)-$balanced tensor product of
continuous-trace algebras called the Brauer Group of $X$ denoted
$\Br(X).$ It can be proved (see Ref.\ \cite{JMRCBMS} and references therein) that
given a continuous-trace $C^{\ast}-$algebra $\B$ with spectrum $X,$
there is an isomorphism $\delta:\Br(X) \simeq H^3(X,\KZ)$ 
and the image $\delta([\B])$ of $[\B] \in \Br(X)$ 
under the above isomorphism is termed the Dixmier-Douady invariant
of the continuous-trace $C^{\ast}-$algebra $\B.$ A continuous-trace
$C^{\ast}-$algebra with spectrum $X$ and Dixmier-Douady invariant
$\delta$ is denoted $CT(X,\delta).$

It was argued in Ref.\ \cite{MRCMP} 
that, from the point of view of Noncommutative Geometry,
the presence of a $H-$flux $H$ with characteristic class
$[H] \in H^3(X,\KZ)$ on the spacetime background
$X$ corresponds to  replacing $C_0(X),$ the algebra of functions on $X$ when there is 
no $H-$flux by a continuous-trace algebra $\B$ with spectrum $X$ with
the Dixmier-Douady invariant of $\B$ equal to the $H-$flux $[H],$  
$\delta([\B]) = [H]-$see Ref.\ \cite{JMRCBMS} for details.
Thus, we may say that turning on a $H-$flux with characteristic class
$[H]$ causes us to replace $C_0(X)$ by $\B = CT(X, [H])$. 

The space $X$ possesses a circle action (as explained in 
Subsec.\ \ref{S2T2CT}). The Topological T-dual that Mathai and
Rosenberg define (see Ref.\ \cite{MRCMP}) depends on 
lifting the circle action $\alpha$ on $X$ to an action of 
$\KR$ on $\A$ covering the circle action on $X.$

It can be shown (see Ref.\ \cite{MRCMP, RaeRos,JMRCBMS})
that the $S^1-$action $\alpha$ on $X$ lifts to 
a unique equivalence class (termed exterior equivalence class, see Sec.\ \ref{SecLift})
of $\KR-$actions $\gamma$ on $\A$
each of which induces the given $S^1-$action on $X.$
This question of existence and uniqueness of lifts of group
actions up to exterior equivalence is part of the study of the crossed product of
continuous-trace $C^{\ast}-$algebras which is a vast and
well-developed body of work, see for example Refs. \cite{JMRCBMS,WillCP,RaeRos} 
for an overview of the subject.

Mathai and Rosenberg argue that the T-duality
operation on $X$ is modeled well by the crossed product of $\A$ by the natural lift of the
circle action on $X$ to an $\KR-$action $\gamma$ on $\A.$ 
More precisely, the T-dual space to $X$ (termed $X^{\#}$) is given by the spectrum of the
$C^{\ast}-$algebra $\A^{\#}$ which is the crossed product $C^{\ast}-$algebra
of $\A$ by the $\KR-$action $\gamma,$ i.e., $\A^{\#} \simeq \CrPr{\A}{\KR}{\gamma}.$ 
Varying $\gamma$ in its exterior equivalence class gives isomorphic crossed products.

Note that the topological type of the spectrum of $\A$ may be described
in terms of the topological data $[p], [H]$ above associated to the Type II String Theory background
with underlying topological space $X$. 
Similarly, the topological type of the spectrum of $\A^{\#}$ may be described in terms of the
topological data $[p^{\#}], [H^{\#}]$ associated to the dual Type II String Theory background
with underlying topological space $X^{\#}$

The $C^{\ast}$-algebraic approach to Topological T-duality 
(see Refs.\ \cite{MRCMP, JMRCBMS}) postulates that
a spacetime background $X$ with a {\em free} circle action and a {\em sourceless} $H-$flux
$[H]$ (denoted $(X,[H])$ in what follows) can be associated to an isomorphism class of $C^{\ast}-$dynamical
systems $[\A,\alpha].$ Here $\A$ is a continuous-trace algebra
with spectrum $X$ and Dixmier-Douady invariant $[H]$ 
and $\alpha$ is a lift of the circle action on $X$
to a $\KR-$action on $\A.$ 
In this theory, the Dixmier-Douady invariant of the continuous-trace algebra $\A$
is fixed to be equal to $[H],$ the class in $H^3(X,\KZ)$ induced by the sourceless $H-$flux 
$H.$

This is natural physically because of the following argument:  The 
charges of the $D-$branes in the
string background $X$ in the absence of $H-$flux 
lie in the the topological $K-$theory group of spacetime.
It is conjectured (see Ref.\  \cite{JMRCBMS} Sec. (4.2.08) ) that if a {\em sourceless} $H-$flux $H$ were switched
on in the spacetime $X$, the $D-$brane charges would lie in the twisted topological
$K-$theory group $K_H(X)$ where the magnitude of the twist is given by the value of
the $H-$flux. 

In this theory, the $K-$theory of the continuous-trace $C^{\ast}-$algebra $\A$ is
also a receptacle for $D$-brane charges: The twisted $K$-theory groups 
of the background (which are conjectured to be the charge group of $D-$branes in 
backgrounds with sourceless $H-$flux) are isomorphic to the operator $K$-theory groups 
of a continuous-trace $C^{\ast}$-algebra on that background (see
for example, Ref.\ \cite{JMRCBMS} Sec.\ (4.2) and references therein). 

Further, it was proved  in Ref.\ \cite{BouwPan} that the
$C^{\ast}$-dynamical system associated to a principal bundle
$p:W \times S^1 \to W$ with $H-$flux by the formalism of Ref.\ \cite{MRCMP} may also
be naturally constructed from the data of the string theory background.

Thus, this association of a continuous-trace $C^{\ast}-$algebra and associated
$C^{\ast}-$dynamical system to the spacetime background $(X,[H])$ 
is physically very natural since it is constructed from both
nonperturbative data-the $D-$brane charge group and perturbative data-the equivariant
gerbe on spacetime whose gerbe curvature is the characteristic class $[H]$ of the
sourceless $H-$flux.

It was observed in Ref.\ \cite{MRCMP} that with the above setup, 
the T-duality operation corresponds to the crossed-product
construction in $C^{\ast}$-algebra theory and the underlying
topological space to the T-dual is the 
spectrum of $\CrPr{\A}{\KR}{\alpha}.$ Thus, the formalism
of Ref.\ \cite{MRCMP} associates the $C^{\ast}-$dynamical system 
$[ \CrPr{\A}{\KR}{\alpha}, \alpha^{\#}]$ to the T-dual of $[\A,\alpha].$
 
Under T-duality, $D$-branes map by a change of degree-i.e. branes of even
degrees go to branes of odd degrees. This phenomenon is visible in the
above formalism as explained in Ref.\ \cite{MRCMP}. 
The mapping of $D-$brane charges under T-duality (see Ref.\ \cite{JMRCBMS})
was observed to correspond to the Connes-Thom isomorphism in $C^{\ast}-$algebra
theory $\phi: K_{\bullet}(\A) \to K_{\bullet + 1}(\CrPr{\A}{\KR}{\alpha}).$

The formalism of Mathai and Rosenberg
has been directly tested in many examples (see Ref.\ \cite{MRCMP}
and references therein; see also Ref.\ \cite{JMRCBMS}). It remarkable
that the formalism has always been found to correctly give the underlying
topological space of the String Theoretic T-dual spacetime.

\subsection{ Topological T-duality for Flux Backgrounds}

We would like to extend Topological T-duality to the study of
dualities in String Theory on backgrounds with sourceless Ramond-Ramond
Flux and $H-$flux. To do this, we need to use a generalization of notion of 
a continuous-trace $C^{\ast}-$algebra over $X$ to the notion of a section
algebra of a locally trivial bundle of strongly self-absorbing 
$C^{\ast}$-algebras over $X.$

A $C^{\ast}-$algebra $\A$ is said to be strongly self-absorbing if it is separable
unital and there is a $\ast-$isomorphism $\psi:\A \to \A \otimes \A$ such that
$\psi$ is approximately unitarily equivalent to $l:\A \to \A \otimes \A, l(a) = a \otimes 1,$
i.e., there is a sequence of unitaries $v_n \in \A \otimes \A$ such that
$\forall x \in \A,$ 
$$||v_n \psi(x) v_n - (x \otimes 1)|| \to 0$$ as $n \to \infty.$
These algebras are important in the Elliott programme and arise naturally
there in an attempt to classify isomorphism classes of $C^{\ast}-$algebras
using invariants from $C^{\ast}-$algebraic $K-$theory.

There are six known types of strongly self-absorbing $C^{\ast}-$ algebras.
These are listed in Table (\ref{tabSSA})-Note that $\KW,$ the Razak-Jacelon algebra
(see Ref.\ \cite{J1}) is nonunital and hence does not fit the above definition of a strongly self-absorbing $C^{\ast}-$algebra but will be used to calculate an example later.

\begin{table}[th]
\begin{center}
{
\caption{\label{tabSSA}This is a list of all known strongly self-absorbing $C^{\ast}-$algebras adapted from Ref.\ \cite{J1}, note that $\KW$ is not unital.}
\begin{tabular}{c c ||| c  c}
\hline 
 Remarks & {\bf Stably finite} & {\bf Purely Infinite} &  Remarks \\ [0.5ex]
\hline\hline
{ Nawata et al Ref.\ \cite{N1}} & $\KW$ & $\KO_2$ & \rdelim\}{3}{1mm}[{ ${\KO_{\infty}-}$absorbing}]  \\
{Described Below} & $M_{p^{\infty}}$ & $M_{p^{\infty}} \otimes \KO_{\infty}$\\
{Sato et al Ref.\ \cite{Sato1}} & $\JZ$ & $\KO_{\infty}$ \\
\hline 
\end{tabular}
}
\end{center}
\end{table}

From Ref.\ \cite{TomsWin}, every strongly self-absorbing $C^{\ast}-$algebra
is either stably finite or purely infinite.

First we note the following: A $C^{\ast}-$algebra $\KD$ as above is said to be
{\em $\B-$absorbing} for some strongly self-absorbing $C^{\ast}-$algebra $\B,$ 
if $\KD \otimes \B \simeq \KD.$ Note that if $\A$ is $\B-$absorbing, so is
$\KD \in \Bun_X(\A \otimes \Cpct).$  In particular, every strongly self-absorbing $C^{\ast}-$algebra $\A$ is always Jiang-Su absorbing,
so $\KD$ above is always $\JZ-$absorbing.

There are other absorption results which depend on the specific choice of $\A$. 
For example, if $\A = \KO_2,$ we have $\KO_2 \otimes M_{2^{\infty}} \simeq \KO_2$
and so every $\KD$ in $\Bun_{X}(\KO_2 \otimes \Cpct)$ is $M_{2^{\infty}}-$absorbing.

Dadarlat and Pennig in Ref.\ \cite{Dadarlat} considered section algebras of locally trivial fiber bundles 
with fiber a (stabilized) strongly self-absorbing $C^{\ast}-$algebra $\A \otimes K$
over a compact metrizable space $X.$ The isomorphism classes of these $C^{\ast}-$algebras over $X$ are denoted
$\Bun_X(\A \otimes \Cpct).$
In Ref.\ \cite{Dadarlat}, Thm.\ (3.8), the authors show that there is a generalized cohomology theory on the
category of $CW-$complexes $E^{\ast}_{\A}(X)$ such that
\begin{gather}
E^1_{\A}(X) = \Bun_X(\A \otimes \Cpct).
\label{SSACl1}
\end{gather}
In addition, in Cor.\ (4.3) of Ref.\ \cite{Dadarlat}, the authors calculate $\Bun_X(\A \otimes \Cpct)$
for a finite connected $CW-$complex $X$ using the Atiyah-Hirzebruch spectral sequence for 
$E^{\ast}_{\A}(X)$ when $\A \neq \KC$ and $\A$ satisfies the UCT. For such spaces $X$ and 
$C^{\ast}-$algebras $\A,$ the authors obtain that
\begin{gather}
\Bun_X(\A \otimes \Cpct) \simeq E^1_{\A}(X) \simeq H^1(X, R^{\times}_{+}) \times \Pi_{k \geq 1}
H^{2k+1}(X,R) \label{SSACl2}
\end{gather}

Note that if $\A \simeq \KC,$ 
$\Bun_X(\KC \otimes \Cpct) \simeq \Br(X) \simeq H^3(X,\KZ)$ the Brauer Group
of Morita equivalence classes of continuous-trace $C^{\ast}-$algebras with spectrum $X.$

In Sec.\ (\ref{SecPhys}) of this paper we claim that $X$ should naturally be viewed as the
underlying topological space of a Type II String Theory background
with various sourceless fluxes turned on-these are termed {\em flux backgrounds}
in String Theory. These sourceless fluxes are zero modes of the
Type II theory's spectrum like the $H-$flux or the Ramond-Ramond Flux.

The case $\A = \KC$ corresponds to the association of 
a continuous-trace $C^{\ast}-$algebra to a Type II String Theory background
with a sourceless $H-$flux as described in Refs.\ \cite{MRCMP, JMRCBMS}.

When $\A \neq \KC$ however, we claim the resulting $C^{\ast}-$algebra $\KD$ still describes
a spacetime background, but we should associate associate locally trivial bundles with
fibers of the form $\A \otimes \Cpct$ with $\A$ as above to spacetime backgrounds with
{\em both} $H-$flux and $RR-$flux. 

We show below that the formalism of Topological T-duality for continuous-trace
$C^{\ast}-$algebras can be generalized from $\A \simeq \KC$ to any 
self-absorbing $C^{\ast}-$algebra $\A.$
One difficulty with this generalization is the lack of a theory of $\KR-$actions 
on the above class of $C^{\ast}-$algebras. Hence, the structure of the crossed product
by even relatively simple groups such as $\KR$ cannot be determined. 
For continuous-trace $C^{\ast}-$algebras there is a satisfactory theory of $\KR-$actions
which lift circle actions on the spectrum: The structure of the crossed-product 
$C^{\ast}-$algebra by these $\KR-$actions is known precisely, and it is always clear when the crossed-product $C^{\ast}-$algebra is a continuous-trace
$C^{\ast}-$algebra at least for elementary group actions such as $\KR^n$ (see Refs.\ \cite{RaeRos, JMRCBMS, MRCMP}).
This is not the case for the class of $C^{\ast}-$algebras studied in this paper.

We outline this theory in the rest of this paper. 
We discuss the lifting of circle actions on $X$ to $\KR-$actions on $\KD$ in Sec.\ (\ref{SecLift}) below and show that unique `Rokhlin' lifts (up to closed cocycle conjugacy-a generalization of
exterior equivalence) of the circle action on $X$ to $\KR-$actions on $\KD$ exist for purely infinite strongly self-absorbing fibers $\A$-namely the Cuntz algebras $\KO_2, \KO_{\infty}$ and the tensor products  of $\KO_{\infty}$ with UHF algebras of infinite type 
$M_{p^{\infty}} \otimes \KO_{\infty}.$ 

When $\A$ is stably finite that is, $\A$ is any of the three remaining strongly self-absorbing 
$C^{\ast}-$algebras-the Razak-Jacelon algebra $\KW$, the UHF-algebras of 
infinite type $M_{p^{\infty}}$ or the Jiang-Su algebra $\JZ$
there does not seem to be a unique lift of the circle action on $X$ to a $\KR-$action on $\KD.$

We define a generalization of the theory of Topological T-duality for both these classes of $C^{\ast}-$algebras in Sec.\ (\ref{SecLift}) below and argue that this theory gives the expected T-dual in the continuous-trace case i.e., when $\A \simeq \KC.$
We calculate the Topological T-dual of several spaces in Sec.\ (\ref{SecMathEx}) below.
We discuss the implications for String Theory and $D-$brane charge in Flux backgrounds
and also calculate some examples of T-duals using the above in Sec.\ (\ref{SecPhys}) below.
We end the paper with our conclusions in Sec.\ (\ref{SecConclusion}).

\section{Lifts of circle actions \label{SecLift}}
In this section we argue about possible lifts of circle actions on the total space of a principal circle bundle
$p:X \to W$ to $\KR-$actions on $\D \in \Bun_{X}(\A \otimes \Cpct).$

Let $X$ be a topological space associated to a finite-dimensional manifold.
Suppose $p:X \to W$ was a principal circle bundle. Consider
$C_0(X)$-algebras $\D$ which are locally trivial $C^{\ast}$-bundles
over $X$ with fiber a strongly self-absorbing $C^{\ast}$-algebra
$\A \otimes \Cpct$ as in Ref.\ \cite{Dadarlat}. As in Ref.\ \cite{Dadarlat},
Thm.\ (A), we denote the isomorphism classes of these $C^{\ast}$-algebra bundles
over $X$ by $\Bun_{X}(\A \otimes \Cpct).$

In Subsec.\ (\ref{S2PrimD}) we calculate the structure of $\Prim(\KD)$ for $\KD$ above.
In Subsecs.\ (\ref{S2RAct}, \ref{S2ZS1Act}) we briefly discuss Rokhlin $\KR-$actions and 
Rokhlin $\KZ-$ and $S^1-$actions on the $C^{\ast}-$algebras $\KD$ above.
In Subsec.\ (\ref{S2TDPurInf}) we discuss the lifting of circle actions on $X$ to $\KR-$actions
on $C^{\ast}-$algebras $\KD$ above with purely infinite fiber and
Topological T-duality for these $C^{\ast}-$algebras.
In Subsec.\ (\ref{S2TDStaFin}) we discuss the lifting of circle actions on $X$ to $\KR-$actions
on $C^{\ast}-$algebras $\KD$ above with stably finite fibers and
Topological T-duality for these $C^{\ast}-$algebras.
In Subsec.\ (\ref{S2KTM}) we discuss maps on $K-$theory induced by the constructions
of Subsecs.\ (\ref{S2TDPurInf}, \ref{S2TDStaFin}).
In Subsec.\ (\ref{S2Gen}) we discuss some generalizations of the above.

\subsection{Structure of $\Prim(\KD)$ \label{S2PrimD}}

In this paper, to a $C^{\ast}-$algebra like $\KD$ as above,
we associate the topological space $\Prim(\KD)$ as the underlying spacetime background
of Type II String theory with background $H-$ and $RR-$flux. 
Also if $\Prim(\KD)$ does not exist, we view the spacetime background as the
noncommutative space given by $\KD.$

In this section, we calculate $\Prim(\KD)$ for 
$\D \in \Bun_{X}(\A \otimes \Cpct).$ The proof below has been adapted from 
Prop.\ (5.36) of I. Raeburn and D. Williams Ref.\ \cite{RaeWill}.
The proof there is for continuous-trace $C^{\ast}-$algebras and 
has been generalized to locally trivial $C^{\ast}-$bundles
with fiber a strongly self-absorbing $C^{\ast}-$algebra.

\begin{theorem}
Let $\A, X, \D$ be as at the beginning of this section.
For $\D \in \Bun_{X}(\A \otimes \Cpct),$ $\Prim(\D) \simeq X.$
\label{ThmDPrim}
\end{theorem}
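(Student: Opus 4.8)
The plan is to exploit the local triviality of $\D$ together with the fact that the fiber $\A \otimes \Cpct$ is simple, so that locally the primitive ideal space is forced to coincide with the base. First I would recall that $\D$ is a $C_0(X)$-algebra which is a locally trivial bundle: there is an open cover $\{U_i\}$ of $X$ and isomorphisms $\D|_{U_i} \simeq C_0(U_i) \otimes (\A \otimes \Cpct)$ of $C_0(U_i)$-algebras. Since $\A$ is strongly self-absorbing, $\A \otimes \Cpct$ is simple (it is a stable, $\mathcal{Z}$-absorbing, nuclear $C^{\ast}$-algebra, and strongly self-absorbing algebras are simple), so $C_0(U_i) \otimes (\A \otimes \Cpct)$ has primitive ideal space homeomorphic to $U_i$. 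Concretely, every primitive ideal of $C_0(U_i, \A \otimes \Cpct)$ is of the form $\{f : f(x) = 0\}$ for a unique $x \in U_i$, because the ideals of a $C_0(Y)$-algebra with simple fibers are exactly the $C_0(V)$-submodules for $V \subseteq Y$ open, and the primitive (= maximal among these, in the appropriate sense) ones are the evaluation kernels.

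Next I would assemble these local identifications into a global homeomorphism $\Prim(\D) \simeq X$. The natural map is $\Phi : \Prim(\D) \to X$ sending a primitive ideal $P$ to the unique point $x \in X$ such that $P \supseteq C_0(X \setminus \{x\}) \cdot \D$; equivalently, $\Phi$ is the map underlying the $C_0(X)$-algebra structure, i.e. the continuous map $\Prim(\D) \to \Prim(C_0(X)) = X$ induced by the central embedding $C_0(X) \hookrightarrow M(\D)$. I would check: (i) $\Phi$ is well-defined and continuous — this is general $C_0(X)$-algebra nonsense; (ii) $\Phi$ is surjective — each fiber $\D_x = \A \otimes \Cpct$ is nonzero, hence has a primitive ideal, which pulls back to one lying over $x$; (iii) $\Phi$ is injective — over each $U_i$ the fiber is simple, so there is exactly one primitive ideal over each $x \in U_i$, and these patch; (iv) $\Phi$ is open (hence a homeomorphism) — again local triviality reduces this to the known statement for $C_0(U_i) \otimes (\A \otimes \Cpct)$, where $\Prim \simeq U_i$ as topological spaces, not merely as sets. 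This is essentially the content of Prop.~(5.36) of Raeburn–Williams, with "fiber $=\Cpct$" replaced by "fiber $= \A \otimes \Cpct$," and the only property of the fiber actually used is simplicity, which holds here.

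The main obstacle, and the point requiring care rather than routine citation, is the topology: showing $\Phi$ is a homeomorphism and not just a continuous bijection. A continuous bijection of topological spaces need not be a homeomorphism, and $\Prim$ of a non-type-I or merely non-Hausdorff algebra can be subtle. Here the saving grace is that $X$ is (the space underlying) a finite-dimensional manifold, hence locally compact Hausdorff and paracompact, and $\D$ is genuinely \emph{locally trivial}, so one can argue openness of $\Phi$ locally: for $V \subseteq U_i$ open, the ideal $C_0(V) \cdot \D|_{U_i}$ is a primitive-ideal-space-open subset mapping onto $V$, and local triviality glues these. I would also remark that simplicity of $\A \otimes \Cpct$ is the crucial input distinguishing this from the general $C_0(X)$-algebra case (where fibers need not be simple and $\Prim$ can be strictly larger than $X$); since every strongly self-absorbing $C^{\ast}$-algebra is simple and tensoring with $\Cpct$ preserves simplicity, this hypothesis is automatically satisfied for all $\A$ in Table~(\ref{tabSSA}) and indeed for $\A = \KC$, recovering the classical continuous-trace statement.
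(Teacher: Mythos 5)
Your proposal is correct and follows essentially the same route as the paper: both adapt Prop.\ (5.36) of Raeburn--Williams, using simplicity of the fiber $\A \otimes \Cpct$ to identify primitive ideals of $\D$ with the evaluation kernels $I_t$ and then checking that $t \mapsto I_t$ is a homeomorphism. The only cosmetic difference is in the final topological step, where the paper verifies the hull--kernel closure condition directly (separating ideals by means of nonzero projections in $\D|_{V_\alpha}$ supplied by Dadarlat--Pennig), while you argue openness of the structure map locally from the trivializations; both arguments are standard and interchangeable.
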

\begin{proof}
First note that $\A,\Cpct$ are separable and simple and $\Cpct$ is nuclear. 
By Thm.\ (B.45) (c) of Ref.\ \cite{RaeWill}, $\Prim(\A \otimes \Cpct) 
\simeq \Prim(\A) \times \Prim(\Cpct) \simeq \{ \ast \}.$ Hence, the
kernel of any irreducible representation of $\A \otimes \Cpct$ must
be $\{ 0 \}.$

By Ref.\ \cite{Dadarlat}, $\D$ is a $C_0(X)$-algebra, hence, there is
an embedding $C_0(X) \into ZM(\D).$ If $\pi$ is an irreducible
representation of $\D$ on a Hilbert space $\Hil,$ 
let $\overline{\pi}$ denote the extension of
$\pi$ to $M(\D).$ Restrict $\overline{\pi}$ to $C_0(X) \subset M(\D).$
By the above, $\overline{\pi}(f) \overline{\pi}(a) = \overline{\pi}(fa) = 
\overline{\pi}(a) \overline{\pi}(f)$ and $\overline{\pi}(f)$ commutes with 
$\pi(a)$ for every $a \in \D.$ Since $\pi$ is irreducible, 
this implies that $\overline{\pi}(f) \in \KC 1_{\Hil}.$ Thus, we have an
irreducible representation of $C_0(X)$ on $\Hil$ which implies that
there is a $t \in X$ such that $\overline{\pi}(fa) = f(t) \pi(a),
\forall f \in C_0(X), a \in \D.$ 

Let $I_t$ denote the set of sections of the $C^{\ast}$-bundle
$\D$ which vanish at $t \in X.$ Then, by Ex.\ (A.23) of 
Ref.\ \cite{RaeWill}, second paragraph, we can argue that 
$I_t \subset \ker(\pi).$  Also, since $a \mapsto a(t)$ induces
an isomorphism of $\D/I_t$ onto $\D|_{\{t\}},$ $\pi$ factors through
a representation $\tilde{\pi}$ of $\D|_{\{t\}}.$ 
Since $\D|_{\{t\}} \simeq \A \otimes \Cpct,$ and 
$\Prim(\A \otimes \Cpct) \simeq \{0\},$ by the argument at the beginning
of the proof,we have that $\ker(\tilde{\pi}) \simeq \{ 0 \}.$ 
Hence, $\ker(\pi) = I_t.$

If $\eta$ is any irreducible representation of $\D|_{\{t\}} \simeq \A \otimes
\Cpct$ the evaluation map which sends a section $s \in \D$ to 
$\eta(s(t))$ gives a representation of $\D$ with kernel $I_t.$

Let $p:Y \to X$ be a bundle with fiber $\A \otimes \Cpct(\Hil)$
and structure group $\Aut(\A \otimes \Cpct(\Hil))$ over $X.$
By Ref.\ \cite{Dadarlat}, $\D$ is the section algebra of this
bundle. Now, the proof of Prop.\ (4.89) of Ref.\ \cite{RaeWill}, 
second paragraph, with $\Cpct(\Hil)$ replaced by 
$\A \otimes \Cpct(\Hil)$ shows that 
for each $y \in Y,$ there is a section $s$ of $\D$ with 
$s(p(y)) = t.$ 

Thus, as in Ref.\ \cite{RaeWill}, Prop.\ (5.36), first paragraph,
every element of each fiber is the value of some section.
We can also multiply by elements of $C_0(X).$
Hence, ideals $I_t$ corresponding to different $t \in X$ are 
distinct. Thus, $t \mapsto I_t$ is a bijection of $X$ onto $\Prim(\D).$

Pick an open cover $\{W_{\alpha}\}$ of $X$ such that for every
$W_{\alpha}$ there is a closed set $V_{\alpha} \supseteq W_{\alpha}$ such
that Thm.\ (B) of Ref.\ \cite{Dadarlat} lets us
pick a projection $p_{\alpha} \in \D|_{V_{\alpha}}.$
By Thm.\ (B) of Ref.\ \cite{Dadarlat}, $[p_{\alpha}] \in
K_0(\D|_{V_{\alpha}})^{\times}$ hence, $p_{\alpha}$ is not zero.

It is enough to show that the map $t \mapsto I_t$ is a homeomorphism.
The proof is similar to the one in the last paragraph of Ex.\ (A.23) of Ref.\ \cite{RaeWill}.
It is enough to show that 
$$
\overline{\{I_t : t \in N \}} = \{I_t : t \in \overline{N} \}
$$
for every subset $N$ of $X.$ By definition, the left hand side
has the form $\{ I_t: t \in M\}$ where, (by 
Def.\ (A.19) of Ref.\ \cite{RaeWill}),  
$M = \{ s:  I_s \supset \bigcap_{t \in N} I_t \},$
and we have to prove $M = \overline{N}.$
Since $\D$ is the algebra of continuous sections
of a locally trivial $C^{\ast}$-bundle over $X,$ 
for any $a \in \D,$ $a(x) = 0$ for $x \in N$ implies 
that $a(x) = 0$ for $x \in \overline{N}$ 
so $\overline{N} \subset M$ by definition of $M.$

If $s \notin \overline{N},$ there exists $f \in C_0(N),$ such
that $f|_{\overline{N}} = 0$ and $f(s) = 1.$ 
Pick an $\alpha$ such that $s \in W_{\alpha}.$
Then, by the above, there exists a nonzero projection 
$p_{\alpha} \in D|_{V_{\alpha}}.$ Now,
$fp_{\alpha}$ is in $I_t$ for all $t \in \overline{N}$
but not in $I_s$ which implies that $s \notin M.$ Thus,
$M = \overline{N}.$
\end{proof}

\subsection{Introduction to Rokhlin $\KR-$Actions \label{S2RAct}}
We begin by defining exterior equivalence of two group actions on a $C^{\ast}-$algebra and
a related notion called cocycle conjugacy in this Subsection.
Then we illustrate the idea of a Rokhlin $\KR-$action by examining Rokhlin $\KR-$actions
on a unital $C^{\ast}-$algebra. We indicate the generalization of this definition to nonunital
$C^{\ast}-$algebras. 

In this paper we will need to consider lifts of circle actions on $X \simeq \Prim(\KD)$ 
to what are termed {\em Rokhlin} $\KR-$ or $S^1-$actions on $\KD.$ 
We would also like to consider Rokhlin $\KZ-$actions on $\KD.$
More precisely, we consider $C^{\ast}$-dynamical systems of the 
form $(\D, \alpha, \KR)$ 
where $\alpha$ is a Rokhlin $\KR$-action on $\D$ which is a lift of the 
circle action on $X.$ We also consider actions of the form $(\KD, \beta, S^1)$ where
$\beta$ is a Rokhlin circle action on $\KD$ which is
a lift of the circle action on $X.$ We would also like to consider $C^{\ast}-$dynamical
systems of the form $(\KD, \gamma, \KZ)$ where $\gamma$ is a Rokhlin 
$\KZ-$action on $\KD.$

We first recall the notion of exterior equivalent actions and define cocycle conjugate actions:

\begin{definition}
Let $\KD, \KD'$ be two section algebras of locally trivial bundles of
strongly self-absorbing $C^{\ast}-$algebras with the same primitive
spectrum $X.$
\leavevmode
\begin{itemize}

\item Two actions $\alpha, \alpha'$ of a second countable, locally compact 
group $G$ on $\KD$ are {\em exterior equivalent}
if there is a continuous map $u:G \to UM(\KD),$ 
the group of unitary multipliers in the multiplier algebra of $\KD,$
with $u_{gh} = u_g \alpha_g(u_h)$ such that 
$\alpha'_g = \Ad u_g \circ \alpha_g.$

\item Two actions $\alpha, \beta$ of a second countable, locally compact group $G$
on $\KD$ and $\KD'$ are {\em cocycle conjugate} if there is an $\alpha-$cocycle 
$u$ i.e., a continuous map $u:G \to UM(\KD),$ the group of unitary multipliers
in the multiplier algebra of $\KD,$ with $u_{gh} = u_g \alpha_g(u_h)$ such
that $\alpha^u_g = \Ad u_g \circ \alpha_g$ is another action. 
In addition there must be an isomorphism $\phi: \KD \to \KD'$ such that 
$\alpha^u_g = \phi^{-1} \circ \beta_g \circ \phi, \forall g \in G.$
\end{itemize}
\end{definition}

We argue below that to obtain a T-dual we require a 
lift of the circle action on $X$ to a $\KR-$action on $\KD$ or to a
$S^1-$action on $\KD.$ 

Note that if $\A \simeq \KC,$ that is, $\KD$ is a continuous-trace $C^{\ast}-$ algebra
the circle action on the spectrum of $\KD$ lifts to a unique 
$\KR-$action (up to exterior equivalence) 
on $\KD$ but not to a nontrivial circle action on $\KD$ (see Ref.\ \cite{RaeRos}), 
however, this is not the case for general $\D$ in $\Bun_X(\A \otimes \Cpct)$ with
$\A \neq \KC.$ 

We cannot guarantee
a unique lift of circle action on $X$ to a $\KR-$action on $\KD$ without further conditions
on the lifted action.  Therefore, instead of lifting the circle action on $X$ to an 
arbitrary $\KR-$action, we lift it to a specific type of action
termed a {\em Rokhlin action} which we explain below.

It is suspected that there is no way to classify group actions on $\A-$absorbing
$C^{\ast}-$algebras with any sensible notion of
equivalence (including cocycle conjugacy above) unless the actions possess the Rokhlin property. That is a unique (in the sense of cocycle conjugacy) 
lift of the circle action on $X$
to a $\KR-$action on $\KD$ might not exist unless the lifted
$\KR-$action possesses the Rokhlin property (see \cite{S1} and \cite{K1} for more details).

Also for general $\KD,$ the fibers of $\KD$ are not $\Cpct$ but can be any stabilized
self-absorbing $C^{\ast}-$algebra and the lifting problem is difficult to
handle without further conditions on the fiber algebra $\A \otimes \Cpct.$ We divide
the lifting problem into two cases, those with $\A$ purely infinite and those with
$\A$ stably finite. 

Due to this, in this paper we do the following:
\begin{enumerate}
\item We principally restrict ourselves to {\em Rokhlin actions} of groups on $C^{\ast}-$algebras
and Rokhlin lifts of group actions on topological spaces to $C^{\ast}-$algebras $\KD$
in $\Bun_X(\A \otimes \Cpct)$ as far as possible (see Refs.\ \cite{K1, H1, HSWW}). 
\item In the absence of Rokhlin actions, in the following we also use actions possessing 
a weakening of the Rokhlin property called the {\em tracial Rokhlin property}
(see Ref.\ \cite{Phil} and references therein). 
\item If Rokhlin actions or actions with the tracial Rokhlin property cannot be proved to exist on 
$\KD,$ we will also use another generalization of Rokhlin actions  
(see Refs.\ \cite{H1, SWZ, G1} and references therein) to
actions with a {\em finite Rokhlin dimension}- a Rokhlin action has
Rokhlin dimension zero, Rokhlin actions with
finite Rokhlin dimension are a generalization of actions with Rokhlin dimension zero.
\item If none of these are possible for a given situation, we will use a natural group action
on $\KD,$ but we will point out that the action is not Rokhlin.
\end{enumerate}

The subject of Rokhlin actions on $C^{\ast}-$algebras is vast (see Refs.\ \cite{K1, H1, HirshWin, S1, HSWW}
and references therein) and cannot be discussed here in any detail.
We give below the simplest example of a Rokhlin $\KR-$action on a unital $C^{\ast}-$algebra
and refer the reader to the literature for more details and for generalizations.

The Rokhlin property for $\KR-$actions on a separable, unital $C^{\ast}-$algebra
(also termed {\em Rokhlin flows}) was first defined by Kishimoto in Ref.\ \cite{K1}. 
\begin{definition}
Let $\alpha$ be a flow on a separable, unital $C^{\ast}$-algebra $\A.$
We say that $\alpha$ has the Rokhlin property, if for every $p > 0,$ there exists an
approximately central sequence of unitaries $u_n$ in A satisfying
\begin{gather}
\lim_{n \to \infty} \max_{|t| \leq 1} || \alpha_t(u_n) - e^{ipt} u_n ||  = 0.
\end{gather}
\end{definition}

However, the $C^{\ast}-$algebras in this paper are always nonunital since $\Cpct$ is nonunital
and hence section algebras of locally trivial $C^{\ast}-$algebra bundles with fiber $
\A \otimes \Cpct$ are nonunital.  
This is not a problem, since there is a generalization of the idea of a Rokhlin flow to a flow on an
arbitrary separable (not necessarily unital) $C^{\ast}-$ algebra in Ref.\ \cite{S1}-see paragraph
before Def.\ (1.8) of Ref.\ \cite{S1}. 
We refer the interested reader to that reference.

We suggest two analogies which could give a physical interpretation of a Rokhlin $\KR-$action above.
\begin{itemize}
\item
Suppose that $X \simeq {\mbox{ pt } }$ so that $\KD \simeq \A \otimes \Cpct.$
Assume that $\A$ was the $C^{\ast}-$algebra of observables associated to a quantum field
theoretic system, and $\alpha_t$ was the time evolution induced by the Hamiltonian. In particular
assume that $\A$ was faithfully represented on the Fock space of a quantum field theoretic system.
The above definition would then imply that for some interval of time around
$t = 0,$ for every value of the one-particle momentum $p,$ there exist excitations in the system
described by $\\{ u_n, n = 1,2, \ldots \\}$ for every value of
$p$ which are `approximate eigenstates' of the Hamiltonian and of all other observables for a finite time.
In addition, it is clear that these excitations propagate as free particles,
at least for finite time. Such excitations are well known in Quantum Field Theory.

\item 
Alternatively, $X$ could be a background for a Type II String Theory.  Also, suppose the $\KR-$action 
on $\D$ covered the circle action on $X = \Prim(\D)$ as  above and the $C^{\ast}-$algebra $\D$ was
associated (by some means) to the algebra of observables associated to Strings propagating on $X.$ 
Then the Rokhlin condition would imply that in the Fock space there are orthonormal states which are approximate
eigenstates of every operator in $\D$ (since $b$ and $u_n$ approximately commute for every $b \in \D$)
which are nearly invariant under the translation action on $\D.$ These should correspond to states
of wound strings, at least in the limit $n \to \infty.$ It would be interesting to calculate this from String Theory.
\end{itemize}

\subsection{Rokhlin $\KZ-$ and $S^1-$actions \label{S2ZS1Act}}
In this subsection we discuss $\KZ-$ and $S^1-$actions 
on $\KD.$ We will use these in the next subsection, SubSec.\ (\ref{SecLift}).

If there is no Rokhlin lift of the circle action on $X$ to
to a Rokhlin $\KR-$action on $\D \in \Bun_{X}(\A \otimes \Cpct)$ 
then, we argue below that, in addition to the Rokhlin $\KR-$actions discussed above, 
we should consider Rokhlin circle actions and 
Rokhlin $\KZ-$actions or Rokhlin automorphisms and some crossed product $C^{\ast}-$algebras
by these actions. 

By the above discussion, these actions would have to be on not necessarily unital $C^{\ast}-$algebras. The reader is referred to Refs.\ \cite{G2,G3} for details on Rokhlin circle actions
and crossed products by Rokhlin circle actions on such $C^{\ast}-$algebras which are relevant to this paper. Note that Ref.\ \cite{G3} studies Rokhlin circle actions on $\sigma-$unital $C^{\ast}-$algebras which applies here since $\KD$ is separable.

We will also need some information about Roklin automorphisms on not necessarily unital
$\C^{\ast}-$algebras which are studied in Ref.\ \cite{SWZ}.
The Lemma below discusses the existence of Rokhlin $\KZ-$actions (also
called Rokhlin automorphisms or Rokhlin automorphisms of Rokhlin dimension 0)
on the $C^{\ast}-$algebras $\KD$ above. 

In Ref.\ \cite{H1}, Hirshberg, Zacharias and Winter introduce the idea of Rokhlin dimension for $\KZ-$actions on $C^{\ast}-$algebras.
Rokhlin automorphisms discussed above are considered to be $\KZ-$actions with Rokhlin dimension $0.$
There is a generalization of Rokhlin Automorphisms in to automorphisms which don't have the Rokhlin property
termed Rokhlin $\KZ-$actions of dimension greater than $0$ in Ref.\ \cite{SWZ}. The results in Ref.\ \cite{SWZ}
only guarantee the existence of Rokhlin $\KZ-$actions of dimension $\leq 1$ for arbitrary strongly self
absorbing $\A.$ Since crossed products by Rokhlin automorphisms of dimension greater than $0$ are understood,
if there are no results on Rokhlin automorphisms of dimension $0,$ we will take a crossed product by a Rokhlin
automorphism of dimension $\geq 0.$

However, if $\A$ absorbs the universal UHF-algebra there is a stronger result regarding
Rokhlin automorphisms of dimension $0$. 

In the following Lemma we prove that there are Rokhlin $\KZ-$actions on $\KD$ using the results of Ref.\ \cite{SWZ}
(Note that the results cited to prove 
Lemma (\ref{LemZActD}) don't need unitality of the $C^{\ast}-$algebra $\KD.$):
\begin{lemma}
Let $\A$ be a strongly self absorbing $C^{\ast}-$algebra.
Let $X$ be a compact topological space and $\KD$ a section algebra of a locally trival 
$C^{\ast}-$bundle over $X$ with fiber $\A \otimes \Cpct.$ \label{LemZActD}
\leavevmode
\begin{enumerate}
\item For any $C^{\ast}-$algebra $\KD$ as above, Rokhlin $\KZ-$actions on $\KD$ with Rokhlin dimension $\leq 1$ 
are generic in the set of $\KZ-$actions on $\KD$ (in the sense of Remark (10.2) of Ref.\ \cite{SWZ}) 
\item If $\A$ is absorbs the universal UHF algebra, then Rokhlin automorphisms of $\KD$ are generic in the set of $\KZ-$actions
on $\KD$ (in the sense of Remark (10.2) of Ref.\ \cite{SWZ})
\end{enumerate}
\end{lemma}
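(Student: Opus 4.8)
The plan is to reduce both statements to the corresponding generic-existence results in Szabó--Wu--Zacharias (Ref.\ \cite{SWZ}) applied to the section algebra $\KD$, checking that the hypotheses of those theorems are met. The key point is that $\KD$, being a section algebra of a locally trivial bundle with fiber $\A \otimes \Cpct$ over the compact space $X$, is separable (it is even $\sigma$-unital), nuclear (since $\A$ is nuclear and nuclearity passes to continuous fields over a finite-dimensional base), and $\A$-absorbing: by the remark in the excerpt, if $\A$ is strongly self-absorbing then every $\KD \in \Bun_X(\A \otimes \Cpct)$ satisfies $\KD \otimes \A \simeq \KD$. The genericity statements of Ref.\ \cite{SWZ} are phrased, in the sense of their Remark~(10.2), for $\KZ$-actions on separable, $\KD_0$-absorbing $C^{\ast}$-algebras where $\KD_0$ is a suitable strongly self-absorbing algebra; so the first task is simply to verify that $\KD$ lies in the class to which those results apply and that no unitality is needed (which the parenthetical remark before the Lemma already flags).

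First I would establish (1). Since $\KD$ is separable and $\A$-absorbing for the strongly self-absorbing algebra $\A$, the relevant theorem of Ref.\ \cite{SWZ} guarantees that $\KZ$-actions of Rokhlin dimension $\leq 1$ form a generic (dense $G_\delta$, in their topology on the action space) subset of all $\KZ$-actions on $\KD$; I would cite the precise statement there and note it requires only $\sigma$-unitality plus $\A$-absorption, both of which hold here. The only thing to be careful about is that the Rokhlin-dimension-$\leq 1$ result in Ref.\ \cite{SWZ} holds for \emph{arbitrary} strongly self-absorbing $\A$ (this is explicitly the weaker, unconditional half of their dichotomy), so no further hypothesis on $\A$ is needed for part (1).

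For (2) I would invoke the stronger half of the Szabó--Wu--Zacharias results: when the ambient algebra absorbs the universal UHF algebra $\mathcal{Q} = \bigotimes_p M_{p^\infty}$, Rokhlin automorphisms (equivalently, $\KZ$-actions of Rokhlin dimension $0$) become generic. So if $\A$ absorbs the universal UHF algebra, then $\KD \otimes \mathcal{Q} \simeq (\KD \otimes \A) \otimes \mathcal{Q} \simeq \KD \otimes (\A \otimes \mathcal{Q}) \simeq \KD \otimes \A \simeq \KD$, i.e.\ $\KD$ is $\mathcal{Q}$-absorbing, and the genericity of Rokhlin automorphisms transfers to $\KD$ directly. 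Again the cited result does not require unitality.

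The main obstacle I anticipate is purely expository rather than mathematical: making sure the notion of ``generic in the set of $\KZ$-actions'' as used in Ref.\ \cite{SWZ} (which is stated for unital or at least for a specific technical class) is correctly imported to the nonunital separable setting of $\KD$, and confirming that the topology on $\mathrm{Hom}(\KZ, \Aut(\KD))$ used there behaves well under stabilization. Once one checks that $\Cpct$-stabilized strongly self-absorbing fibers keep $\KD$ inside the class of algebras covered by their Remark~(10.2) — which is exactly the content of the parenthetical caveat stated just before the Lemma — the two parts follow immediately by quoting their dimension-$\leq 1$ theorem for part (1) and their dimension-$0$ theorem (under the universal-UHF-absorption hypothesis) for part (2).
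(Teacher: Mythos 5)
Your proposal is correct and follows essentially the same route as the paper: both parts reduce to the genericity theorems of Szab\'o--Wu--Zacharias after observing that $\KD$ is separable and inherits the relevant absorption property from its fiber ($\JZ$-absorption, via $\A$-absorption, for the unconditional Rokhlin-dimension-$\leq 1$ statement, and absorption of the universal UHF algebra for the dimension-$0$ statement). The paper's proof is just a terser version of yours, citing Theorems (10.14) and (10.10) together with Remark (10.15) of that reference.
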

\begin{proof}
\leavevmode
\begin{enumerate}
\item
Since $\A$ is strongly self absorbing, it is Jiang-Su absorbing. 
Since $\KD$ is a section algebra of a $C^{\ast}-$bundle over $X$ with fiber $\A \otimes \Cpct$ it is clear that
$\KD$is Jiang-Su absorbing as well.
By Thm.\ (10.14) and Remark (10.15) of Ref.\ (\cite{SWZ}), since
$\KZ$ is a finitely generated, countable, discrete, residually finite group and $\KD$ is separable,
Rokhlin $\KZ-$actions are generic in the set of all actions of $\KZ$ on $\KD$ in the sense of Def.\ (10.1)
and Remark (10.2) of Ref.\ (\cite{SWZ}). 
\item If $\A$ absorbs the universal UHF algebra, then by an argument similar to the previous part 
$\KD$ absorbs the universal UHF algebra as well. By Thm. (10.10) and Remark (10.15) of 
Ref.\ (\cite{SWZ}) actions with Rokhlin dimension $0$ are generic in the set of actions of $\KZ$ on $\KD$
in the sense of Def.\ (10.1) and Remark (10.2) of Ref.\ \cite{SWZ}.
\end{enumerate}
\end{proof}

\subsection{Topological T-duality for $\KD$ with purely infinite fibers \label{S2TDPurInf}}
In this section we prove that there are unique lifts of the circle action on $X$ to
$\KR-$actions on $\KD$ for fiber algebras $\A$ which are purely infinite,
that is $\A$ is isomorphic to one of the two Cuntz Algebras $\KO_2$ or $\KO_{\infty}$ or
$\A$ is isomorphic to $M_{p^{\infty}} \otimes \KO_{\infty}-$a tensor
product of an infinite UHF-algebra with the infinite Cuntz algebra .
This extends the formalism of Topological T-duality for continuous-trace $C^{\ast}-$algebras
to these $C^{\ast}-$algebras.

We would like to define a theory of Topological T-duality similar to the theory of Topological
T-duality for continuous-trace $C^{\ast}-$algebras in Ref.\ \cite{JMRCBMS}
for $C^{\ast}-$algebras like $\KD$ in the previous subsection.

We had argued in Thm.\ (\ref{ThmDPrim}) above that $\Prim(\KD) \simeq X.$ 
Thus, as in the theory of Topological T-duality for continuous-trace algebras, we may try to
lift the circle action on $X$ to a $\KR-$action on $\KD$ covering the circle action on $\Prim(\KD) \simeq X.$
The T-dual should be given by the crossed product $\KD$ by the $\KR-$action.
Note that if there is such a lift as the above, it would have a stabilizer, since the $\KR-$action
covers the circle action on $X.$ However, crossed products of $\KD$ by $\KR-$actions with stabilizers 
have not been studied directly.

Currently, it is not clear if $S^1-$actions on $\Prim(\KD)$ lift to unique Rokhlin 
$\KR-$actions on $\KD$ with stabilizers for {\em every}
self-absorbing $C^{\ast}-$algebra $\A.$ 

We point out below that for purely infinite $\A$ 
a theorem of Szabo gives a unique lift and hence a well-defined T-dual $C^{\ast}-$algebra.
In the next section, we prescribe a way to obtain a T-dual $C^{\ast}-$algebra even if we cannot
find a lift of the circle action on $\Prim(\KD).$

Let $\KD,X$ be as above and let $\A$ be purely infinite.
Then, $\A$ is isomorphic to $\KO_2, \KO_{\infty}$ or $M_{p^{\infty}} \otimes \KO_{\infty}$
and $\A$ is $\KO_{\infty}-$absorbing.
Then, Thm.\ (B) of Szabo, Ref.\ \cite{S1} shows
that if there is a Rokhlin lift of an $\KR-$action on $\Prim(\KD),$ this
lift will be unique up to cocycle conjugacy.
Thus, in this case, if we can find a Rokhlin lift,  the circle action on $\Prim(\KD)$ should lift to a 
{\em unique} Rokhlin $\KR-$action on $\KD$ \label{PurInf} up to cocycle conjugacy.

Thus, we would expect a well-defined theory of Topological T-duality to be possible
for the $C^{\ast}-$algebras $\KD$ above with purely infinite fibers.

In particular for this class of $C^{\ast}-$algebra we would always have a unique `T-duality diamond'
in Eq.\ (\ref{TDia1}) with $\Prim(\KD) \simeq X:$ 

\begin{equation}
\begin{tikzcd}
   & \C \simeq \CrPr{\KD}{\KZ}{\alpha} \arrow[dl] \arrow[dr]  &  \\
\KD \arrow[dr]  &   &  \KD^{\#} \simeq \CrPr{\KD}{\KR}{\gamma} \arrow[dl] \\
                  &  \B \simeq \CrPr{\KD}{S^1}{\beta}  & \label{TDia1}
\end{tikzcd}
\end{equation}

Taking $\Prim$ of each $C^{\ast}-$algebra in the above diagram
gives a diamond diagram of spaces (see Eq.\ (\ref{TDia2}) ) 
as in ordinary $C^{\ast}-$algebraic T-duality (\cite{JMRCBMS}).

\begin{equation}
\begin{tikzcd}
   & \Prim(\C) \arrow[dl] \arrow[dr] &  \\
\Prim(\KD) \arrow[dr]  &   &  \Prim(\KD^{\#}) \arrow[dl] \\
                  & \Prim(\B) & \label{TDia2}
\end{tikzcd}
\end{equation}

It is not clear that $\KD^{\#}$ is a section algebra of a locally trivial bundle of the form $\A \otimes \Cpct$
with $\A$ strongly self-absorbing. Due to this it is not clear that $\Prim(\KD^{\#})$ is a nontrivial topological
space. However, we may always view it as a noncommutative space.
It is clear that a principal circle bundle $\pi:X \to Y$ T-dualizes to a space with a circle action $\pi^{\#}:X^{\#} \to Y^{\#}.$
Locally by the Mackey machine it is clear that the T-dual should be of the form
$U \times S^1$ where $U \subseteq  Y^{\#}.$ 
However, this might not be true globally and there might not be a T-dual
topological space. 

By the Connes-Thom isomorphism, we will obtain a degree-reversing isomorphism in $K-$theory
between the $K-$theory of $\KD$ and the $K-$theory of the crossed product
$C^{\ast}-$algebra $\CrPr{\KD}{\KR}{\gamma}.$ 

Physically, this can be interpreted as a degree-reversing isomorphism between the twisted $K-$theory of $X$ and the $K-$theory of Topological T-dual $C^{\ast}-$algebra 
$\KD^{\#}.$ If $\KD^{\#}$ is of the form of section algebra of a locally 
trivial bundle of $C^{\ast}-$algebras of the form $\A \otimes \Cpct,$ with 
$\A$ strongly self-absorbing above, then we would obtain an isomorphism between the corresponding twisted $K-$theories on either side.
In this paper, we interpret this isomorphism as corresponding to the mapping of $D-$brane charge on
both sides of the duality by the T-duality transformation as in the case of continuous-trace $C^{\ast}$-algebras
(see Refs.\ \cite{MRCMP, JMRCBMS} for details).

Further, as described in Ref.\ \cite{Brodzki2}, it is possible to define a sensible notion of $D-$brane charge
for $D-$branes on spaces which are 'noncommutative manifolds'. In particular, the $D-$brane
charge was found to be given by an explicit formula Eq.\ (6.20) of Ref.\ \cite{Brodzki2} which is
a generalization of the Minasian-Moore formula for $D-$brane charge to noncommutative spacetimes.

The above class of $C^{\ast}-$bundles
together with their crossed products by $\KR-$actions satisfy the condition for being 
noncommutative manifolds and the above Topological T-duality theory for
the $C^{\ast}-$algebras $\KD$ with purely infinite fibers should be an example of an axiomatic T-duality
in the sense of Ref.\ \cite{Brodzki1} provided one could define the $KK-$equivalence required by that definition
for these $C^{\ast}-$algebras $\KD.$ It is interesting to conjecture that the formula for the charge of $D-$branes on spaces described by $\KD$ above should be given by a formula like Eq.\ (6.20) of Ref.\ \cite{Brodzki2}.
Physically for $\A \simeq \KO_2$ this would correspond to a 'noncommutative Minasian-Moore formula' 
for $D-$branes in non-anticommutative spacetimes with $H-$flux.

Unfortunately the only other physical example should be with fiber $\A \simeq M_{2^{\infty}}$ and for these fiber algebras
$\KD$ is stably finite and, as we will see below in Subsec. (\ref{S2TDStaFin})
the crossed product construction will not work for these $C^{\ast}-$algebras.

Note that Mahanta (Ref.\ \cite{Mah1}) has defined a theory of Topological T-duality
for $\KO_{\infty}-$absorbing $C^{\ast}-$algebras using the theory of noncommutative motives and
the $KK-$theory definition of Topological T-duality. It is interesting to ask whether this theory is
related to the theory given above?

\subsection{Topological T-dual for stably finite fibers\label{S2TDStaFin}}
In this section we assume that $\KD$ has stably finite fibers, i.e. $\A$ is of the form
$M_{p^{\infty}}$ a UHF-algebra of infinite type or $\JZ$ the Jiang-Su algebra. Our remarks should
also apply to the case when $\A$ is the Razak-Jacelon algebra $\KW.$

In the case of stably finite fibers we distinguish two cases depending on the characteristic class
of $\KD-$ torsion and non-torsion characteristic class:
\begin{enumerate}
\item{{ $\delta(\KD)$ torsion:}} When 
$\A$ is stably finite, $X$ is a finite $CW-$complex and $\KD$ has torsion characteristic class in $\Bun_X(\A \otimes \Cpct),$
we will show (see the end of Subsec.\ (\ref{TorsTD}) below) that we may calculate the $T-$dual easily for a restricted class of $\KR-$actions. 

\item{{ $\delta(\KD)$ non-torsion:}}When $\A$ is stably finite, there is no way, at present,
of guaranteeing that there is a lift or that a given lift is unique. 
As a there is no way to pick a unique lift of the circle action on $X$
to a $\KR-$action, there is no way to guarantee that the T-dual is unique. 

However the procedure described in Subsec.\ (\ref{S2TDStaFin}) below lets one pick a unique $\KR-$action and a unique T-dual for {\bf stably finite}
$C^{\ast}-$algebras possibly with some extra data. 
\end{enumerate}

Extending Topological T-duality to locally trivial bundles of purely infinite strongly 
self-absorbing $C^{\ast}-$algebras {\em without arbitrary choices} is possible due to
Szabo's theorem as discussed above.
 
I propose that Topological T-duality be extended to locally trivial bundles of 
stably finite strongly self-absorbing $C^{\ast}-$algebras
by the following method-which I argue gives a unique, possibly noncommutative T-dual. 

As in the previous section, let $p:X \to B$ be a principal circle bundle. Let $\KD \in \Bun_X(\A \otimes \Cpct)$ where 
$\A$ is a strongly self-absorbing $C^{\ast}-$algebra.

We use the following procedure to find the T-dual when we cannot find an $\KR-$action 
covering the circle action on $X.$
\begin{enumerate}
\item We suppose given a spectrum-preserving Rokhlin automorphism $\alpha$ of $\KD$
which commutes with the translation action of the circle. We pick one automorphism in
a given equivalence class (usually cocycle conjugacy), 
perhaps by fixing some datum describing the equivalence class. For example, 
in some cases, for tracial Rokhlin actions (see Ref.\ \cite{EEK1} for example), the values
of the trace describe the collection of cocycle conjugacy classes.
\item We calculate $\B \simeq \CrPr{\KD}{\KZ}{\alpha}$ and take $\B$ as the
(possibly noncommutative) $C^{\ast}-$algebra associated to the correspondence space.
\item The circle action on $X$ gives rise to a circle action $\beta$ on $\B.$
\item We take the 'T-dual' as the $C^{\ast}-$algebra $\CrPr{\B}{S^1}{\beta}$ which is the
'quotient of $\B$ by $\beta$' 
\end{enumerate}

The above argument covers all the known stably finite strongly self-absorbing $C^{\ast}-$algebras. 
Thus, we would expect the above procedure to always give a unique T-dual once we fix the equivalence
class of the original actions. For example, in certain cases, for actions possessing the tracial Rokhlin property 
the trace of the automorphism would parametrize such equivalence classes. In such cases
we attach the value of the trace of the automorphism as a parameter in the duality. We do not know
the physical significance of this at present.

In the above procedure, it is clear that in the above we are calculating the crossed product by a specific $\KR-$action,
namely the $\KR-$action on $\Ind_{\KZ}^{\KR}(\A \otimes \Cpct).$
It is not clear at present that this action is a Rokhlin $\KR-$action or that there is a unique $\KR-$action covering the circle action on $X$  in every case.
Indeed, we will prove below in a specific example (\ref{ExM2InfS1}) that such an action {\em cannot} be Rokhlin.

{\em We propose that the above procedure be used to define the Topological T-dual when a unique lift of
the circle action on $X$ to a $\KR-$action on $\KD \in \Bun_X(\A \otimes \Cpct)$ is not available.} This
'Topological T-dual' will be unique up to varying the cocycle conjugacy class of the $\KZ-$automorphism and the cocycle conjugacy
class of the circle action commuting with it by the above argument. 

We now show that this 'Topological T-dual' is
actually a crossed product $C^{\ast}-$algebra of a known type:
Let $p:X \to B$ be a principal circle bundle and let $\KD \in \Bun_X( \A \otimes \Cpct).$
Let $\alpha$ be a Rokhlin spectrum-fixing $\KZ-$action on $\KD.$
Let $\delta$ be a lift of the circle action on $X$ to a (not necessarily Rokhlin) circle action on $\KD.$ 

Let ${\Hil}_{\alpha}$ be $\KD$ with the usual Hilbert $C^{\ast}-$module structure given by twisting
by $\alpha.$ Let $\KO_{\KD}({\Hil}_{\alpha})$ be the Cuntz-Pimsner algebra associated to $\KD$ by the Hilbert $C^{\ast}-$module
${\Hil}_{\alpha}.$
 
Note that the circle action $\delta$ lift to a circle action $\zeta$ on ${\Hil}_{\alpha}$ 
(see Ref.\ \cite{Deaconu} and references therein)
By Ref.\ \cite{Deaconu}, $\zeta$ gives a natural circle action $\gamma$ on the Cuntz-Pimsner algebra $\KO_{\KD}({\Hil}_{\alpha}).$

We can now calculate the T-dual we would obtain from $\KD$ if we were to 
follow the procedure outlined in Subsec.\ (\ref{S2TDStaFin}). It is strange that we can
calculate the T-dual explicitly, but it turns out to be a Cuntz-Pimsner algebra naturally
associated to $\KD.$ 
\begin{theorem}
\leavevmode
\begin{enumerate}
\item The T-dual of $\KD$ using the procedure of Subsec.\ (\ref{S2TDStaFin}) is a crossed product of $\KO_{\KD}({\Hil}_{\alpha})$ by the
circle action $\gamma$ and is isomorphic to
\begin{gather}
\CrPr{\KO_{\KD}({\Hil}_{\alpha})}{S^1}{\gamma} \simeq  \KO_{\CrPr{\KD}{S^1}{\delta}}(\CrPr{{\Hil}_{\alpha}}{S^1}{\zeta}) 
\end{gather}
This T-dual does not depend on the cocycle conjugacy class of the $\KZ-$action $\alpha$ on $\KD.$ 
\item The crossed product of $\KD$ by a $\KR-$action $\beta$ lifting the circle action on $X$ is given by
the result in the previous part if $\alpha \simeq \beta|_{\KZ}$ and $\delta$ is trivial.
\end{enumerate}
\label{TDAll}
\end{theorem}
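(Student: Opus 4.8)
The plan is to recognise each step of the procedure of Subsec.~(\ref{S2TDStaFin}) as an instance of a standard construction and then to compose them. Step~2, $\B\simeq\CrPr{\KD}{\KZ}{\alpha}$, is handled by Pimsner's description of a $\KZ$-crossed product as a Cuntz--Pimsner algebra: the correspondence $\Hil_{\alpha}$ (namely $\KD$ as a right Hilbert $\KD$-module with left action $a\cdot\xi=\alpha(a)\xi$) is full, nondegenerate, and has left action an isomorphism $\KD\xrightarrow{\ \sim\ }\mathcal K(\Hil_{\alpha})$, so it is a proper correspondence and $\KO_{\KD}(\Hil_{\alpha})\simeq\CrPr{\KD}{\KZ}{\alpha}=\B$, the generator of $\KO_{\KD}(\Hil_{\alpha})$ mapping to the implementing unitary of $\alpha$. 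Thus the algebra produced in step~2 is $\B\simeq\KO_{\KD}(\Hil_{\alpha})$.

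Next I would match the circle actions. Because $\alpha$ was chosen to commute with the lift $\delta$ of the circle action on $X$, $\delta$ descends to the circle action $\beta$ on $\B=\CrPr{\KD}{\KZ}{\alpha}$ used in step~3; on the other hand, $\delta$ (acting on $\KD=\Hil_{\alpha}$) already satisfies the compatibility conditions with the bimodule structure precisely because it commutes with $\alpha$, so it \emph{is} the lift $\zeta$ of $\delta$ to $\Hil_{\alpha}$ of Ref.~\cite{Deaconu}, and the circle action $\gamma$ it induces on $\KO_{\KD}(\Hil_{\alpha})$ restricts to $\delta$ on $\KD$ and fixes the canonical generator. By functoriality of Pimsner's isomorphism with respect to equivariant automorphisms of the pair $(\KD,\alpha)$, the isomorphism $\B\simeq\KO_{\KD}(\Hil_{\alpha})$ intertwines $\beta$ and $\gamma$, so the T-dual produced by step~4, namely $\CrPr{\B}{S^1}{\beta}$, is isomorphic to $\CrPr{\KO_{\KD}(\Hil_{\alpha})}{S^1}{\gamma}$. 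For the second isomorphism I would invoke the theorem of Hao and Ng on crossed products of $C^{\ast}$-correspondences by amenable group actions (valid for amenable, in particular compact, groups, with full equal to reduced --- $\KD$ being moreover nuclear): for $S^1$ acting compatibly on $(\Hil_{\alpha},\KD)$ via $(\zeta,\delta)$,
\begin{gather*}
\CrPr{\KO_{\KD}(\Hil_{\alpha})}{S^1}{\gamma}\;\simeq\;\KO_{\CrPr{\KD}{S^1}{\delta}}\!\bigl(\CrPr{\Hil_{\alpha}}{S^1}{\zeta}\bigr),
\end{gather*}
which is the asserted formula. Independence of the cocycle conjugacy class of $\alpha$ then follows: if $\alpha'=\Ad v\circ\alpha$ (up to a spectrum-fixing isomorphism commuting with $\delta$) through a $\delta$-invariant unitary multiplier $v\in UM(\KD)$, then $\xi\mapsto v\xi$ is an isomorphism of $\KD$-correspondences $\Hil_{\alpha}\to\Hil_{\alpha'}$ carrying $\zeta$ to $\zeta'$, so the right-hand side above --- hence the T-dual --- is unchanged.

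For part~(2), let $\beta$ be an $\KR$-action on $\KD$ lifting the circle action on $X$. Since the circle action on $X$ factors through $\KR\to S^1$, the restriction $\beta|_{\KZ}$ fixes $\Prim(\KD)=X$, so the procedure applies with $\alpha=\beta|_{\KZ}$. Decomposing the crossed product along the closed normal subgroup $\KZ\subset\KR$, with quotient $\KR/\KZ\simeq S^1$, gives $\CrPr{\KD}{\KR}{\beta}\simeq\CrPr{\bigl(\CrPr{\KD}{\KZ}{\beta|_{\KZ}}\bigr)}{S^1}{\widetilde\beta}$, the $S^1$-action $\widetilde\beta$ being induced by $\beta$. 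Since $\beta|_{\KZ}\simeq\alpha$ the inner crossed product is $\B$; and when $\delta$ is trivial the (non-split) extension $0\to\KZ\to\KR\to S^1\to 0$ introduces no cocycle twist obstructing this decomposition, so $\widetilde\beta$ is an honest circle action which one checks coincides with the one used in step~3 of the procedure. Hence $\CrPr{\KD}{\KR}{\beta}$ is the T-dual of part~(1) computed from $(\alpha,\delta)$.

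The main obstacle is twofold. First, one must confirm that the Hao--Ng identification is genuinely available at this level of generality: that $\Hil_{\alpha}$ is regular enough (proper, nondegenerate, injective left action --- which holds here for a section algebra of a bundle of strongly self-absorbing $C^{\ast}$-algebras) and that the $S^1$-action $(\zeta,\delta)$ is of the type covered by the theorem, so that the isomorphism holds on the nose rather than merely up to stabilisation or passage to a subalgebra. Second, in part~(2) the delicate point is the vanishing of the potential cocycle twist in the iterated crossed product $\CrPr{\KD}{\KR}{\beta}\simeq\bigl(\CrPr{\KD}{\KZ}{\beta|_{\KZ}}\bigr)\rtimes(\KR/\KZ)$ together with the precise identification of the induced $S^1$-action with the circle action built into the procedure of Subsec.~(\ref{S2TDStaFin}); this is exactly where the hypotheses ``$\beta|_{\KZ}$ cocycle conjugate to $\alpha$'' and ``$\delta$ trivial'' are used. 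By comparison, the matching of the circle actions through the functoriality of Pimsner's isomorphism and the cocycle-conjugacy invariance are routine once the setup is in place.
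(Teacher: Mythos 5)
Your proof follows essentially the same route as the paper's: Pimsner's identification of $\CrPr{\KD}{\KZ}{\alpha}$ with $\KO_{\KD}({\Hil}_{\alpha})$, the Hao--Ng theorem for the amenable group $S^1$ to obtain the displayed isomorphism, and decomposition of the $\KR$-crossed product through $\KZ$ and $\KR/\KZ \simeq S^1$ (induction in stages) for part (2). You additionally supply the explicit correspondence isomorphism $\xi \mapsto v\xi$ justifying independence of the cocycle conjugacy class of $\alpha$, a point the paper only asserts.
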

\begin{proof}
\begin{enumerate}
\item By a result of Pimsner (\cite{Pimsner}), the crossed product $\CrPr{\KD}{\KZ}{\alpha} \simeq \KO_{\KD}({\Hil}_{\alpha}).$
Hence, the construction of Subsec.\ (\ref{S2TDStaFin}) gives the T-duality diamond in Eq.\ (\ref{TDiaOA}) below
\begin{equation}
\begin{tikzcd}
&& \C \simeq  \CrPr{\KD}{\KZ}{\alpha} \simeq \KO_{\KD}({\Hil}_{\alpha})  \arrow[dl] \arrow[dr] & \\ 
{\Hil}_{\alpha} \arrow[r, dashed]&\KD \arrow[dr] & & {\KD^{\#}} \simeq \CrPr{\KO_{\KD}({\Hil}_{\alpha})}{S^1}{\gamma}
\simeq \KO_{\CrPr{\KD}{S^1}{\delta}}(\CrPr{{\Hil}_{\alpha}}{S^1}{\zeta})
 \arrow[dl] \\
& & \B \simeq \CrPr{\KD}{S^1}{\delta} & \\
& & \CrPr{ {\Hil}_{\alpha}}{S^1}{\zeta} \arrow[u,dashed] & \label{TDiaOA}
\end{tikzcd}
\end{equation}

Since $S^1$ is amenable, by the Hao-Ng Theorem
(see  Ref.\ \cite{Deaconu} page 6 and Ref.\ \cite{HaoNg} Thm.\ (2.10)~)
\begin{gather}
\CrPr{\KO_{\KD}({\Hil}_{\alpha})}{S^1}{\gamma} \simeq \KO_{\CrPr{\KD}{S^1}{\delta}}(\CrPr{{\Hil}_{\alpha}}{S^1}{\zeta})
\end{gather}

Note that the T-dual by the above procedure can only depend on
the trace conjugacy class of $\alpha$ and on 
the cocycle conjugacy class of the circle action $\delta$ on $\KD.$

\item This follows from the previous part since the $\KR-$action would give the T-duality diamond-like Diagram (\ref{TDiaOA}) above
because, as discussed in Sec.\ (\ref{S2TDStaFin}) the maps in the above diagram are just the maps induced by induction in 
stages \cite{DWill} for a $\KR-$action on $\KD.$ 
Due to this, $\beta$ must restrict to $\alpha$ on $\KZ \into \KR.$ 
We can view the circle action $\delta$ as a $\KR-$action $\tilde{\delta}$ on $\KD$ by composing 
$\KR \to \KR/\KZ$ with $\delta:\KR/\KZ \to \Aut(\KD).$ 
Then, $\tilde{\delta} \circ \beta^{-1}$ is a spectrum-fixing $\KR-$action on $\KD.$

\end{enumerate}
\end{proof} 

In the above Theorem, the circle action $\delta$ need not be Rokhlin, however
Prop.\ (5.6) of Ref.\ \cite{S2}
shows that the $\KR-$action $\tilde{\delta}$ is Rokhlin if the circle action $\delta$ is Rokhlin.
%

Note that the above construction is reversible and taking the crossed product of $\KD^{\#}$
by the correct dual actions will return the original $C^{\ast}-$algebra $\KD.$
Thus the construction is involutory as a duality should be.

{\flushleft {\bf Note:}} Let $X$ be a point then, then $\KD \simeq \A \otimes \Cpct$ with $\A$ stably finite. 
We would like to T-dualize these $C^{\ast}-$algebras using the method outlined in this Subsection.
For this, we note the following results about automorphisms of stabilized stably finite strongly self-absorbing $C^{\ast}-$algebras
\begin{itemize}
\item{$\A \otimes \Cpct \simeq \KW \otimes \Cpct$:} In Ref.\ \cite{N1}, Nawata has defined a Rokhlin property for automorphisms
of the stabilized Razak-Jacelon algebra. This algebra posesses a unique trace $\tau.$ The author terms an automorphism $\alpha$ 
to be trace scaling if there exists an invariant-the trace scaling factor $\lambda(\alpha)-$ 
such that $\lambda(\alpha) \tau(a) = \tau(\alpha(a)), \forall a \in \KW \otimes \Cpct.$
He shows that two trace scaling Rokhlin automorphisms $\alpha$ and $\beta$ (i.e. two Rokhlin $\KZ-$actions)  on
$\KW \otimes \Cpct$ are outer conjugate to each other if and only if $\lambda(\alpha) = \lambda(\beta).$ 

\item{$\A \otimes \Cpct \simeq M_{p^{\infty}} \otimes \Cpct$:} In Ref.\ \cite{EEK1}, Elliott, Evans and Kishimoto classified trace scaling
automorphisms of stabilized UHF-algebras and showed that all automorphisms $\alpha$ of a stabilized UHF algebra with the same trace scaling factor $s(\alpha) \neq 1$ are outer conjugate. They also determined the structure of the crossed product of $M_{r^{\infty}} \otimes \Cpct$ by $\alpha$
for some supernatural number $r^{\infty}.$ 

\item In Ref.\ \cite{Sato1}, Sato has defined a weak Rokhlin property for automorphisms of the Jiang-Su algebra
and shows that $\KZ-$actions on $\JZ$ with this property are all outer conjugate.  \label{StaFin}
\end{itemize}

If two automorphisms are outer conjugate, the crossed product by these Rokhlin $\KZ-$actions are isomorphic.
For each of the above classes of {\em stably finite} strongly self-absorbing algebras, it follows that all the outer
conjugacy classes of Rokhlin $\KZ-$actions on $\A \otimes \Cpct$ with $\A$ stably finite (except $\A \simeq \JZ$)
are classified by the trace. Thus, the isomorphism classes of the Topological T-dual of $\A \otimes \Cpct$ are classified by the
trace as well.

Simple $C^{\ast}-$algebras like the Razak-Jacelon and UHF $C^{\ast}-$algebras above as correspond to
`noncommutative points'.  Also, it is well known that the Cuntz algebra $\KO_2$ should be viewed as a `noncommutative circle'.
We will discuss the T-dual of an element of $\KD \in \Bun_{S^1}(M_{2^{\infty}} \otimes \Cpct)$ in Sec.\ \ref{SecMathEx} below using
the discussion above. We know that $\Prim(\KD) \simeq S^1.$ 
We will see that in some cases, the Topological T-dual is a stabilized Cuntz algebra $\KO_2 \otimes \Cpct$ which we view
as a noncommutative circle.

\subsection{Maps in $K-$theory \label{S2KTM}}
In this section we obtain natural maps between the $K-$theory of $\KD$ and the $K-$theory of the proposed 
Topological T-dual $\KD^{\#}$ for $\KD$ above with $\A$ purely infinite or stably finite. 
In this Section, we assume that $X, \A, \KD, \KD^{\#}, \alpha$ and $\gamma$ are as in Thm.\ \ref{TDAll} above.

For Topological T-duality for $\KD$ with $\A$ purely infinite, we can lift the circle action on $X$ to a $\KR-$action on $\KD$ and T-dualize in the
sense of Mathai and Rosenberg. The T-dual of $(\KD, \alpha)$ will be $(\KD^{\#}, \alpha^{\#})$ and 
$\KD^{\#} \simeq \CrPr{\KD}{\KR}{\alpha}.$ 
As in Topological T-duality for continuous-trace $C^{\ast}-$algebras, \cite{JMRCBMS} 
there will be a degree-reversing isomorphism  
$K_{\bullet}(\KD) \to K_{\bullet+1}(\KD^{\#})$
induced by the crossed product. 

If $\KD^{\#}$ is also a section algebra of a locally trivial bundle with fiber a stabilized strongly self-absorbing
$C^{\ast}-$algebra, then the two $K-$theory groups on either side of the isomorphism correspond to
higher twisted $K-$theory of $\Prim(\KD)$ and $\Prim(\KD^{\#})$ respectively. In this case,
we interpret the above isomorphism as an isomorphism on higher twisted $K-$theory of the two sides of the duality
induced by Topological T-duality \cite{MRCMP} for $\A$ purely infinite.

Is there an isomorphism in $K-$theory for $\KD$ with stably finite fibers? 
We now argue that for stably finite fibers there is natural map between $K-$theory of the
$C^{\ast}-$algebras $\KD, \KD^{\#}$ above. 

The $K-$theory of the crossed product $\CrPr{\KO_{\KD}(\Hil_{\alpha})}{S^1}{\gamma}$ which we suggested was the
T-dual of $\KD$ with stably finite fibers in Thm.\ (\ref{TDAll})
above may be calculated using Theorem (A) of Ref.\ \cite{Schafhauser} and depends on
the morphism $[\KH_{\alpha}]$ induced by $\Hil_{\alpha}$ on $K_{\ast}(\KD).$
\begin{theorem}
Assume $\A, \KD, X$ and all $\KZ-$, $S^1-$ and $\KR-$actions on them are as in Thm.\ (\ref{TDAll})
\leavevmode
\begin{enumerate}
\item There is a natural map $\phi$ between the $K-$theory of $\KD$ and the $K-$theory of the 'T-dual' $\KD^{\#}.$
\item In case the T-dual is given by crossed product by a Rokhlin $\KR-$action lifting the
circle action on $X,$ $\phi$ is an isomorphism.
\end{enumerate}
\label{ThmIsoDDual}
\end{theorem}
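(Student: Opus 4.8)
The plan is to manufacture $\phi$ from the two crossed-product descriptions of $\KD^{\#}$ supplied by Thm.\ (\ref{TDAll}) --- using the Pimsner and Pimsner-Voiculescu six-term exact sequences together with Takai duality --- and then, in the situation of part (2), to identify the resulting arrow with a Connes-Thom isomorphism. Put $B := \CrPr{\KD}{S^1}{\delta}$ and let $F := \CrPr{{\Hil}_{\alpha}}{S^1}{\zeta}$, a Hilbert $B$--$B$ correspondence, so that by Thm.\ (\ref{TDAll}) one has $\KD^{\#} \simeq \CrPr{\KO_{\KD}({\Hil}_{\alpha})}{S^1}{\gamma} \simeq \KO_{B}(F)$. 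Pimsner's six-term exact sequence \cite{Pimsner} (in Katsura's refinement, to accommodate the left action of $B$ on $F$) for the Cuntz-Pimsner algebra $\KO_B(F)$ produces the canonical inclusion $\iota : B \into \KO_B(F)$ together with the natural maps fitting into
\[
\cdots \longrightarrow K_i(B) \xrightarrow{\ \mathrm{id}-[F]\ } K_i(B) \xrightarrow{\ \iota_*\ } K_i(\KD^{\#}) \xrightarrow{\ \partial\ } K_{i-1}(B) \longrightarrow \cdots,
\]
where $[F] \in \KK(B,B)$ is the class of the correspondence. This is the content of Thm.\ (A) of Ref.\ \cite{Schafhauser}, which moreover re-expresses $[F]$ and $\iota_*$ in terms of the morphism $[{\Hil}_{\alpha}]$ on $K_*(\KD)$ together with the Takai/Pimsner-Voiculescu data of the circle action $\delta$, so that $K_*(\KD^{\#})$ becomes computable from $K_*(\KD)$.

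To land a map starting from $K_*(\KD)$ itself, use Takai duality: $\CrPr{B}{\KZ}{\hat\delta} \simeq \KD \otimes \Cpct$, so the Pimsner-Voiculescu sequence for the dual $\KZ$-action $\hat\delta$ on $B$ supplies a natural boundary map $\partial_\delta : K_i(\KD) \to K_{i-1}(B)$. The required natural, degree-reversing map of part (1) is then the composite
\[
\phi \;:=\; \iota_* \circ \partial_\delta \;:\; K_i(\KD) \longrightarrow K_{i-1}(\KD^{\#}).
\]
Naturality in $\KD$ is inherited from the functoriality of the Pimsner sequence, the Pimsner-Voiculescu sequence, and the Hao-Ng \cite{HaoNg} and Takai identifications in the appropriate categories of correspondences and $C^{\ast}$-dynamical systems; in particular $\phi$ does not depend on the cocycle conjugacy class of $\alpha$, since by Thm.\ (\ref{TDAll}) the algebra $\KD^{\#}$ itself does not.

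For part (2), Thm.\ (\ref{TDAll})(2) tells us this is precisely the case in which $\delta$ is trivial and $\KD^{\#} \simeq \CrPr{\KD}{\KR}{\beta}$ for a Rokhlin $\KR$-action $\beta$ lifting the circle action on $X$ with $\alpha \simeq \beta|_{\KZ}$; induction in stages realises $\CrPr{\KD}{\KR}{\beta} \simeq \CrPr{(\CrPr{\KD}{\KZ}{\alpha})}{S^1}{\bar\gamma}$, which is exactly the description underlying the construction of $\phi$ once $\delta$ is taken trivial (so $B = \CrPr{\KD}{S^1}{\mathrm{triv}}$ and $\gamma$ is the residual $\KR/\KZ$-action $\bar\gamma$). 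The Connes-Thom isomorphism $\mathrm{CT}_\beta : K_i(\KD) \xrightarrow{\ \sim\ } K_{i-1}(\CrPr{\KD}{\KR}{\beta}) = K_{i-1}(\KD^{\#})$ is available unconditionally, and the plan is to prove $\phi = \pm\,\mathrm{CT}_\beta$. Since Connes-Thom is multiplicative under restriction of the acting group, $\mathrm{CT}_\beta$ factors as the Pimsner-Voiculescu boundary for $\alpha$ on $\KD$ --- which under Pimsner's isomorphism $\CrPr{\KD}{\KZ}{\alpha} \simeq \KO_{\KD}({\Hil}_{\alpha})$ is exactly $\iota_*$ up to the canonical degree shift --- followed by the Connes-Thom map for the residual circle action $\bar\gamma$, which for trivial $\delta$ is exactly $\partial_\delta$. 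Tracing through the identifications, these are the same two arrows whose composite defines $\phi$, whence $\phi = \pm\,\mathrm{CT}_\beta$ and in particular $\phi$ is an isomorphism.

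The serious step is this last compatibility: one must verify that the composite $\iota_* \circ \partial_\delta$ assembled from Pimsner's sequence for $\KO_B(F)$ (via Schafhauser's theorem together with Hao-Ng) agrees, up to sign, with the Connes-Thom isomorphism built by induction in stages for the $\KR$-action. This is a diagram chase through a chain of functorial but nontrivial identifications (Pimsner's, Hao-Ng's, Takai duality, and the $\KK$-theoretic description of Connes-Thom), and it requires that these commute on the nose rather than merely up to unspecified isomorphism. A subsidiary technical point is to confirm that Katsura's form of Pimsner's six-term sequence applies to the correspondence $F$, i.e.\ that the left action of $B$ on $F$ is injective: this holds because $F$ is obtained by $\delta$-induction from the standard self-correspondence ${\Hil}_{\alpha}$ of a $\KZ$-action on $\KD$, whose left action is isometric, and $\delta$-induction preserves injectivity of the left action.
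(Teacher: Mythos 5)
Your construction of $\phi$ is genuinely different from the paper's, and it breaks down exactly where part (2) needs it. The paper obtains $\phi$ directly from Theorem (A) of Schafhauser, which identifies $K_{\ast}(\CrPr{\KO_{\KD}({\Hil}_{\alpha})}{S^1}{\gamma})$ with the direct limit $\varinjlim\,(K_{\ast}(\KD),[\Hil_{\alpha}])$, and simply takes $\phi$ to be the canonical map of $K_{\ast}(\KD)$ into that limit; you instead define $\phi=\iota_{\ast}\circ\partial_{\delta}$, where $\partial_{\delta}$ is the Pimsner--Voiculescu boundary map for the dual $\KZ$-action $\hat\delta$ on $B=\CrPr{\KD}{S^1}{\delta}$. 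The difference is not cosmetic: when $\delta$ is trivial --- which, by Thm.\ (\ref{TDAll})(2), is precisely the situation of part (2) --- one has $B\simeq \KD\otimes C^{\ast}(S^1)\simeq \KD\otimes C_0(\KZ)$, so $K_{\ast}(B)\simeq\bigoplus_{n\in\KZ}K_{\ast}(\KD)$ and $\hat\delta$ acts by translating the summands. The map $\mathrm{id}-\hat\delta_{\ast}$ is then injective on this algebraic direct sum (a finitely supported shift-invariant family is zero), and exactness of the Pimsner--Voiculescu sequence forces $\operatorname{im}\partial_{\delta}=\ker(\mathrm{id}-\hat\delta_{\ast})=0$, i.e.\ $\partial_{\delta}=0$. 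Hence your $\phi$ vanishes identically in the Rokhlin case and cannot coincide with $\pm\,\mathrm{CT}_{\beta}$ unless $K_{\ast}(\KD)=0$. The proposed factorisation of the Connes--Thom isomorphism is where the error enters: the ``residual'' step is a crossed product by the compact group $S^1$, which is not governed by a Connes--Thom isomorphism and is not computed by the PV boundary of the dual $\KZ$-action, so the two arrows you want to match with $\iota_{\ast}$ and $\partial_{\delta}$ do not exist in the form you describe.

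The repair is essentially the paper's argument: define $\phi$ as the composite of the canonical map $K_{\ast}(\KD)\to\varinjlim\,(K_{\ast}(\KD),[\Hil_{\alpha}])$ with Schafhauser's isomorphism onto $K_{\ast}(\KD^{\#})$. Part (1) then only needs naturality of the direct-limit construction, and for part (2) one observes that when $\alpha$ is the restriction of a Rokhlin $\KR$-action the class $[\Hil_{\alpha}]$ acts invertibly on $K_{\ast}(\KD)$, the direct limit over an invertible map is attained at the first stage, and the identification with $K_{\ast}(\CrPr{\KD}{\KR}{\beta})$ via induction in stages carries the degree shift of the Connes--Thom isomorphism. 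Your subsidiary technical points (injectivity of the left action of $B$ on $F$, applicability of Katsura's form of the six-term sequence) are reasonable, but they do not rescue the composite $\iota_{\ast}\circ\partial_{\delta}$.
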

\begin{proof}
\leavevmode
\begin{enumerate}
\item By Theorem (A) of Ref.\ \cite{Schafhauser}, there is a natural isomorphism
$$
\varinjlim (K_{\ast}(\KD), [\KH_{\alpha}]) \simeq K_{\ast}(\CrPr{\KO_{\KD}(\Hil_{\alpha})}{S^1}{\gamma}) 
$$
which is an isomorphism on the $K-$groups which 
intertwines the natural automorphism $[\hat{\gamma}]$ on the $K-$theory of the crossed product by the natural dual action $\hat{\gamma}$ on the crossed product with the automorphism $[\KH_{\alpha}]$ on $K_{\ast}(\KD).$

We can compose this with the natural morphism $K_{\ast}(\KD) \to \varinjlim (K_{\ast}(\KD), [\KH_{\alpha}])$ to
get the required map 
$
\phi: K_{\ast}(\KD) \to  K_{\ast}(\CrPr{\KO_{\KD}(\Hil_{\alpha})}{S^1}{\gamma}) 
$
\item It is not clear whether the map $\phi$ in the previous item is an isomorphism. 
However, since the construction in Thm.\ (\ref{TDAll}) reduces to the
crossed product by induction by stages when the original $S^1-$action on
$\Prim(\KD)$ lifts to a Rokhlin action this map will be a degree reversing isomorphism 
in this case due to the Connes-Thom Isomorphism.
\end{enumerate}
\end{proof}

The $K-$theory of the $C^{\ast}-$algebra $\KD$ corresponds 
-see discussion in item (3) of Subsection\ \ref{SubSecNCGFluxB} below-to a higher twisted $K-$theory of the space 
$X = \Prim(\KD).$ We argue in that Subsection that the higher twisted $K-$theory associated
to $\KD$ above is the $D-$brane charge associated to $D-$branes propagating on
the background $X$ with background fluxes present.

When $\KD^{\#}$ is also a section algebra of a $C^{\ast}-$bundle with
fiber a strongly self-absorbing $C^{\ast}-$algebra then we interpret
$K_{\bullet}(\KD^{\#})$ as the higher twisted $K-$theory as the 
$D-$brane charge associated to $D-$branes propagating on the background
$X^{\#}$ with dual background fluxes present.

We interpret the above isomorphism in $K-$theory (for purely infinite fibers) 
and the above map in $K-$theory (for stably finite fibers) as a mapping of
$D-$brane charges from the original theory to the dual under a 'T-duality like' 
transformation.

\subsection{Generalizations \label{S2Gen}}
The above formalism defines a T-dual $C^{\ast}-$algebra for any $C^{\ast}-$ bundle $\KD$ of the type
defined above. It is interesting to relate this T-dual to the {\em Axiomatic T-duality} formalism of Ref.\ \cite{Brodzki1} Sec.\ (5.3).

If the above extension of T-duality was an example of Axiomatic T-duality then,
the assignment of the T-dual $C^{\ast}-$algebra $T(\KD)\simeq \CrPr{\KO_{\KD}({\Hil}_{\alpha})}{S^1}{\gamma}$ using the above procedure should give a covariant functor $\KD \mapsto T(\KD):$ 
\begin{enumerate}
\item If the circle action on $X$ lifts to a Rokhlin $\KR-$action on $\KD,$ it is clear that the crossed product by a 
Rokhlin action of $\KR$ gives a covariant functor on the category of
locally trivial $C^{\ast}-$bundles $\KD$ with fiber of the form $\A \otimes \Cpct$ above together with
Rokhlin $\KR-$actions and $\KR-$equivariant morphisms to the diagram category of diagrams of $C^{\ast}-$algebras
of the form in Fig.\ \ref{TDiaOA}. This will be the case for $\A$ purely infinite.

\item If a Rokhlin lift does not exist as above, the procedure given above in Thm.\ (\ref{TDAll}) 
is also a covariant functor with the same target as in the previous item, provided one considers the source category the category of locally trivial
$C^{\ast}-$bundles $\KD$ with fibers of the form $\A \otimes \Cpct$ above with Rokhlin $\KZ-$actions and $\KZ-$equivariant morphisms-this would cause the Hilbert $C^{\ast}-$module
$\Hil_{\alpha}$ and the full diamond diagram in Thm.\ (\ref{TDAll}) to map functorially. {\em Since Rokhlin $\KZ-$actions should exist 
when Rokhlin $\KR-$actions don't -by the argument in Subsec.\ (\ref{S2TDStaFin}) above- one of these two should always be possible for this class of $C^{\ast}-$algebras.}
\end{enumerate}
However, it is not clear if there would be a natural element $\alpha_{\D}$ in $\KK_{\bullet}(\D,T(\D))$ which is a 
$KK-$equivalence given by the above procedure possessing the properties required by the definition of Axiomatic T-duality.
However, we still have the map $\phi$ between the $K-$theory of $\KD$ and $T(\KD)$ given by Thm. (\ref{ThmIsoDDual}) above. 

Also, it is interesting to point out in the above, that the Hilbert $C^{\ast}-$module need not be the module $\Hil_{\alpha}$ used above.
It is clear that the above T-duality diamond diagram will remain but, it is not clear what relation it would have to String Theory.
It would be interesting to see if Hilbert $C^{\ast}-$modules apart from 
$\Hil_{\alpha}$ above could describe some known phenomenon in string theory.

\section{Mathematical Examples \label{SecMathEx}}
\subsection{Introduction}
Topological T-duality is an attempt to make a model of the T-duality symmetry of Type II
string theory using techniques from noncommutative geometry and algebraic topology.

Table (\ref{TabTD}) summarizes our conjectures for the physical interpretation of $\Bun_X(\A \otimes \Cpct)$
above for varying $\A.$ Note the case $\A \simeq \KC$ corresponds to Topological T-duality
for the Type II string theory with supersymmetry ignored. This table is based
on the calculations in the following sections.

\begin{center}
\begin{table}[h!b]
{\scriptsize
\caption{ Fiber algebras and associated T-duality}
\label{TabTD}
\begin{tabular}{|c|c|c|}
\hline
\makecell{\bf Fiber algebra $(\A) $ \\ \bf Actual Fiber is $(\A \otimes \Cpct)$} & \makecell{\bf Type of T-duality} & \makecell{\bf String Theory \\ \bf associated to T-duality }\\
\hline \hline
$\KC$ & \makecell{\bf Circle or Torus action\\ Non-zero $H-$flux \\ Supersymmetry ignored} & Type II A and Type IIB \\
\hline 
\makecell{\bf Stably Finite \\ \bf Fiber:}  &   &  \\
\hline 
$M_{p^{\infty}},$    $p=2$& \makecell{ Supersymmetric Background,\\ $RR-$flux or \\ $H-$flux present \\ {\bf Fermionic T-duality}} & \makecell{$R_{fiber} \gg l_s$ \\ Type II and \\ Type II* theories} \\
\hline
$\JZ$ & \makecell{String Theories with\\ Compactified Timelike directions \\ i.e. Closed Timelike Loops \\ {\bf Timelike T-duality}} & \makecell{Background with \\ nonstandard signature 
\\$(m,n)$ with $m > 1$} \\
\hline
$\KW$ & Unknown at present &  \\
\hline
{\bf Purely Infinite Fiber:} & &\\
\hline
$\KO_2$ & \makecell{Non-anti-commutative \\ Superspace \\ $N=1/2$ SUSY \\ {\bf Fermionic T-duality}} & \makecell{\tiny $ \theta_a \theta_b  + \theta_b \theta_a = \epsilon_{ab}$ \\
Type II theory\\ on NAC superspace } \\
\hline
$\KO_{\infty}$ & None at present&\makecell{$R_{fiber} \ll l_s$ \\ Type IIA, Type IIB string theories}\\
\hline
$M_{p^{\infty}} \otimes \KO_{\infty}$ &None at present & None at present\\
\hline
\end{tabular}
}
\end{table}
\end{center}

\subsection{T-duality with $\KD$ of torsion class}
\label{TorsTD}
We now consider an interesting class of examples for which the Topological T-dual
can be calculated exactly. 

We study T-duality for $C^{\ast}-$algebras $\KD$ in $\Bun_X(\A \otimes \Cpct)$ with $\A$
strongly self-absorbing and $X$ a finite $CW-$complex. In addition, we require that the characteristic
class of $\KD,$ $\delta(\KD)$ -defined in Thm. (4.5) of Ref.\ \cite{DP1}- be torsion, 
that is, the characteristic class lie in $\Tor(\bar{E}^1_D(X))$
-see Ref.\ \cite{DP2}, Thm. (2.8) and Thm. (2.10). 

By Ref.\ (\cite{DP2}), this is equivalent
to requiring that each of the rational characteristic classes $\delta_k(\KD)$ vanish and
also requiring that there exist $n >0$  such that $\KD^{\otimes n} = C(X) \otimes \A \otimes \Cpct.$

In addition, by Thm.\ (2.8)(iii) in Ref.\ \cite{DP2}, $\KD$ is isomorphic to the stabilization
of a unital, locally trivial continuous field of $C^{\ast}-$algebras $\B$ over $X$ with fiber $M_m(A)$ for some
$m.$ The $C^{\ast}-$algebra $\B$ is not unique but only unique up to the equivalence given in
Thm.\ (2.9) of Ref.\ \cite{DP2}.

We consider a spacetime whose underlying topological space is homeomorphic to $\KR^n \times X$ 
with $X$ a finite $CW-$complex.
Let $X$ be a principal circle bundle over a base space $W \simeq X/S^1.$
Suppose we were trying to calculate the T-dual of $X$ with background fluxes whose characteristic
classes were torsion. Then, by the above, we should consider $C^{\ast}-$algebras $\KD$ in
$\Bun_X(\A \otimes \Cpct)$ with torsion characteristic class.
Note that $\KD \simeq \B \otimes \Cpct$ so $\Prim(\KD)$ and $\Prim(\B)$ 
are homeomorphic. Also, $K-$theory doesn't change on stabilization, so
$K_{\ast}(\KD) \simeq K_{\ast}(\B).$ 

We may thus get some idea about the T-dual of $\KD$
by considering crossed products of $\B$ by suitable $\KR-$actions and stabilizing them.
This gives us a large number of examples since $\B$ is {\em unital} unlike $\KD$.
By the above the T-dual $C^{\ast}-$algebra 
should be given by the crossed product by the lift of the circle action on $X$ to a $\KR-$action when
$\A$ is purely infinite. We now concentrate on the case when $\A$ is stably finite.

For Rokhlin $\KZ-$actions on $\B,$ we use Cor.\ (5.14) and Cor.\ (5.16) of Ref.\ \cite{Phil}.

We now show that Rokhlin circle actions on $\B$ induce Rokhlin circle actions on $\KD.$
\begin{lemma}
\leavevmode
Let $X$ be a finite $CW-$complex.
Suppose $\B,\B_1,\B_2$ are section algebras of locally trivial $C^{\ast}-$bundles over $X$ with
fibers $M_n(\A), M_{n_i}(\A), i=1,2$  such that $\B \otimes \Cpct \simeq \KD$ with $\KD$ of
torsion characteristic class and similarly for $\B_1,\B_2.$ 
\label{LemTorsionDCirc}
\begin{enumerate}
\item Every Rokhlin action of a circle on $\B_1$ induces a Rokhlin action of the circle on $\KD_1.$
\item If $\B \simeq \B_1 \otimes \B_2$ and $\alpha_1, \alpha_2$ are Rokhlin circle actions on $\B_1$ 
and $\B_2$ respectively, then $\alpha_1 \otimes \alpha_2$ is a Rokhlin circle action on $\B.$
\item If $\A \otimes \KO_2 \simeq \KO_2,$ then the set of Rokhlin circle actions on $\B$ are a
dense $G_\delta-$set of the space of circle actions on $\B.$
\end{enumerate}
\end{lemma}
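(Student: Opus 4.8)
\emph{Overview.} All three parts reduce to one mechanism: the Rokhlin property of a circle action $\beta$ on a separable $C^{\ast}$-algebra $A$ is detected inside the (corrected) central sequence algebra $F_\infty(A)$, namely $\beta$ is Rokhlin precisely when there is a unitary $u \in F_\infty(A)$ with $\bar\beta_\zeta(u) = \zeta u$ for every $\zeta \in S^1$ — equivalently, an approximately central sequence of ``Rokhlin unitaries'' in $\mathcal M(A)$ — and this picture is functorial for equivariant, suitably non-degenerate $\ast$-homomorphisms; I would use the non-unital formulation of \cite{S1, G3}, which applies to all the separable algebras $\B \otimes \Cpct$ here, while $\B, \B_1, \B_2$ themselves are unital (their fibres $M_n(\A)$ are). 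The plan is to set up this machinery first and then treat (1)--(3) as transfer statements.

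\emph{Parts (1) and (2).} For (1): the circle action induced on $\KD_1 = \B_1 \otimes \Cpct$ is $\alpha_1 \otimes \id_{\Cpct}$, and since $\Cpct$ is nuclear, simple and $\sigma$-unital there is a canonical isomorphism $F_\infty(\B_1 \otimes \Cpct) \cong F_\infty(\B_1)$ (stabilisation invariance of the central sequence algebra) intertwining $(\alpha_1 \otimes \id)_\infty$ with $(\alpha_1)_\infty$; transporting the Rokhlin unitary of $\alpha_1$ along it produces one for $\alpha_1 \otimes \id$, so the latter is Rokhlin. For (2): $\B \cong \B_1 \otimes_{C(X)} \B_2$ has fibre $M_{n_1}(\A) \otimes M_{n_2}(\A) \cong M_{n_1 n_2}(\A)$ (using $\A \otimes \A \cong \A$), and the inclusion $F_\infty(\B_1) \to F_\infty(\B)$, $x \mapsto x \otimes 1$, is unital and equivariant and intertwines $(\alpha_1)_\infty$ with the restriction of $(\alpha_1 \otimes \alpha_2)_\infty$; pushing the Rokhlin unitary of $\alpha_1$ forward gives a Rokhlin unitary for $\alpha_1 \otimes \alpha_2$. (Only one of the two factors need be Rokhlin, so the stated hypothesis is stronger than necessary.)

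\emph{Part (3).} The $G_\delta$ part is automatic: in the point-norm topology on $\operatorname{Act}(S^1, \B)$, ``$\beta$ is Rokhlin'' is the countable intersection, over finite $F \subseteq \B$ and $m \in \KN$, of the conditions ``there is a unitary $v \in \mathcal M(\B)$ with $\max_{a \in F}\|[v,a]\| < 1/m$ and $\max_{\zeta}\|\beta_\zeta(v) - \zeta v\| < 1/m$'', each of which is point-norm open, and once density is known a routine reindexing presents $\mathcal R$ as a countable intersection of dense open sets. For density I would exploit $\A \otimes \KO_2 \simeq \KO_2$ together with the torsion hypothesis on $\delta(\KD)$: then $\KD \otimes \KO_2$ has fibre $M_n(\A \otimes \KO_2) \otimes \Cpct \cong \KO_2 \otimes \Cpct$, and since $E^1_{\KO_2}(X) = 0$ by Cor.\ (4.3) of \cite{Dadarlat} (as $K_0(\KO_2) = 0$), every $\KO_2 \otimes \Cpct$-bundle over the finite $CW$-complex $X$ is trivial, so $\KD \otimes \KO_2 \cong C(X) \otimes \KO_2 \otimes \Cpct$ is $\KO_2$-absorbing. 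I would then invoke the genericity of Rokhlin circle actions on (stabilised) $\KO_2$-absorbing separable $C^{\ast}$-algebras from \cite{G2, G3} to get density of $\mathcal R$ in $\operatorname{Act}(S^1, \KD \otimes \KO_2)$, and finally descend to $\B$: by part (1) and stabilisation invariance of $F_\infty$ the Rokhlin property is unchanged under $-\otimes\Cpct$, and by the uniqueness of unital models (Thm.\ (2.9) of \cite{DP2}) a point-norm dense family of circle actions on $\B$ can be matched with Rokhlin ones, transporting density back to $\operatorname{Act}(S^1, \B)$.

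\emph{Main obstacle.} Parts (1) and (2) are formal once the central-sequence description of the Rokhlin property is in place. The real work, and the step I expect to be hardest, is the density claim in (3): fitting the precise hypotheses of the genericity theorem of \cite{G2, G3} to the situation at hand and, above all, moving genericity across the stabilisation $\B \leftrightarrow \B \otimes \Cpct$ and through the $\KO_2$-absorption without assuming $\B$ itself is $\KO_2$-absorbing (which it is only when $\A \cong \KO_2$) — this descent is where the continuous-field classification of \cite{DP1, DP2} must be used with care.
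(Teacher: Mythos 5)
Parts (1) and (2) of your proposal are correct and are, underneath the central-sequence packaging, the same permanence facts the paper invokes: for (1) the paper cites Remark (2.8) of Ref.\ \cite{G3} (tensoring a Rokhlin action with an arbitrary action on $\Cpct$ stays Rokhlin), and for (2) it cites Prop.\ (2.5) of Ref.\ \cite{G0} for unital algebras; your observation that only one tensor factor need be Rokhlin is exactly how that proposition works. The only cosmetic difference is in (1): the paper takes the induced action on $\B_1\otimes\Cpct$ to be $\alpha_1\otimes\Ad(\lambda)$ with $\lambda$ the regular representation of $S^1$ on $L^2(S^1)$, rather than your $\alpha_1\otimes\id_{\Cpct}$; both are Rokhlin for the same reason.

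Part (3) is where your proposal has a genuine gap, and it is the one you yourself flagged. The $G_\delta$ step is fine, but the density argument does not close: proving that Rokhlin circle actions are generic on $\KD\otimes\KO_2$ (or $\B\otimes\KO_2$) gives no information about $\operatorname{Act}(S^1,\B)$. A generic action on $\B\otimes\KO_2$ need not be of product form, and there is no continuous map between the two action spaces along which density could be transported; Thm.\ (2.9) of Ref.\ \cite{DP2} classifies the unital models $\B$ up to an equivalence of continuous fields but supplies no such map either, so the final ``descend to $\B$'' step would fail. The paper avoids the issue entirely by applying Cor.\ (2.22)(i) of Ref.\ \cite{G2} \emph{directly to} $\B$: that corollary requires $\B$ to absorb a strongly self-absorbing algebra admitting a Rokhlin circle action, and the hypothesis ``$\A\otimes\KO_2\simeq\KO_2$'' is used to place $\B$ (a unital separable $C(X)$-algebra over a finite complex with fibre $M_n(\A)$) in the $\KO_2$-absorbing situation, after which genericity on $\B$ is a one-line citation. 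You are right that, read literally, $\A\otimes\KO_2\simeq\KO_2$ holds for every strongly self-absorbing $\A$ by Kirchberg's theorem, and that $\B$ is honestly $\KO_2$-absorbing only when $\A\simeq\KO_2$; but under that weaker literal reading your route does not establish density, and the intended (stronger) reading is the one under which the paper's direct citation works. To repair your argument you would need to either assume $\B$ itself is $\D$-absorbing for a suitable $\D$ with a Rokhlin circle action, or produce an actual transfer mechanism between the two action spaces — neither is present in the proposal.
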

\begin{proof}
\leavevmode
By assumption $\KD, \KD_i \in \Bun_X(\A \otimes \Cpct)$
and the characteristic classes of $\KD,\KD_i, i=1,2$ are torsion.
\begin{enumerate}
\item By the above, $\KD \simeq \B \otimes \Cpct.$ Also, $\B$ is $\sigma-$unital since it is unital.
Let $\alpha$ be a Rokhlin action of the circle on $\B.$ 
Let $\xi \mapsto \gamma_\xi$ be the translation action of $S^1$ on $L^2(S^1)$ 
It can be proved that the translation action lifts to a continuous map $\lambda: S^1 \to \U(L^2(S^1))$
such that $\gamma_\xi(a) = \Ad(\lambda(\xi))(a), a \in L^2(S^1).$
(for example, see the book by Raeburn and Williams Ref.\ \cite{RaeWill} ).
Also it is easy to see that $\alpha \otimes \Ad(\lambda)$ is an action of the circle on $\A \otimes \Cpct$
The action $\alpha \otimes \Ad(\lambda)$ is Rokhlin by Remark (2.8) of Ref.\ \cite{G3}-
also see remarks before Def. (2.2) in the same paper. 
\item This follows from Prop.\ (2.5) of Ref.\ \cite{G0}, since $\B_i, i=1,2,3$ are unital.
\item Since $\B$ is separable and unital, we use the definition Def.\ (2.14) for the space of circle actions on $\B.$
Since the circle is compact, by Cor.\ (2.22) (i) of Ref.\cite{G2}, the set of Rokhlin circle actions on $\B$ form a
dense $G_{\delta}-$set of the space of circle actions on $\A \otimes \Cpct.$
\end{enumerate}
\end{proof}

We can also use the unitality of $\B$ to find enough $\KZ-$actions on $\B.$
\begin{lemma}
Let $X$ be a finite $CW-$complex. Let $\B$ be the section algebra of a $C^{\ast}-$bundle
over $X$ with fiber $M_n(\A)$ with $\A$ a strongly self-absorbing $C^{\ast}-$algebra.
\label{LemTorsionDZ}
\leavevmode
\begin{enumerate}
\item For every $\B$ above,  automorphisms of $\B$ of Rokhlin dimension $\leq 1$ are generic.
\item If $\A$ is UHF-stable for a UHF-algebra of infinite type, then Rokhlin automorphisms of $\B$ are generic.
\end{enumerate}
\end{lemma}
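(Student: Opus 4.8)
The plan is to follow the proof of Lemma~\ref{LemZActD} almost verbatim, the only change being that the stabilized fiber $\A \otimes \Cpct$ and the section algebra $\KD$ appearing there are replaced by the matrix fiber $M_n(\A)$ and the \emph{unital} section algebra $\B$ here. First I would record the structural hypotheses needed to invoke the genericity theorems of Ref.\ \cite{SWZ}: since $X$ is compact metrizable and the fiber $M_n(\A)$ is separable, $\B$ is separable; and since $X$ is compact and $M_n(\A)$ is unital, $\B$ is unital, in particular $\sigma$-unital.

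The substantive step is to verify that $\B$ absorbs the Jiang--Su algebra. Because $\A$ is strongly self-absorbing it is $\JZ$-absorbing, and hence the fiber satisfies $M_n(\A) \otimes \JZ \simeq M_n(\A \otimes \JZ) \simeq M_n(\A)$; thus $\B$ is the section algebra of a locally trivial bundle all of whose fibers are $\JZ$-absorbing. Since $X$ is a finite $CW-$complex it is finite-dimensional, so the permanence theorem for $\D$-absorbing $C(X)-$algebras over finite-dimensional base spaces (Hirshberg--R{\o}rdam--Winter) gives $\B \otimes \JZ \simeq \B$. With $\B$ separable, unital and $\JZ$-absorbing, part (1) is then immediate from Thm.\ (10.14) and Remark (10.15) of Ref.\ \cite{SWZ}: $\KZ$ is a finitely generated, countable, discrete, residually finite group, so $\KZ-$actions of Rokhlin dimension $\leq 1$ are generic in the set of all $\KZ-$actions on $\B$ in the sense of Def.\ (10.1) and Remark (10.2) of Ref.\ \cite{SWZ}.

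For part (2), I would argue in the same way: if $\A$ absorbs a UHF algebra $Q$ of infinite type then $M_n(\A) \otimes Q \simeq M_n(\A \otimes Q) \simeq M_n(\A)$, so every fiber of $\B$ is $Q$-absorbing, and by the finite-dimensionality of $X$ together with the same permanence result $\B$ is $Q$-absorbing. Then Thm.\ (10.10) and Remark (10.15) of Ref.\ \cite{SWZ} yield that automorphisms of Rokhlin dimension $0$ --- i.e.\ Rokhlin automorphisms --- are generic in the set of $\KZ-$actions on $\B$, which is exactly the assertion of part (2).

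I expect the main obstacle to be the absorption step: one must check that the permanence theorem for strongly self-absorbing algebras applies in the form needed here, namely to section algebras of locally trivial bundles with constant strongly-self-absorbing-related fiber over a finite $CW-$complex, rather than merely to abstract $C_0(X)-$algebras. Since the base is finite-dimensional and the bundle is locally trivial, this is precisely the situation handled in the literature --- and it is the same fact tacitly used for $\KD$ in the proof of Lemma~\ref{LemZActD} --- so once it is in place, the remainder of the argument is a direct application of the genericity results of Ref.\ \cite{SWZ}.
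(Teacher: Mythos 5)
Your proposal is correct and follows essentially the same route as the paper: both establish that $\B$ is unital, separable and $\JZ$-absorbing (resp.\ UHF-absorbing) and then invoke the genericity theorems of Ref.\ \cite{SWZ}. The only difference is that you justify the absorption step explicitly via the Hirshberg--R{\o}rdam--Winter permanence theorem for finite-dimensional base spaces, whereas the paper simply asserts that $\B$ inherits absorption from its fibers; your version is the more careful one.
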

\begin{proof}
\leavevmode
\begin{enumerate}
\item Since $\A$ is strongly self-absorbing, $\A$ is always $\JZ-$absorbing, hence, so is $\B.$ 
Also, by the above, $\B$ is unital, separable and $\JZ-$absorbing.
By Thm.\ (3.4) of Ref.\ \cite{SWZ}, automorphisms of $\B$ of Rokhlin dimension $\leq 1$ are generic in the set of
all automorphisms of $\B$ in the sense of Ref.\ \cite{SWZ}.
\item Suppose $\A$ was UHF-stable for a UHF-algebra of infinite type, then so is $\B$. 
By the previous part, $\B$ is unital, separable and $\JZ-$absorbing as well.
Hence, by Thm.\ \cite{SWZ}, Rokhlin automorphisms are generic in the set of all automorphisms
on $\B$ in the sense of Ref.\ \cite{SWZ}. 
\end{enumerate}
\end{proof}

We now try to find a generalized T-duality diagram in the sense of Sec.\ (\ref{TDAll}) above.
If we are given a $C^{\ast}-$ algebra $\KD$ with torsion characteristic class, we pick
a $\B$ above such that $\B \otimes \Cpct \simeq \KD.$ Note that $\Prim(\B) \simeq \Prim(\KD)$ as
discussed above. Note that $\B$ is not unique as shown in Ref.\ \cite{DP2} however, every choice
of $\B$ has the same spectrum as $\KD.$ The only thing which changes is the choice of fiber algebra
-each choice of $\B$ corresponds to a fiber algebra of the form $M_n(\A)$ with differing values of
$n.$ This should not affect anything since we are only interested in the result obtained by stabilizing
$\B.$

We restrict our choice of group actions on $\B$ to tensor product actions of the type $\alpha \otimes \Ad$
where $\alpha$ is a Rokhlin group action on $\B.$ It is always possible to pick such a $\alpha$
by Lemma (\ref{LemTorsionDCirc}) above. By the same Lemma,
this gives a natural Rokhlin circle action $\alpha \otimes \Ad(\lambda)$ on $\KD.$
With this choice of actions, we can restrict ourselves to examining T-duality diagrams of the
form in Eq.\ (\ref{TDiaTors}) below.

We pick a Rokhlin $\KZ-$action $\gamma$ on $\B$ or a Rokhlin $\KZ-$action of dimension at most
by Lemma (\ref{LemTorsionDZ}). We pick the Rokhlin $\KZ-$action on $\B$ such that it commutes
with $\alpha$ above. Since Rokhlin $\KZ-$actions of dimension at most one are dense in the set of
all Rokhlin actions on $\B,$ this should be possible. 
This induces a natural $\KZ-$action $\theta^{\#}$ on $\B^{\alpha}$ 
which commutes with $\theta.$

Let $\B^{\alpha}$ be the set of elements of $\B$ fixed by $\alpha.$
By Thm.\ (2.3) of Ref.\ \cite{G4}, since $\B$ is unital,
there is an automorphism $\theta \in \B^{\alpha}$ such that
$\B$ is isomorphic to $\CrPr{\B^{\alpha}}{\KZ}{\theta}.$
Let $\theta'$ be any other automorphism of $\B^{\alpha}$ such that $(\CrPr{\B^{\alpha}}{\KZ}{\theta'}, \hat{\theta'})$
is conjugate to $(\KD, \alpha),$ then by Thm.\ (2.3)(2) of Ref.\ \cite{G4}, 
$\theta$ is unitarily equivalent to $\theta'$ by a unitary in $\B^{\alpha}.$
Also, by Cor.\ (2.5) of Ref.\ \cite{G4}, $\CrPr{\B}{S^1}{\alpha} \simeq \B^{\alpha} \otimes \Cpct.$

Thus the Topological T-duality diagram in Thm.\ (\ref{TDAll}) is now modified due to the above.
In particular, $\KD$ is now isomorphic to $\B \otimes \Cpct$ and, by the above, isomorphic to $(\CrPr{\B^{\alpha}}{\KZ}{\theta'}) \otimes \Cpct.$
Also, we must have that $\B^{\alpha}$ is isomorphic to the fixed point $C^{\ast}-$algebra of $\B^{\#}$ by
the dual automorphism $\alpha^{\#}$  for consistency.
The T-duality diagram now reduces to the two commuting $\KZ-$actions $\theta$ and $\theta^{\#}$ on $\B^{\alpha}.$ 
The T-duality swaps the two $\KZ-$actions.

\begin{equation}
\begin{tikzcd}
   & \C \simeq \CrPr{\B}{\KZ}{\gamma} \arrow[dl] \arrow[dr]  &  \\
(\CrPr{\B^\alpha}{\KZ}{\theta}) \simeq \B \arrow[dr]  &   &  \B^{\#}  \simeq (\CrPr{\B^\alpha}{\KZ}{\theta^{\#}}) \arrow[dl] \\
                  &  \B^{\alpha} \simeq (\CrPr{\B}{S^1}{\alpha})  & \label{TDiaTors}
\end{tikzcd}
\end{equation}

The data which characterizes the original 
noncommmutative space should be the characteristic class of $\KD \simeq \B \otimes \Cpct$ 
together with the automorphism $\theta$ of $\B^{\alpha}.$
The data which characterizes the T-dual space should be the characteristic class of 
$\KD^{\#} \simeq (\CrPr{\B^{\alpha}}{\KZ}{\theta^{\#}}) \otimes \Cpct$ 
together with the dual automorphism $\theta^{\#}$ on $\B^{\alpha}.$

Note that $\B, \B^{\#}$ are unital and $K_{\ast}(\B)$ is finitely generated since $X$ is finite.
Now, $\B^{\#} \simeq \CrPr{\B^{\alpha}}{\KZ}{\theta^{\#}}$ so $K_{\ast}(\B^{\#})$ is finitely
generated as well.

By Thm.\ (3.3)(2) of Ref.\ \cite{G4}, 
$$
K_0(\B) \simeq K_1(\B) \simeq K_0(\B^{\alpha}) \oplus K_1(\B^{\alpha})
$$
similarly
$$
K_0(\B^{\#}) \simeq K_1(\B^{\#}) \simeq K_0(\B^{\alpha}) \oplus K_1(\B^{\alpha})
$$
Thus, the $K-$theory of $\B$ is isomorphic to the $K-$theory of $\B^{\#}$ and the isomorphism 
may be taken to be degree reversing (since $K_{\ast}(\B)$ and $K_{\ast}(\B^{\#})$are isomorphic). 

Thus, the same holds for $\KD \simeq \B \otimes \Cpct$ since stabilization preserves $K-$theory.
  
\subsection{Fermionic T-duality}
\label{CStarEx}
In this subsection we concentrate on $\A \simeq M_{2^{\infty}}-$this corresponds
to a fermionic T-duality on the string theory action or to fermionic T-duality composed with bosonic T-duality transformation
of the string theory action. 
A few examples using the proposed T-duality procedure
are calculated below and compared with the crossed product by a Rokhlin 
action of $\KR$ whenever possible. Uniqueness of the Rokhlin action and its lifting properties are discussed. Since it is difficult to find Rokhlin lifts of
circle actions on $X,$ we do not have too many examples, but the examples below are interesting.

\begin{enumerate}
\item{\underline{\bf $M_{2^{\infty}}-$bundles over a Circle:}}
\label{ExM2InfS1}
\begin{itemize}
\item We consider a supersymmetric background $\KR^{(8,1)} \times S^1.$ 
We place a graviphoton flux on the background and perform a fermionic
T-duality. The T-dual should be non-anti-commutative superspace
over a circle. 

\item For this example let $X$ be $S^1$ viewed a trivial circle bundle over a point
i.e.,$X \simeq S^1 \times \{ \ast \} \to \{ \ast \}.$
Suppose we associate to $\{ \ast \}$ the $C^{\ast}-$algebra $M_{2^{\infty}}.$

Let $\KD \in \Bun_X(\A \otimes \Cpct)$ where $\A$ is the CAR algebra $M_{2^{\infty}}.$

It can be argued that $\KD$ is an induced algebra of the form 
$\Ind_{KZ}^{\KR}(M_{2^{\infty}} \otimes \Cpct)$
for some automorphism $\alpha$ of $M_{2^{\infty}} \otimes \Cpct.$

In Ref.\ \cite{DP1} Dadarlat and Pennig show that
the characteristic class of such bundles lies in a sum of cohomology
groups of the form $H^{\ast}(X,R^{\times}_{+})$ where
$R = K_0(M_{2^{\infty}} \otimes \Cpct).$

However, for $M_{2^{\infty}}$, it can be checked (see Ref.\ \cite{P1}) 
that $R^{\times}_{+}$ above is $\KZ$, so
the characteristic class of the CAR bundle
lies in $H^1(S^1,\KZ) \simeq \KZ.$

%
%
%

\item Now, pick a spectrum-fixing Rokhlin $\KZ-$action $\alpha$ on $\KD$ 
{\bf commuting with the translation action on the circle.}
Since $\alpha$ is a spectrum-fixing $\KZ-$action on $\KD,$ it
will act only on the fiber $M_{2^{\infty}} \otimes \Cpct$ of $\KD.$

\item The crossed product $\B \simeq \CrPr{\KD}{\KZ}{\alpha}$ is a 
$C(S^1)-$bundle. $\B$ is also $C^{\ast}-$bundle over $\Prim(\B)$
with fiber $\CrPr{(M_{2^{\infty }}\otimes \Cpct)}{\KZ}{\alpha}.$

\item Every trace scaling automorphism $\alpha$ of $M_{2^{\infty}} \otimes \Cpct$
possesses a trace scaling factor $s(\alpha)=p/q$ which is a positive rational
number (see \cite{EEK1}, pg. 77). Each of these automorphisms is Rokhlin 
-see Thm.\ (2.8) of Ref.\ \cite{EvKish1}. 
Also, by Ref.\ \cite{EEK1}, Thm.\ (7), automorphisms with the 
same trace scaling factor are outer conjugate.

Let $\alpha$ to be an automorphism with trace
scaling factor $2.$
By tensoring $\alpha$ with itself (and similarly for $\alpha^{-1}$) 
and using the fact that $M_{2^{\infty}} \otimes \Cpct $ is isomorphic to its tensor
product with itself, we can obtain automorphisms of trace scaling
factors $2^k$ for all $k \in \KZ.$

By Remark on pg.\ (84) of Ref.\ \cite{EEK1}, if $\alpha$ is an automorphism
of $M_{(pq)^{\infty}} \otimes \Cpct$ with trace $s(\alpha) = p/q,$
the crossed product $\CrPr{\C}{\KZ}{\alpha}$ is isomorphic to
$O_{|p-q| + 1} \otimes \Cpct.$

Of the above automorphisms, it is clear by examining the positive integer solutions to 
$pq = 2^m, |p - q| =1$ that only the conjugacy classes of Rokhlin $\KZ-$actions
$\alpha$ on $M_{2^{\infty}} \otimes \Cpct$-those with trace 
$s^{\ast}(\alpha) = 2/1$ and those with
trace $s^{\ast}(\alpha) = 1/2-$give $\KO_2 \otimes \Cpct$ as a crossed product.

\item We require the crossed product to have strongly self-absorbing fiber and by the
above, this fiber can only be $\KO_2 \otimes \Cpct.$ So, changing
the characteristic class of $\KZ-$action can only change the trace of $\alpha$
from $2$ to $1/2$ or vice versa. 
 
\item Thus, the correspondence space is a $C^{\ast}-$bundle over 
$\Prim(\KD) \simeq S^1$ with fiber $\KO_2 \otimes \Cpct.$
Due to the local triviality of $\D$ and the above argument,
this bundle has to be a locally trivial bundle of Cuntz algebras $\KO_2.$
In Ref.\ \cite{DP1} Dadarlat and Pennig show that
the characteristic class of such bundles lies in a sum of cohomology
groups of the form $H^{\ast}(X,R^{\times}_{+})$ where
$R = K_0(\KO_2 \otimes \Cpct).$
However, for $\KO_2$, it can be checked that $R$ above is trivial, so
the characteristic class of Dadarlat and Pennig is trivial. By Cor. (4.3)
and Cor. (4.6) of Ref.\ \cite{DP1} the above $C^{\ast}$-bundle
must be trivial as well. Changing the characteristic class of the orginal
$M_{2^{\infty}}-$bundle on $S^1$ won't change the characteristic class of
the $\KO_2 \otimes \Cpct-$bundle above, since it must always be trivial.

\item  The $\KZ-$action on $\KD$ is spectrum-fixing, and, as a result it acts only on
the fiber $M_{2^{\infty}}$ of $\KD.$ Since it commutes with the translation action on $S^1,$
the crossed product $\CrPr{\KD}{\KZ}{\alpha}$ should then be
$C(S^1) \otimes \KO_2 \otimes \Cpct.$

\item Thus the correspondence space is a trivial $\KO_2 \otimes \Cpct-$bundle over
a topological space $E \simeq S^1.$ The space $E$ carries a circle action which comes from the original circle
action on the original $S^1$ which only translates $S^1.$ (this can be seen by working out the definition of the crossed product, viewing
the algebra $\KD$ as an induced algebra of the form $\KT_{\alpha}(M_{2^{\infty}} \otimes \Cpct).$

\item The quotient of the $C^{\ast}-$bundle by this circle action is $\KO_2 \otimes \Cpct.$
\item Geometrically, the $C^{\ast}-$algebra $\KO_2$ may be viewed as a noncommutative circle bundle over a CAR algebra.
We could view this as the noncommutative geometry of the T-dual.
\item Thus, we claim that the fermionic T-dual of a supersymmetric theory over a circle is non-anti-commutative superspace
over a point and the T-dual circle action is the rotation of the generators of non-anti-commutative superspace by a phase.
This should be matched with the Buscher's rules for fermionic T-duality: It can be seen that the physical T-dual is a circle
with the fermionic directions `twisted' by the T-dual graviphoton flux into NAC superspace.
\end{itemize}

We may also argue by calculating the crossed product by $\KR:$
We note that by Lemma (3.1) and Lemma (3.2) of Ref.\ \cite{P1}, $\KD$ is isomorphic
to the induced $C^{\ast}-$algebra $\KT_{\alpha}(M_{2^{\infty}} \otimes \Cpct)$ defined there.
By Lemma (5.1) and Lemma (5.2) of \cite{P1},
the crossed product by the lift of the translation action on $\Prim(\KD)$ to $\KR$
is $\KO_2 \otimes \Cpct$ which agrees with the above. 

\begin{theorem}
If the T-dual by a crossed product by a $\KR-$action is $\KO_2 \otimes \Cpct,$ 
then the dual action on $\KO_2 \otimes \Cpct$ cannot be Rokhlin.
\end{theorem}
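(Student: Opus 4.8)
The plan is to argue by contradiction, combining Takai duality with the standard fact that a Rokhlin flow on a simple $C^{\ast}-$algebra produces a simple crossed product. Morally, the point is that $\KO_2 \otimes \Cpct$ is simple, while the double crossed product that recovers the original algebra forces the crossed product by the dual flow to be the \emph{non}-simple algebra $\KD$ with $\Prim(\KD) \simeq S^1$; a Rokhlin flow cannot turn a simple algebra into such a non-simple one. In the situation at hand $\KD \in \Bun_{S^1}(M_{2^{\infty}} \otimes \Cpct)$, the $\KR-$action $\gamma$ on $\KD$ lifts the translation action on $\Prim(\KD) \simeq S^1$, by hypothesis $\CrPr{\KD}{\KR}{\gamma} \simeq \KO_2 \otimes \Cpct$, and ``the dual action'' in the statement is the dual flow $\alpha^{\#} = \widehat{\gamma}$ on $\KO_2 \otimes \Cpct$.

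First I would record two facts. By Imai--Takai duality, $\CrPr{(\CrPr{\KD}{\KR}{\gamma})}{\KR}{\widehat{\gamma}} \simeq \KD \otimes \Cpct \simeq \KD$, the last isomorphism because $\KD$ lies in $\Bun_{S^1}(M_{2^{\infty}} \otimes \Cpct)$ and so already absorbs $\Cpct$; hence, using the hypothesis, $\CrPr{(\KO_2 \otimes \Cpct)}{\KR}{\alpha^{\#}} \simeq \KD$. Secondly, by Thm.\ \ref{ThmDPrim} we have $\Prim(\KD) \simeq X \simeq S^1$, which is not a one-point space, so $\KD$ is not simple. Now suppose for contradiction that $\alpha^{\#}$ has the Rokhlin property (in the non-unital sense of Ref.\ \cite{S1}, since $\KO_2 \otimes \Cpct$ is non-unital). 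Since $\KO_2 \otimes \Cpct$ is simple, the Rokhlin condition --- the existence, for every $p>0$, of an approximately central sequence of unitaries $u_n$ with $\max_{|t|\leq 1}\|\alpha^{\#}_t(u_n) - e^{ipt}u_n\| \to 0$ --- forces the Connes spectrum $\Gamma(\alpha^{\#})$ to be all of $\widehat{\KR}$, and therefore (Kishimoto's theory of Rokhlin flows, Ref.\ \cite{K1}; see Ref.\ \cite{S1} for the non-unital version) $\CrPr{(\KO_2 \otimes \Cpct)}{\KR}{\alpha^{\#}}$ is simple. Comparing with the first fact, $\KD$ would be simple, contradicting $\Prim(\KD) \simeq S^1$. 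Hence $\alpha^{\#}$ is not Rokhlin.

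The step I expect to require the most care --- and hence the main obstacle --- is the implication ``Rokhlin flow on a simple separable $C^{\ast}-$algebra $\Rightarrow$ simple crossed product'' in the non-unital setting: one must verify that the central-sequence reformulation of the Rokhlin property used in Ref.\ \cite{S1} for non-unital algebras still yields $\Gamma(\alpha^{\#}) = \widehat{\KR}$, and then apply the Olesen--Pedersen criterion for simplicity of $A \rtimes_{\alpha}\KR$ to $A = \KO_2 \otimes \Cpct$. As an alternative route avoiding Connes-spectrum technology, one could instead invoke a permanence property of Rokhlin flows: $\KO_2 \otimes \Cpct$ is $\KO_2-$absorbing, so its crossed product by a Rokhlin flow would again be $\KO_2-$absorbing and hence purely infinite, whereas $\KD$ is stably finite because every one of its fibres is isomorphic to the stably finite algebra $M_{2^{\infty}} \otimes \Cpct$ (an infinite projection in $\KD$ would, evaluated at some point of $\Prim(\KD)$, produce one in a fibre); this yields the same contradiction.
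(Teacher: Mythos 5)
Your proof is correct, and it takes a genuinely different route from the one in the paper. The paper argues via \emph{uniqueness}: since $\KO_2$ is $\KO_{\infty}$-absorbing, Szabo's classification theorem forces any Rokhlin flow on $\KO_2 \otimes \Cpct$ to be cocycle conjugate to the dual of a quasi-free flow $\alpha_{\lambda}$ on $\KO_2$, and Takai duality then computes the resulting crossed product to be $\KO_2 \otimes \Cpct$ itself, which is not an element of $\Bun_{S^1}(M_{2^{\infty}} \otimes \Cpct)$. You bypass the classification entirely and use only a \emph{permanence} property: Takai duality identifies $\CrPr{(\KO_2 \otimes \Cpct)}{\KR}{\widehat{\gamma}}$ with $\KD \otimes \Cpct$, whose primitive ideal space is $S^1$, whereas a Rokhlin flow on the simple algebra $\KO_2 \otimes \Cpct$ has simple crossed product (full Connes spectrum plus Olesen--Pedersen, or the ideal-correspondence theorem for flows of finite Rokhlin dimension in Ref.\ \cite{HSWW}, which covers the non-unital setting you rightly flag as the delicate step). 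Your argument is more elementary and more robust: it shows at once that no Rokhlin flow on a simple $C^{\ast}$-algebra can produce a crossed product with nontrivial primitive spectrum, so it applies verbatim to other fiber algebras, and it does not depend on the explicit quasi-free model or on the uniqueness theorem --- indeed it sidesteps the tension in the paper's own proof between ``all Rokhlin flows on $\KO_2 \otimes \Cpct$ are cocycle conjugate'' and the two non-isomorphic crossed products $\KW \otimes \Cpct$ and $\KO_2 \otimes \Cpct$ quoted there for the two signs of $\lambda$. What the paper's route buys in exchange is extra information: it identifies what the crossed product by the unique Rokhlin flow actually is. One small caution on your alternative ending: a non-simple $\KO_2$-absorbing algebra need not contain any projection at all, so the cleaner way to finish that variant is to note that $\KO_2$-absorbing algebras admit no nonzero densely defined lower semicontinuous trace, while $\KD$ does (evaluate at a point of $S^1$ and compose with the canonical trace on $M_{2^{\infty}} \otimes \Cpct$).
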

\begin{proof} 
Suppose the dual action on $\KO_2 \otimes \Cpct$ was
Rokhlin. 

We first note that $\KO_2$ is $\KO_{\infty}-$absorbing since
$\KO_2$ is purely infinite and we may apply Kirchberg's theorem to conclude
that $\KO_2 \otimes \KO_{\infty} \simeq \KO_2.$ 

Thus by Thm. (B) of Szabo (\cite{S1}), 
there is only one Rokhlin
lift of the trivial $\KR-$action on $\ast = \Prim(\KO_2 \otimes \Cpct)$ to $\KO_2 \otimes \Cpct$ up
to cocycle conjugacy. 

Due to this, the above Rokhlin action must be cocycle conjugate to
the following Rokhlin action $\alpha_{\lambda}$ on $\KO_2 \otimes \Cpct$ 
(see Sec.\ (1) of Jacelon Ref. (\cite{J1})):

Consider a quasi-free Rokhlin $\KR-$action $\alpha_{\lambda}$ on $\KO_2.$ 
Let $s_1,s_2$ be generators of $\KO_2.$ Then, define

\begin{gather*}
\alpha_{\lambda}(s_1) = e^{ it }s_1 \\
\alpha_{\lambda}(s_2) = e^{ i \lambda t} s_2  
\end{gather*}
This action is Rokhlin if and only if $\lambda$ is irrational.
As described in Ref.\ \cite{J1}, the isomorphism class of the
crossed product depends on the sign of $\lambda.$

\begin{gather}
\CrPr{\KO_2}{\KR}{\alpha_{\lambda}} \simeq \KW \otimes \Cpct , \lambda \in \KR_{+}-\KQ \label{Eq1} \\ 
\CrPr{\KO_2}{\KR}{\alpha_{\lambda}} \simeq \KO_2 \otimes \Cpct , \lambda \in \KR_{-}-\KQ \label{Eq2}
\end{gather}
here $\KW$ is the Razak-Jacelon algebra \cite{J1}.

We take the crossed product of Eq.\ (\ref{Eq2}) above by the $\KR-$ action $\hat{\alpha}$ which is 
dual to the $\KR-$action $\alpha_{\lambda}.$
Since $\KO_2$ is unital, separable, nuclear, purely infinite and simple, and $\alpha_{\lambda}$ is Rokhlin,  
the induced dual-$\KR-$action $\hat{\alpha}$ on $\CrPr{\KO_2}{\KR}{\alpha_{\lambda}}$ is also Rokhlin (by 
Thm.\ (2.1), part (3) of Ref.\ \cite{BKR}). Thus, the action $\hat{\alpha}$ must be the unique 
(up to cocycle conjugacy) Rokhlin $\KR-$action on $\KO_2 \otimes \Cpct$ given by Szabo's Theorem above.

By Takai duality, the above crossed product is 
$\CrPr{(\KO_2 \otimes \Cpct)}{\KR}{\hat{\alpha}} \simeq \KO_2 \otimes \Cpct.$
Therefore, the action $\hat{\alpha}$ cannot give the `expected' T-dual $\KD \in \Bun_{S^1}(M_{2^{\infty}} \otimes \Cpct).$ 
Hence, there cannot be a Rokhlin $\KR-$action on $\KO_2 \otimes \Cpct$ which gives 
an element of $\Bun_{S^1}(M_{2^{\infty}} \otimes \Cpct)$ as a crossed product.
\end{proof}
The `reason' for the above seems to be that the T-dual $\KR-$action 
should have a $\KZ-$ stabilizer and the above action does not have any: Since
$\lambda$ is irrational, no value of $t$ will return the generators $s_i$ to themselves.

If $\lambda$ above was {\em rational}, the quasi-free action $\alpha_{\lambda}$ above would not be
Rokhlin. In particular, we would have by Thm.\ (3.2) of Dean (Ref.\ \cite{Dean}),
that in the case of rational $\lambda = p/q,$ the crossed product of $\KO_2$ by the quasi-free
$\KR-$action $\alpha_{p/q}$ would {\em always} be a mapping torus $C^{\ast}-$algebra of an $AF-$algebra 
$A(p,q).$

Since we are trying to T-dualize $M_{2^{\infty}}-$bundles over $S^1,$ the above facts would be consistent 
with Ex.\ (\ref{ExM2InfS1}), if the action on $\KO_2 \otimes \Cpct$
were cocycle conjugate to a tensor product of a non-Rokhlin quasi-free $\KR-$action on $\KO_2$ with the identity
$\KR-$action on $\Cpct.$ 

The above prescription should extend to $N-$torus bundles. We try to
apply the above prescription to the simplest such bundle.

\item{\underline{\bf{ {$M_{2^{\infty}}$-bundles over a space $X$ with a free circle action}}}}
\begin{itemize}
\item Let $X$ be an arbitrary supersymmetric Type II string theory background.
By Buscher's rules for fermionic T-duality (see Ref.\ \cite{MaldacenaB}),  the T-dual spacetime has the same $B-$field and metric,
but the Ramond-Ramond fluxes on the T-dual change. There is a circle action on the superspace fermionic coordinates  but
the underlying bosonic coordinates are unchanged by the circle action.
\item Hence, as argued in the previous section we would expect a duality between $C^{\ast}-$algebras with
$M_{2^{\infty}}-$fiber over $X$ and $C^{\ast}-$algebras with $\KO_2-$fiber over $X.$
\item It is difficult to analyze this problem with no further structure to $X,$ so we require that any $C^{\ast}-$algebra we associate to a space
in the problem be a $C_0(W)-$algebra where $W$ is a space over which $X$ fibers, that is we have a fibration map $q:X \to W.$

It is clear that in general this is not needed, since the circle action on the associated string theory is only in the fermionic directions so there should be no circle action on $X.$ This is different from the usual theory of $C^{\ast}-$algebraic T-duality (see Ref.\ \cite{JMRCBMS}), 
however requiring a fibration $q:X \to W$ helps us work out this example.

\item Let $\KD \in \Bun_{X}(M_{2^{\infty}} \otimes \Cpct).$ In addition to requiring $\KD$ and its dual to be a $C_0(W)-$algebra, we require
$X$ to be a principal circle bundle over $p:X \to W.$
By Buscher's rules for fermionic T-duality (see Ref.\ \cite{MaldacenaB}),  the 'T-dual' has the same topology and we will show that it can be chosen to have the
same $H-$flux.

\item We will see that with the method outlined in the previous section we obtain as a 'T-dual' of $\KD \in \Bun_{X}(M_{2^{\infty}} \otimes \Cpct)$ a $C^{\ast}-$algebra
$\KD' \in \Bun_{X}(\KO_2 \otimes \Cpct)$ which we will argue below should be associated to the fermionic T-dual space. 
\item If we were to calculate the
T-using the crossed product as normal, this crossed product would be difficult to calculate and also, it might not be possible to argue about its uniqueness. With the construction in the above section we get a unique T-duality diagram and hence a unique $C^{\ast}-$algebraic T-dual for $\KD.$

\begin{equation}
\begin{tikzcd}
&  & \C \arrow[dl, dashed] \arrow[dr,dashed] & \\
X \arrow[r, leftarrow, dashed, "\Prim"] \arrow[ddrr,"\pi"]  & \KD \arrow[dr, dashed] & & \KD^{\#} \arrow[dl, dashed] \\
&  & \B & \\ 
&  & W \arrow[u, leftarrow, dashed, "\Prim"] & 
\end{tikzcd}
\end{equation}

\item We associate the $C^{\ast}-$algebra 
$\B \simeq CT(W,\delta) \otimes_{C_0(W)} \KD'$
 where   $\KD' \in \Bun_{W}(M_{2^{\infty}} \otimes \Cpct)$ to $W,$ note that this is a $C_0(W)-$algebra.
\item We assume given a  on $\KZ-$action $\alpha \simeq \alpha_1 \otimes \alpha_2$ on $\B$ with 
$\alpha_1$ an automorphism of $CT(W, \delta)$ and $\alpha_2$ is an automorphism of $\KD'''$ which acts only on the
fiber of $\KD'''.$
\item The crossed product by the $\KZ-$action induced by $\alpha$ is 
$$\KD \simeq \CrPr{\B}{\KZ}{\alpha} \simeq (\CrPr{CT(W,\delta)}{\KZ}{\alpha_1}) \otimes_{C_0(W)} (\CrPr{\KD'}{\KZ}{\alpha_2}).$$
\item $\KD$ is a crossed product of $\B$ by the tensor product $\KZ-$action generated by $\alpha_1 \otimes \alpha_2, $ so $\Prim(\KD)$ is a fiber product of $X_1 \simeq \Prim(\CrPr{CT(W,\delta)}{\KZ}{\alpha_1})$ with 
$X_2 \simeq \Prim(\CrPr{\KD'}{\KZ}{\alpha_2})$ over $W.$

The crossed product $\CrPr{CT(W,\delta)}{\KZ}{\alpha_1}$ is isomorphic to $CT(X_1, \eta)$ for some $X_1$ with $\pi_1:X_1 \to W$ a principal circle bundle by the theory of continuous-trace algebras.

Now, $\alpha_2$ acts only on the fiber of $\KD'$, hence, by Ref.\ \cite{EEK1}, Remark (4.4),
the crossed product $\CrPr{\KD'}{\KZ}{\alpha_2}$ is a locally trivial $C^{\ast}-$bundle with spectrum $W \times S^1$ and fiber $\KO_2 \otimes \Cpct.$
By Ref.\ \cite{DP1}, there can be only one such bundle, the trivial one. Thus $\CrPr{\KD'}{\KZ}{\alpha_2}$ is isomorphic 
to $C_0(W \times S^1) \otimes \KO_2 \otimes \Cpct$ and hence its spectrum is $W \times S^1.$

Thus, the spectrum of $\KD$ is the fiber product 
$X_1 \times_W X_2 \simeq (X_1 \times_W (W \times S^1)).$

%
%
%
%

\item Suppose $\beta \simeq \beta_1 \otimes \beta_2$ was an automorphism of $\KD$ commuting with the action
$\hat{\alpha},$ 
where $\beta_1$ is an automorphism of $CT(X_1, \eta)$ above and $\beta_2$ a fiber-preserving automorphism
of $C_0(W \times S^1) \otimes \KO_2 \otimes \Cpct$ of the form $\beta_2 \simeq \id \otimes \beta_3$ where $\beta_3$ is a Rokhlin automorphism of $\KO_2 \otimes \Cpct.$

\item Note that $\KD$ is a stabilization of a unital $C^{\ast}-$algebra which is a section algebra of a $C^{\ast}-$bundle
with fiber $\KO_2$ and hence, this underlying bundle is $\KO_2-$absorbing.
By Cor.\ (2.22) of Ref.\ \cite{G2}, the set of Rokhlin circle actions are a dense, $G_{\delta}-$set in the set of
all continuous circle actions on $\KD.$  Thus we would expect many Rokhlin circle actions on this bundle. 
Any action $\alpha$ of this type induces an action of the form $\alpha \otimes \Ad(\lambda)$ on the stabilization
of this $C^{\ast}-$algebra (see proof of item (1) of Lemma\ (\ref{LemTorsionDCirc}) above, it is clear that the proof will carry over to this situation. Also,
note that the symbol $\B$ in that proof is different from the $C^{\ast}-$algebra $\B$ used here).
Thus there will be many Rokhlin circle actions on $\KD.$

%
\item Now consider, $\C \simeq \CrPr{\KD}{\KZ}{\beta}:$ By an argument similar to the above,
we find 
\begin{equation}
\Prim(\C) \simeq \Prim(\CrPr{CT(X_1,\eta)}{\KZ}{\beta_1}) \times_{W} 
\Prim( (C_0(W \times S^1)) \otimes \CrPr{\KO_2 \otimes \Cpct}{\KZ}{\beta_3})
\end{equation}

Now $\Prim(\CrPr{CT(X_1,\eta)}{\KZ}{\beta1})$ is a principal circle bundle over $X_1$ and 
$$\Prim((C_0(W \times S^1)) \otimes \CrPr{\KO_2 \otimes \Cpct}{\KZ}{\beta_3})$$ 
is by Ref.\ \cite{Katsura} 

\begin{equation}
\Prim(C_0(W \times S^1) \otimes ( \KO_2 \mbox{ fibration over} \otimes C_0(X) \otimes C(S^1)))
\end{equation}

(Here we have used Sec.\ (7.2) of Ref.\ \cite{Katsura} for the crossed product of $\KO_2 \otimes \Cpct$ by $\beta_3$.)
Hence, $\Prim(\C)$ is a nontrivial circle bundle over $X$ as expected.

\item Since $\beta$ commutes with the circle action $\hat{\alpha}$ it induces a circle action on $\C$. Quotienting
by this circle action gives the dual $C^{\ast}-$algebra.
\end{itemize}

\item{{\bf{\underline{$M_{2^{\infty}}-$bundles over the torus:}}}
\begin{itemize}
\item Let $p:\KT^N \to {\ast}$ as a trivial $N-$torus bundle. 
\item Consider a locally trivial $M_{2^{\infty}} \otimes \Cpct-$bundle ($\KD$) over a torus $\KT^N.$
\item As in the previous example, such bundles are classified by elements in
$H^{2n+1}(\KT^N, \KZ), n = 1,2,\ldots$   
\item Similar to the previous example we suppose given a $\KZ^N-$action $\alpha$ 
on $\KD$ which fixes $\Prim(\KD)$ and {\bf commutes with translation action on 
$\KD$ which is the natural lift of the torus action on $\KT^N.$}
\item Since the $\KZ^N-$action fixes $\Prim(\KD),$ it acts on the fiber $M_{2^{\infty}}$ 
of $\KD$ and also commutes with the natural translation action on $\KT^N.$
\item By a theorem of N.C. Phillips Ref.\ \cite{Phil}- see Thm.\ (5.15) and item (4) on pg. 1 
and the sections from Notation (5.7) to Lemma (5.10) on pg. 29 of Ref.\ \cite{Phil} as well-
$\KD$ is $M_{2^{\infty}}-$absorbing so Rokhlin actions of $\KZ^N$ are a dense
$G_\delta-$set in set of all actions of $\KZ^N$ on $\KD.$
Thus it should be possible to choose the $\KZ^N-$action above to be a Rokhlin $\KZ^N-$action.
 


\item In the absence of a characterization of the crossed product of a UHF algebra 
by $\KZ^N$ in the literature the correspondence space can't be determined. From the literature, it should be a
bundle of $\JZ-$stable, $UHF-$stable $C^{\ast}-$algebras over a point. 
\item There is no parametrization of cocycle conjugacy classes of $\KZ^N-$actions on $C^{\ast}-$algebras
like $\KD$ above. Hence it is not possible to determine the structure of the correspondence space
or the Topological T-dual.
\end{itemize}

%
%
%
}
\end{enumerate}

\section{Applications to T-duality in Type II String Theory \label{SecPhys}}
In this section we apply the formalism outlined above to study tree-level dualities in flux backgrounds in String Theory, i.e.,
spacetime backgrounds with sourceless Ramond-Ramond flux and $H-$flux. 
We begin by discussing the physics of such flux backgrounds in Subsec.\ (\ref{SubSecFluxCStar}). 
Then, in Subsec.\ (\ref{SubSecNCGFluxB}) we examine the noncommutative geometry of points in string backgrounds described
by $C^{\ast}-$algebras $\KD$ above and make a prediction about $D-$Brane charge. 
We apply the above formalism to tree-level dualities
in String Theory in Subsec.\ (\ref{SubSecThreeEx}).
\subsection{Flux Backgrounds and $C^{\ast}-$algebras \label{SubSecFluxCStar}}

We had described the $C^{\ast}-$algebraic formalism of Topological T-duality in Sec.\ (\ref{SecIntro})
above. We now use the example at the end of the previous Section to study certain phenomena in String Theory
which we claim are described by the crossed-product of section algebras of locally trivial bundles of strongly self-absorbing $C^{\ast}-$algebras by $\KR-$actions.

In Sec.\ \ref{SecIntro} we had mentioned that Topological T-duality describes spacetimes $X$ which are backgrounds for Type II
String Theory with a sourceless $H-$flux $H$ together with a circle action.  It is not clear whether the usual description of $D-$brane charges on string backgrounds $(X,[H])$ by twisted $K-$theory will work if other fluxes, for example Ramond-Ramond fields,
are turned on. 

It is interesting to ask if the $C^{\ast}$-algebra bundles studied in this paper (i.e., locally trivial bundles of 
self-absorbing $C^{\ast}$ algebras over a base space $X$) can be used to describe these string backgrounds.

We claim that the $C^{\ast}-$algebras
$\D$ in $\Bun_X(\A \otimes \Cpct)$ describe backgrounds of Type II string
theory with a circle action with various fluxes switched on (these are also termed flux backgrounds and were discussed in
Sec.\ (\ref{SecIntro}) above) as described below:
\begin{enumerate}
\item{{\bf $\A \simeq \KC$:}} If $\A \simeq \KC,$ the background has a sourceless $H-$flux
present and there is an equivariant gerbe with connection present
on the spacetime background and the $H-$flux is the (characteristic class of) the
curvature of the gerbe connection in $H^3(X,\KZ).$
There is a natural lift of the $S^1-$action on $X$ to a $\KR-$action on the gerbe covering the circle action on $X$ 
thus giving an equivariant gerbe on $X.$

The equivariant gerbe then defines a $S^1-$equivariant principal $PU-$bundle on the spacetime background up to
isomorphism of equivariant $PU-$bundles. The continuous-trace $C^{\ast}-$algebra on the spacetime background
is the associated bundle to this equivariant $PU-$bundle and has fiber $\Cpct$ with $PU$ acting on $\Cpct$ by $\Ad$
(see Ref.\ \cite{JMRCBMS, Kapustin} and references therein).

Thus, if $\A = \KC,$ $\D$ corresponds to backgrounds which have a sourceless $H-$flux switched on.

\item{{\bf $\A \neq \KC$:}} Suppose first that $X$ is compact, then, we have
the following (heuristic) argument: Cornalba et al in Ref.\ \cite{CCS} 
(Secs. (5.6.2) and (6.2)) consider a $Dp-$brane with a worldvolume gauge field $F$ and 
argue using String Field Theory that turning on a constant $RR-$ and $B-$field backgrounds causes the $Dp-$brane
worldvolume gauge field to change due to the backreaction by the open strings. The
backreaction has the form $F + \delta F$ where $\delta F$ is
of the form $\delta F = (B - *C^{(8)}).$
This implies that the $B-$field and the $RR-$flux together act only to shift
the worldvolume gauge field strength. 
Note that in the above $B-$field is only {\em shifted} by the $RR-$flux.
 
We now claim that if the background has a sourceless $H-$flux together and a sourceless
Ramond-Ramond flux is turned on, the above implies that the gerbe on the background has been
{\em deformed} into a (possibly different) gerbe with connection form $(B-*C^{(8)})$. 
We identify this deformed gerbe
as one of the higher gerbes of Cor.\ (3.42) of Ref.\ \cite{Pennig1}.
(Recall that Ramond-Ramond fields can also be viewed as curvatures of a gerbe on spacetime
(see Sec.\ (5.1) of Ref.\ \cite{Ruffino})). Note that these higher gerbes are the gerbes 
corresponding to locally trivial $C^{\ast}-$bundles with fiber $\A \otimes \Cpct$
discussed above. 

The twist of this higher gerbe can be used (not uniquely) to obtain a principal
$\Aut(\A \otimes \Cpct)-$bundle over the spacetime background
(see Cor.\ (3.42) of Ref.\ \cite{Pennig1} and following note).
Thus the higher gerbe gives an element $\D$ in $\Bun_X(\A \otimes \Cpct)$ by Ref.\ \cite{Dadarlat}. 

Note that with the above argument if the classification scheme of Dadarlat and Pennig is used
when $RR-$flux is turned on, the gerbe on $X$ defined by the $B-$field can only change
its characteristic class however, it still remains a gerbe. The associated $C^{\ast}-$algebra on the other
hand, changes from a continuous-trace $C^{\ast}-$algebra
to a locally trivial $C^{\ast}-$bundle with fiber $\A \otimes \Cpct.$

It would be interesting to generalize the above to equivariant gerbes, but for this,
we would need an analogue of the theory of the equivariant Brauer group for
$\KR-$actions on section algebras of locally trivial bundles with fiber a strongly self-absorbing
$C^{\ast}-$algebra. In this paper, we have managed to find lifts for various fibers
of physical interest based on the current literature, however, the general case is unknown
at present.

Hence, the case $\A \neq \KC,$ for any spacetime background of 
Type II theory $X$ above the $C^{\ast}-$algebra $\D$ in $\Bun_X(\A \otimes \Cpct)$ 
corresponds to backgrounds which have other fluxes switched on apart from the $H-$flux 
(see item (\ref{H3Summand}) in the list below). 
\end{enumerate}

Then, the construction proposed in Sec.\ (\ref{SecLift}) above
should give a duality relation which generalizes
Topological T-duality for ordinary spacetime backgrounds with sourceless $H-$flux to
a duality for the above spacetime backgrounds with various sourceless $H-$ and $RR-$ fluxes turned on.
Under this duality, the $C^{\ast}-$algebra 
$\D^{\#}$ (see Sec. \ref{SecLift} above)
is a 'dual $C^{\ast}-$algebra' to $\D.$ Now, the dual space $X^{\#}$ 
should be recovered from  $\D^{\#}$ as $X^{\#} = \Prim(\D^{\#})$. Note that if
$\D^{\#}$ is a locally trivial bundle of self-absorbing $C^{\ast}-$algebras over $X^{\#}$ then the characteristic
classes of the fluxes on it can be recovered from the characteristic class of $\D.$ If, on the other hand, $\D^{\#}$
is not a locally trivial bundle of self-absorbing $C^{\ast}-$algebras, techniques from noncommutative geometry
might be helpful in determining the dual space $X^{\#}$ (see the last part of Sec.\ (\ref{SecLift}) above).

In Refs.\ \cite{Brodzki1,Brodzki2,Brodzki3}, the authors define noncommutative spacetimes as $C^{\ast}-$algebras
satisfying certain conditions. Noncommutative $D-$branes correspond to elements of the $K-$homology
of the spacetime and their charges lie in the operator $K-$theory of these $C^{\ast}-$algebras.
We note that the $C^{\ast}-$algebras $\KA$ satisfy the UCT (by Ref.\ \cite{TomsWin}) and so
do the $C^{\ast}-$algebras $\KD.$ Hence, by Cor.\ (4.5) for Ref.\ \cite{Brodzki3}, it is possible
to define a noncommutative analogue of the $D-$brane charge for these spacetimes.

Hence, it is natural to identify a 'generalized $D-$brane charge' of a $D-$brane in $\KD$ 
with the operator $K-$theory of $\D.$

In Ref.\ \cite{BouMatSph}, Sec. (6.1.2) Bouwknegt and Mathai calculate the twisted $K-$theory of
a space $P$ which is a principal $SU(2)-$fibration over a $4-$manifold $M$ twisted by an integral $7-$form $H \in H^7(P,\KZ)$  using the operator $K-$theory of a bundle of Cuntz algebras on $P.$  In this paper the authors wish to define a 'T-dual' principal $SU(2)-$bundle
using a generalization of Topological T-duality. Since the authors are studying $SU(2)-$bundles over a four-dimensional base space they do not consider crossed products by $\KR$ as this paper does. It is interesting to note that in Section (9) item (2) of the same paper, the authors reject the possibility that a crossed-product of the Cuntz algebra bundle above by a $SU(2)-$action would give them the T-dual of the spacetimes they are examining since it turns out there cannot be an  isomorphism in twisted $K-$theory for crossed-products by 
$SU(2)-$actions.

Also, in Ref.\cite{MMS} pg. 334, Section (1)  Macdonald, Mathai and Saratchandran argue that not all higher twisted
$K-$theories on a space are obtained by twisting the $K-$theory of a space by cohomology data. As an example of this, they study 
higher twisted $K-$theory of a space obtained from $\KO_{\infty}-$bundles mentioned in the previous paragraph.
For an important class of examples, they demonstrate that the twist of the higher twisted $K-$theory obtained 
naturally defines a class $H:X \to S^n$ in the cohomotopy set of $X.$ 
However, for finite, connected, torsion free spaces these theories do correspond to a twisting by a cohomology class
on the space.

In this paper, we argue that the choice of a bundle of self-absorbing $C^{\ast}-$algebras on spacetimes with a circle action
is a good model for spacetime backgrounds containing $D-$branes with sourceless $H-$flux and possibly $R-R-$flux.
In addition, we argue that the Topological T-dual proposed in Sec.\ \ref{SecLift}
gives a physically sensible 'generalized T-duality' (see below for more details). 
 


\subsection{Noncommutative Geometry and $K-$theory of Flux Backgrounds \label{SubSecNCGFluxB}}
Now we study $X$ together with a fixed algebra $\KD \in \Bun_{X}(\A \otimes \Cpct)$ 
describing a Type II string theory background with sourceless $H-$flux, and
possibly $RR-$flux, from the point of view Noncommutative Geometry and $K-$theory within
the above formalism.
\begin{enumerate}
\item {\flushleft{{\bf Effects on Points:}}}
Consider the inclusion of a point in $\{ \ast \}$ in $X.$ The pullback of
the algebra $\D$ along the inclusion map gives a trivial $C^{\ast}-$bundle 
over $\{ \ast \}$ with fiber $\A \otimes \Cpct.$
Thus, we may associate the $C^{\ast}$-algebra $\A \otimes \Cpct$ with a point
$\{ \ast \}$ in this formalism. We may view different types of self-absorbing $C^{\ast}-$algebras
$\A$ as defining various types of {\em noncommutative points} on the worldvolume gauge theory
of $D-$branes of the space $X$ with
$H-$flux or $RR-$flux turned on. {\em We would expect this to be visible in the
worldvolume theory of $D0-$branes on $X.$}

\item {\flushleft{{\bf Effects on Branes and Brane Charges:}}}
In Ref.\ \cite{Kapustin}, A. Kapustin has argued that in the case of 
$m$ coincident $D$-branes, the gauge theory on the $D$-brane 
worldvolume is described by a gauge theory on an Azumaya algebra
with fiber $M_m(\KC)$ whose characteristic class is the
restriction of the background $H$-flux to the $D$-brane worldvolume.

It is well known, (see Ref.\ \cite{RaeWill}) that the stabilization of
an Azumaya algebra is a continuous-trace $C^{\ast}$-algebra with
{\em torsion} characteristic class and every continuous-trace 
$C^{\ast}$-algebra with torsion characteristic class arises in this
way.

In a recent paper, Dadarlat and Pennig (see Ref.\ \cite{DP2}, Thm.\ (2.8))
have shown that for $\A$ a self-absorbing $C^{\ast}-$algebra
the elements of $\Bun_{X}(\A \otimes \Cpct)$
with torsion characteristic class may be obtained from
the stabilization of $C^{\ast}$-bundles over $X$ with fiber $M_n(\A)$ for some $n.$
In Sec. (\ref{TorsTD}) above, we have calculated the Topological T-dual of 
such $C^{\ast}-$algebras.

This similarity with the case of continuous-trace $C^{\ast}-$algebras
is extremely interesting. The change from $M_n(\KC)$ to $M_n(\A)$ in the above
may be attributed to {\em  'noncommutative points'} discussed above.

\item {\flushleft{{\bf $K-$theory of $X:$}}} 
In String Theory, it is well-known that when the $H-$flux is zero
the $K$-theory of $X$ reflects properties of the $D$-brane
charge on the space (see Ref.\ \cite{Witten} for example).

It is well-known that when $\A = \KC,$ $K_{\ast}(\D )$ is the twisted
$K$-theory of $X$ with the twisting the class of $\D$ in 
$\Bun_{X}(\KC \otimes \Cpct) \simeq H^3(X,\KZ).$ 
In more detail: When the twist vanishes, $\D = C(X) \otimes \Cpct$ and $K_{\ast}(\D)$ is
the ordinary $K$-theory of $X.$ 
When the twist does not vanish, but $\A = \KC,$ $K_{\ast}(\D)$ is the
twisted $K$-theory of $X,$ with the twist the class determined by 
$[\D]$ in $H^3(X,\KZ).$
It is conjectured that $D$-brane charges in
String Theory with background $H$-flux take values in twisted $K$-theory of $X.$
(see Ref.\ \cite{JMRCBMS} and references therein).

If we assume that the formalism outlined at the beginning of the current
section is true, the $D-$brane charge on $X$ is given by the $K-$theory of $\D$
where $\D$ is in $\Bun_{X}(\A \otimes \Cpct)$ for any strongly 
self-absorbing $C^{\ast}$-algebra $\A.$

{\underline{First consider the untwisted case:}} 
We consider $\D = C_0(X,\A) \simeq C_0(X)\otimes \A.$
By Ref.\ \cite{DadarlatK} Sec.\ (4), before Sec.\ (4.1), 
for each strongly self-absorbing $C^{\ast}$-algebra
$\A,$ the assignment $X \mapsto K_{\ast}(C(X) \otimes \A) \simeq K_{\ast}(C(X,\A))$
is a multiplicative generalized cohomology theory on finite $CW$-complexes. By 
Ref.\ \cite{Pennig1} $K_0(C(X,\A))$ is the Grothendieck group of
isomorphism classes of (finitely generated and projective)
Hilbert-$\A-$module bundles over $M.$

In Ref.\  \cite{Pennig}, the author argues that this cohomology theory 
can be represented by a commutative symmetric spectrum $KU^{\A}_{\bullet}$ (see
Ref.\ \cite{Pennig} Sec.\ (1)). In particular, for $\A =\JZ,$ the Jiang-Su algebra, the
resulting theory is topological $K-$theory, while for $\A$ an infinite UHF-algebra
of type $p^{\infty}$, the resulting theory is a localization of $K-$theory at the
prime $p$ with spectrum $KU[1/p].$
\newline
It would be interesting to see
if one could find $D-$brane configurations whose charge group 
is this group for some choice of $\A$ for a Type II background with the
correct sourceless fluxes turned on.
\newline
All self-absorbing $C^{\ast}-$algebras
$\A$ may be written as direct limits of matrix algebras. It is
interesting to speculate whether this fact could be used to 
build such configurations, perhaps as argued in the next example:

{\flushleft {\bf Example (CAR-Bundles):}} Consider a trivial $M_{2^{\infty}}$-bundle over $X.$
We can view
$$C_0(X,M_{2^{\infty}}) \simeq \varinjlim C(X,M_{2^k}(\KC))$$
where each inclusion corresponds to a doubling map (see Ref.\ \cite{JMRCBMS} and references
therein).
Each $C_0(X,M_{2^{k}}(\KC))$ factor can be viewed as the Azumaya bundle of the worldvolume gauge theory of
a stack of $2^k$  $D-$branes.
We view the injection $C_0(X, M_2(\KC)) \into C_0(X,M_4(\KC))$ by doubling
a matrix as associated to the worldvolume theory of four $D-$branes bound into a 'dimer' of two $D-$branes each.
We view the entire direct limit as the worldvolume theory associated to a stack of such dimers. While this cannot happen if background
$RR-$flux is not present (since the worldvolume gauge theory would then be given by the Azumaya 
algebra $C_0(X,M_4(\KC))$), it would be interesting to see if 
this or a similar construction would give a volume gauge theory described by 
$C_0(X, M_{2^{\infty}})$ when sourceless background $RR-$flux was turned on.

\underline{Now consider the twisted case:}
For the twisted case, we need to consider section algebras of nontrivial $C^{\ast}-$bundle with
fibers stabilized self-absorbing $C^{\ast}-$algebras.
In Ref.\ \cite{Pennig}, the author demonstrates that the operator algebraic $K-$theory of
these $C^{\ast}-$algebras  is a twisted version of the above, a higher twisted $K-$theory.

Thus, for a space with Ramond-Ramond flux concentrated in odd degrees together with
$H-$flux, we could take $\A=\JZ$ or $\A = M_{2^{\infty}}$ in the above.
Hence, {\em the charge group of $D-$branes on a space with Ramond-Ramond flux concentrated in odd
degrees should be a higher twisted topological $K-$theory or a twisting
of the localization of topological $K-$theory at the prime $2.$ } It would be interesting if this could be calculated
from String Theory.

It would be interesting to speculate which string theory backgrounds would have excitations which
have these cohomology theories as charges. Ref.\ \cite{Uranga} demonstrates that D-branes
in a certain flux background geometry have charges in the group $\KZ_p:$ Stacks of D-branes in these
backgrounds annihilate in sets of $p$ D-branes at a time. It might be possible to
construct backgrounds which have D-brane charges in the $K-$groups of the above $C^{\ast}-$algebras.

\end{enumerate}

\subsection{Physical Examples and Tree-Level Dualities}
\label{SubSecThreeEx}
We now consider some important examples of the above formalism and the physical transformations associated with them:
First note that when all background fluxes are turned off apart from the $H-$flux, the formalism above reduces to the
$C^{\ast}-$algebraic T-duality of Mathai and Rosenberg (\cite{MRCMP}).
However, if some of the background fluxes are turned on, we would obtain the fermionic
T-duality of Berkovits and Maldacena (\cite{MaldacenaB}) and the timelike T-duality of Hull (\cite{HullPope}). 

In the above, the $C^{\ast}-$algebraic T-duality of Mathai and Rosenberg models the T-duality symmetry of Type II String theory which is true symmetry of Type II string theory valid to all orders in perturbation theory. However, the Fermionic T-duality of Berkovits and Maldacena and the Timelike T-duality of Hull are currently only tree-level symmetries of Type II string theory, loop corrections to string theory might destroy this symmetry-thus while
the spacetime might have such symmetries on a coarse scale, if examined on fine scales, there need be no such symmetry.
We nonetheless consider these tree-level symmetries here, since we are only interested in the coarse features of the spacetime background (in particular, its algebraic topology).

We argue that these three examples correspond the crossed-product construction above for three different choices
for $\A$-namely $\A = \KC, M_{2^{\infty}}$ or $\JZ$ respectively together with a specific choice of the geometry
of $X.$

(As was argued above, we associate the $C^{\ast}-$algebras $\KC, M_{2^{\infty}}, \JZ$
with spacetime backgrounds with $H-$flux only, with Ramond-Ramond flux only and with both $H-$flux and Ramond-Ramond 
flux respectively.)

\begin{enumerate}
\item {\underline{\bf T-duality with $H-$flux:}} 
As was explained in detail at the beginning of this paper, the case $\A = \KC$ corresponds
to continuous-trace bundles over $X$ and these describe a space with
background sourceless $H-$flux as 
described by Mathai and Rosenberg in Ref.\ \cite{MRCMP}. The non-stable
case (corresponding to $M_n(\KC)-$bundles over $X$)
was examined earlier by Kapustin in Ref.\ \cite{Kapustin}.

By Thm.\ (\ref{ThmDPrim}) above, 
for any $\D \in \Bun_X(\A \otimes \Cpct),$
$\Prim(\D) \simeq X.$ For the case $\A = \KC,$ (see Ref.\ \cite{RaeRos,MRCMP}),
we have the stronger result $\hat{\A} \simeq X.$ 

\item {\underline {\bf Fermionic T-duality:}}
\begin{enumerate}
\item {\underline {\em Mathematical Example:}}
Now we consider the case $\A = M_{2^{\infty}}$ (also called the $CAR-$algebra). 
The sections of locally trivial bundles with fiber $M_{2^{\infty}} \otimes \Cpct$ are
operator-valued fields on $X$ which satisfy
the Canonical Anticommutation Relations.
\newline
The $K-$theory of $M_{2^{\infty}}$ is $K_0(M_{2^{\infty}}) \simeq \KZ[1/2]$ and $K_1(M_{2^{\infty}}) \simeq 0.$
The positive cone of $M_{2^{\infty}}$ is
$$K_0(M_{2^{\infty}})_{+} \simeq \{ a/2^{k} |a \in \KZ, a \geq 0, k \geq 0 \}.$$ 
As a result $K_0(M_{2^\infty})^{\times}_{+} \simeq \KZ$ since it is
a cyclic group on one generator namely $2 \in \KZ[1/2].$ Hence, $\Bun_{S^1}(M_{2^{\infty}} \otimes \Cpct) \simeq \KZ$ and there can
be nontrivial fiberings of $M_{2^{\infty}}$ over $S^1.$


In Sec.\ (\ref{CStarEx}) Item (1), the Topological T-dual-in the sense of Sec.\ (\ref{SecLift})-of $C^{\ast}-$algebras in
$\Bun_{S^1}(M_{2^{\infty}} \otimes \Cpct)$ was calculated. It was shown that the Topological T-dual must be isomorphic
$\KO_2 \otimes \Cpct$ and by the argument in Sec.\ (\ref{SecLift}) this must be the unique Topological T-dual as a noncommutative
space.

When $X = \{ \ensuremath{\mbox{pt}} \},$ a space with only one point,  it had been remarked in that example that the crossed product
should be viewed as a noncommutative space ($\KO_2$)
which is a `noncommutative fibration' with circle fibers
over a simple algebra (the CAR algebra again, see Sec. (\ref{CStarEx}) Item (1) above)-whose $\Prim$ must be a point.
We also showed in that example that no {\em Rokhlin} $\KR-$action can give this space as a crossed product.

We consider a supersymmetric background $\KR^{(8,1)} \times S^1.$ 
We place a graviphoton flux on the background and perform a fermionic
T-duality. The T-dual should be non-anti-commutative superspace
over a circle. We identify the original space before fermionic T-duality with an element of
$\Bun_{S^1}(M_{2^{\infty}} \otimes \Cpct).$ 

We claim that the fermionic T-dual of a supersymmetric theory over a circle is non-anti-commutative superspace
over a point-which we identify with the unique Topological T-dual $\KO_2 \otimes \Cpct$ above.
The T-dual circle action would then correspond to the rotation of the generators of non-anti-commutative superspace by a phase.
This should be matched with the Buscher's rules for fermionic T-duality: It can be seen that the physical T-dual is a circle
with the fermionic directions `twisted' by the T-dual graviphoton flux into NAC superspace.

In Sec.\ (\ref{CStarEx}) Item (2), the Topological T-dual-in the sense of Sec.\ (\ref{SecLift})-of specific $C^{\ast}-$algebras in
$\Bun_{X}(M_{2^{\infty}} \otimes \Cpct)$ was calculated for $X$ a space with a free circle action.
We can see that the calculation is a `fibering' of the previous calculation in item (1) there.

Due to all the above, we strongly suggest that the Topological T-dual in the sense of Sec.\ (\ref{SecLift}) of a $M_{2^{\infty}}-$bundle over a space $X$ with a free circle or torus action is a non-anticommutative superspace over $W \simeq X/S^1.$ In our proof we find that
the topological type of the T-dual bundle will vary depending on the details of the lifted circle action, but we suggest that this be
fixed by matching the result to Buscher's rules.

In Sec.\ (\ref{CStarEx}) Item (3), we argued that the Topological T-dual in the sense of Sec.\ (\ref{SecLift}) may be extended
to trivial $\KT^N-$bundles over a point. We also proved that in this case there is a unique Topological T-dual in this extended sense,
but pointed out the difficulty of calculating it. For example, it is clear that the Topological T-dual $C^{\ast}-$algebra is a
$\JZ-$stable, $UHF-$stable $C^{\ast}-$bundle over a point, but its not clear what the Topological T-dual {\em is.}

\item {\underline {\em Relation to Physics:}}
We speculate that the above transformation is the {\em  `fermionic T-duality'} of Maldacena-Berkovits.
In Ref.\ \cite{MaldacenaB}, the authors define a new type of T-duality transformation, at the tree level, which T-dualizes in
a fermionic direction. For the duality to be nontrivial, the analogue of the $B-$field in the fermionic
direction (the 'graviphoton flux') has to be nontrivial. 

From the paper of Maldacena-Berkovits, the effect of a fermionic T-duality is to leave the topology of 
the space alone. The analogue of the Buscher rules for fermionic T-duality are quoted below in the bispinor formalism
of Ref.\ \cite{MaldacenaB} from Sec. (2.4) of the same paper (see Ref.\ \cite{MaldacenaB} for notation and details):

\begin{gather}
-\frac{i}{4} e^{\phi'} F^{'\alpha \hat{\beta}} = -\frac{i}{4} e^{\phi} F^{\alpha \hat{\beta}} - 
\epsilon^{\alpha} {\hat{\epsilon}}^{\hat{\beta}} C^{-1} \\
\phi^{'} = \phi + \frac{1}{2} \log C
\end{gather}
where as usual unprimed symbols denote the original fields while primed ($'$) symbols denote the dual fields. Also
$\phi$ is the dilaton, $C$ is the $\theta = \hat{\theta} = 0$ component of the Kalb-Ramond field component $B_{11},$ 
identified with the graviphoton flux
$(\epsilon^{\alpha},\hat{\epsilon}^{\hat{\alpha}})$
are spinors and
$F^{\alpha \hat{\beta}}, F^{'\alpha \hat{\alpha}}$ are bispinors constructed from the
Ramond-Ramond Fluxes.

It can be seen from the above that the T-duality transformation does not change 
the metric or $B-$fields but changes the Ramond-Ramond flux and dilaton on the background. 

Fermionic T-duality was later studied for a two dimensional supertorus with $B-$field and graviphoton flux in
Ref.\ \cite{TwoDSuperT2}. In this paper, the T-dual is the result of an ordinary 'bosonic' T-duality followed by
a 'fermionic T-duality'. This is because the isometry being dualized has components in the fermionic direction.

Consider the example in Sec.\ (\ref{CStarEx}) Item (2) above. We associate 
the $C^{\ast}-$algebra  $\KD \in \Bun_{S^1}(M_{2^{\infty}} \otimes \Cpct)$
to a superspace with body a point with one fermionic direction with periodization in the fermionic direction. 
If there is Ramond-Ramond flux present on the original spacetime,
it gives a class in $H^1(S^1,\KZ)$ which is
part of the characteristic class of the $M_{2^{\infty}}-$bundle above.

\item {\underline {\em $K-$theory of $\KD$ and $D-$brane charge:}}
The Topological T-dual $C^{\ast}-$algebra in the sense of Sec.\ (\ref{SecLift})
to $\KD$ in Part (a) above is -by the Example in Part (a) above-$\KO_2 \otimes \Cpct.$

Now, suppose $X \neq \ensuremath{ \mbox{ pt } }.$ Since $K_0(\KO_2 \otimes \Cpct)$ is zero, $K_0(\KO_2 \otimes \Cpct)^{\times}_{+}$
 is zero as well, hence the elements of the spectral sequence tableaux of Ref. \cite{Dadarlat} after Thm. (4.2) are identically zero.  The dual bundle is trivial with zero 
characteristic class since the spectral sequence computing $E_{\KO_2}^{\ast}(X)$ is identically zero (see Ref.\ \cite{Dadarlat} after Thm. (4.2)).
Also, note that since $R^{\times}_{+} = K_0(\A)^{\times}_{+}$ is zero and the characteristic class of the Ramond-Ramond flux can only lie in 
$H^1(X,R^{\times}_{+}) \simeq 0$ there can be no Ramond-Ramond flux on the T-dual. This is
consistent with the Buscher's rules for fermionic T-duality above as well-by 
Ref.\ \cite{MaldacenaB} the T-dual has imaginary Ramond-Ramond flux under fermionic T-duality.

We tentatively identify the geometry of the T-dual  as a point times 'non-anti-commutative superspace'  (see Ref.\ \cite{TwoDSuperT2}).
It is interesting to note that the generators of the Cuntz algebra (which is the T-dual)
$$
S_1 S_1^{\ast} + S_2 S_2^{\ast} = 1,     S_j^{\ast} S_j = 1
$$
 look remarkably similar to the
defining relations for non-anti-commutative variables $\theta^a$
$$
\theta^a \theta^b + \theta^b \theta^a = 1.
$$
We tentatively associate the Cuntz algebra to the ring of functions on  'non-anti-commutative superspace '- the T-dual of superspace with graviphoton
flux turned on.

Consider $\KD$ as the trivial element of $\Bun_{S^1}(M_{2^{\infty}} \otimes \Cpct).$ 
The $K-$theory of the space is, by Kunneth's theorem $K_0(\KD) \simeq \KZ[1/2], K_1(\KD) \simeq 0.$

\item {\underline {\em $K-$theory of $X$ and twisting:}} We had remarked above that the space $X \simeq \Prim(\KD)$ should be viewed as having
'noncommutative points' given by the $C^{\ast}-$algebra $\A \simeq M_{2^{\infty}}.$ 
We note that this may be viewed as changing the
$K-$theory of the space $X:$ To calculate the $K-$theory of the above space we should actually calculate
$K_{\ast}(\KD)$ (and not $K_{\ast}(C_0(X))$ as usual). 

In the above example, 
we have a trivial $(M_{2^{\infty}} \otimes \Cpct)-$bundle
over $X$ and hence, the $K-$theory of the spacetime background should be given by 
$K_{\ast}(C_0(X, M_{2^{\infty}}))$ and not $K_{\ast}(C_0(X,\KC))$ as is usual. 

 Now, we have that $C_0(X,M_{2^{\infty}}) \simeq C_0(X) \otimes M_{2^{\infty}}$
hence by the Kunneth theorem for $C^{\ast}-$algebras (see Blackadar's book Ref.\ \cite{BlacBook}, V.1.5.10, pg. 417), we have that 
$$K_{\ast}(C_0(X,M_{2^{\infty}})) \simeq K_{\ast}(C_0(X)) \otimes K_{\ast}(M_{2^{\infty}})$$
(the $\Tor^{\KZ}_{1}$ term is zero since $K_{\ast}(M_{2^{\infty}})$ is torsion free). 
This gives us the $K_0$ group of $X$ as $K_0(C_0(X)) \otimes \KZ[1/2]$ and
$K_1$ group of $X$ as $K_1(C_0(X)) \otimes \KZ[1/2]$. We suspect that the above groups store information about the fermionic
part of the supersymmetric background, but it would be interesting to examine this further.

If we restrict ourselves to $M_{2^{\infty}}-$bundles over $X,$ we cannot get the 'higher twists' of Ref.\ \cite{Pennig1}, to
see these we need to study other $C^{\ast}-$algebras $\A.$

\end{enumerate}

{\flushleft{\bf Remark:}} It would be interesting to test the above idea on a better example, perhaps a graded 
$C^{\ast}-$algebra $\KG$ which was a Cuntz algebra bundle over a space with noncommutative torus fibers. 
We would also like to make an analogy with Ref.\ \cite{TwoDSuperT2} and  naturally associate this $C^{\ast}-$algebra to 
the the T-dual of a supertorus with $B-$field and graviphoton flux.

One way to calculate the above $C^{\ast}-$algebra $\KG$ might be as follows: Following the example of Ref.\ \cite{MRCMP}, 
Sec. (5) we could consider $C^{\ast}(\KH_{\KZ}) \otimes \KO_2 \otimes \Cpct.$ This is
a trivial noncommutative $\KO_2 \otimes \Cpct-$bundle over the noncommutative space associated to
 $C^{\ast}(\KH_{\KZ})\otimes \Cpct$. In Ref.\ \cite{MRCMP}, Sec. (5), 
it was shown that $C^{\ast}(\KH_{\KZ}) \otimes \Cpct$ was a noncommutative two-torus fibration over a circle. 
It was also argued in Section (5) of Ref.\ \cite{MRCMP} that $C^{\ast}(\KH_{\KZ}) \otimes \Cpct$ was an example of a 'noncommutative T-dual' of a three-torus with $H-$flux. 

We would like to construct a $C^{\ast}-$algebra $\KG$ with $\Prim(\KG) \simeq \KT ^3$  with $H-$flux and $RR-$flux
which would have $C^{\ast}(\KH_{\KZ}) \otimes \KO_2$ as a crossed product by a $\KR^2-$action
lifting the natural circle action on $\KT^3.$ It would be interesting to generalize
the construction on Section (5) of Ref. \cite{MRCMP} to this situation.

However, the above lifting result won't work for lifting $\KR^2-$actions, so at present its not clear how to do the above 
calculation. By analogy with Ref.\ \cite{TwoDSuperT2}, this noncommutative space $\KG$  should
possess a rich set fermionic and bosonic T-duals which would appear as Morita Equivalences of the $C^{\ast}-$algebra $\KG.$

{\flushleft{\bf Remark:}}Generalizing the above machinery, we conjecture that if the spacetime background possesses $N=k$ supersymmetry we should associate
to it $M_{2^{\infty}} \otimes \underset{k \mbox{ times }}{\ldots} \otimes M_{2^{\infty}}$ bundles over $X$. Note that such bundles are isomorphic to $M_{2^{\infty}}-$bundles
over the space $X$ since 
$$
M_{2^{\infty}} \otimes \underset{k \mbox{ times }}{\ldots} \otimes M_{2^{\infty}} \simeq M_{2^{\infty}}
$$
by definition of the $CAR$ algebra as a direct limit of an infinite tensor product of $M_2(\KC)$ (see Ref.\ \cite{JMRCBMS} and references therein for details).
Thus, we conjecture that we should {\bf associate one $CAR
-$algebra factor} in $\A \otimes \Cpct$ {\bf with each supersymmetric direction}.

\item {\underline{\bf{Timelike T-duality:}}} \label{H3Summand}
\begin{enumerate}
\item {\flushleft \underline{\bf Mathematical Example:}}
From Cor.\ (4.6) of Ref.\ \cite{Dadarlat}, if $\A = \JZ$ the Jiang-Su algebra,
$\Bun_X(\JZ \otimes \Cpct) \simeq \bigoplus_{k \geq 1} H^{2k+1}(X,\KZ).$ 
Thus, the characteristic class of locally trivial bundles over $X$ 
with fiber $\JZ \otimes \Cpct,$ the stabilized Jiang-Su algebra
might encode the characteristic
classes of background $RR$-fluxes.

Further, by Cor.\ (4.7) of Ref.\ \cite{Dadarlat},
there is a natural map $\KC \to \JZ$ which implies that there is
a natural transformation of cohomology theories 
$T:E^{\ast}_{\KC}(X) \to E_{\JZ}(X).$ By the same Corollary,
$E^{1}_{\KC}(X) \simeq H^3(X,\KZ)$ is a natural direct summand
of $E^1_{\JZ}(X) \simeq \Bun_{X}(\JZ \otimes \Cpct).$ 
Hence locally trivial bundles over $X$ with
fiber the stabilized Jiang-Su algebra $\JZ \otimes \Cpct$ might describe
a space with background $H$-flux 
together with background $RR$-fluxes. 

\item {\flushleft \underline{\em{Relation to Physics:}}}
We speculate that the above transformation is the 'Timelike T-duality' transformation
of Hull (see Ref. \cite{HullPope} for details).

Consider spacetime backgrounds of the form $X \times \KR$ where the $\KR-$direction
is the time variable and the $X$ direction has no circle action. 
To the spacelike part  $X$ of $X \times \KR$ with sourceless, constant $RR-$flux and $H-$flux
we associate $\KD \in \Bun_{X}(\JZ \otimes \Cpct)$. 
We mimic periodization in the timelike direction by considering
$\Ind^{\KR}_{\KZ}(\KD, \alpha)-$ these
are stabilized Jiang-Su bundles over the circle.
$X \times S^1 \simeq  \Prim(\Ind^{\KR}_{\KZ}(\KD, \alpha))$ where $\alpha$ is
an automorphism of $\KD$ reflecting the properties of the periodization above.

It is not possible to calculate the Topological T-dual in Sec.\ (\ref{SecLift}) for $\KZ \otimes \Cpct,$
not enough is known about Rokhlin group actions on the Jiang-Su algebra. In this case, even though
the result won't cover Rokhlin actions, we attempt to identify the crossed product 
$\CrPr{\Ind^{\KR}_{\KZ}(\KD, \alpha)}{\KR}{\alpha}$ with the timelike T-dual.
This is reasonable as this is only a preliminary inquiry.

If we take $\alpha$ to be independent of $x \in X,$ then,
$$
\Ind^{\KR}_{\KZ}(\KD \otimes \Cpct,\alpha) \simeq C_0(X) \otimes 
\Ind^{\KR}_{\KZ}(\JZ\otimes \Cpct, \alpha).
$$
We calculate this
fully in the next part.

\item {\flushleft \underline{\em {$K-$theory and $D-$brane charge:}}}
By Thm. (2.9) of Ref.\ \cite{JiangSu} we find the $K-$theory of $\JZ$ as
$$
(K_0(\JZ), K_0(\JZ)_{+}, K_1(\JZ), [1_{\JZ}]) \simeq (\JZ, \JZ_{+} \cup \{ 0 \}, \{ 0 \}, 1)
$$
Now, by the discussion around Eqs.\ (\ref{SSACl1}, \ref{SSACl2}) above, and
by Cor.\ (4.6) of Ref.\ \cite{Dadarlat}, 
$\Bun_{S^1}(\JZ \otimes \Cpct) \simeq K_0(\JZ)^{\ast}_{+} \simeq \{ 1 \}$
as an abelian group. Hence there can't be any nontrivial Jiang-Su bundles over
the circle. Thus, $\KD \simeq C(S^1) \otimes \JZ \otimes \Cpct.$

The $K-$theory of $K_{\ast}(C_0(X) \otimes \KT_{\alpha}(\JZ \otimes \Cpct))$
now follows from the previous part. Using the Kunneth
theorem and the fact that the $K-$theory of the Jiang-Su algebra has no torsion
(see Ref.\ \cite{TomsWin} and references therein for details)
the $K-$theory groups are given by 
$$
K_{\ast}(C_0(X) \otimes C(S^1) \otimes \JZ \otimes \Cpct) \simeq K_{\ast}(C_0(X \times S^1)) \otimes K_{\ast}(\JZ).
$$
By the previous part, $K_0(\JZ) \simeq \KZ$ and $K_1(\JZ) \simeq 0.$
Hence $K_{0}(C_0(X) \otimes \KT_{\alpha}(\JZ \otimes \Cpct) ) \simeq K^0(X \times S^1)$
and $K_{1}(C_0(X) \otimes \KT_{\alpha}(\JZ \otimes \Cpct)) \simeq K^1(X \times S^1).$
Thus, the dual has the same $K-$theory as the topological $K-$theory of $X \times S^1.$
This is to be expected since the dual gains the timelike direction along which
timelike T-duality acted as a new spatial dimension (see Ref.\ \cite{HullPope}).

\item {\flushleft \underline{\em {$K-$theory and twisting:}}}
As in the previous example we should calculate $K_{\ast}(C_0(X) \otimes \JZ)$ 
since the natural $K-$theory of $X$ with $\JZ$ present is not given by
$K_{\ast}(C_0(X))$ but by $K_{\ast}(C_0(X) \otimes \JZ).$ 
By Ref.\ \cite{Pennig1}, this is isomorphic to $K^{\ast}(X).$

Physically, we could interpret this as the following: The twisting of the
$D-$brane charges by the $H-$flux seem to vanish when $RR-$flux is turned on.
It would be interesting to see if this can be determined from String Theory calculations.

%
%
\end{enumerate}

\end{enumerate} 

\section{Summary and Conclusions \label{SecConclusion}}
To summarize,we associate a spacetime background $X$ with $H-$flux and sourceless Ramond-Ramond flux to a section algebra ($\D$) of
a locally trivial fiber bundle with fiber a fixed self-absorbing $C^{\ast}-$algebra ($\A$).

This has the following consequences:
\begin{itemize}
\item The $C^{\ast}-$algebra of functions on a point changes from $\KC \otimes \Cpct$-in the presence of $H-$flux only- to 
$\A \otimes \Cpct$-where $\A$ depends on which background fields are present: In the presence of Ramond-Ramond flux only $\A = M_{2^{\infty}}$ and in the presence of $H-$flux and Ramond-Ramond flux $\A = \JZ$.

Thus, the presence of sourceless Ramond-Ramond flux causes the appearance of 'noncommutative points' of various types depending on whether other fluxes are present or not. It should be possible to detect this by examining the worldvolume theory of $D0-$branes or the correct matrix model. 

\item The stabilized $C^{\ast}$-algebra of functions on the brane worldvolume changes from being strongly Morita equivalent to $C_0(U)$ locally to being strongly Morita equivalent to $C_0(U, \A \otimes \Cpct)$ locally. The obstruction to the global strong Morita equivalence is 
as follows:

If only $H-$flux is present the $C^{\ast}-$algebra of functions on the brane worldvolume is only locally strongly 
Morita equivalent to $C_0(X).$ The obstruction to it being 
{\em globally } strongly Morita equivalent to $C_0(X)$ is the gerbe on the brane worldvolume (see 
Ref.\ \cite{Pennig1}) whose gerbe curvature is the $H-$flux. 

If Ramond-Ramond flux is present the stabilized $C^{\ast}-$algebra of functions on the brane worldvolume is  locally strongly Morita equivalent to $C_0(X,\A).$ The obstruction to it being {\em globally} strongly Morita equivalent to $C_0(X,\A)$ is a {\em higher gerbe } on the brane worldvolume (see Ref.\ \cite{Pennig1} Sec. (1)). 

The work in this paper implies that the 'twist' of this higher gerbe encodes all the sourceless background fields present ($H-$flux and Ramond-Ramond flux). 
It would be interesting to see if a connection-like structure on this higher gerbe gave fields with the correct transformation laws  in analogy with gerbe connections on gerbes on the worldvolume of $D-$branes in a background sourceless $H-$flux. 
\item As argued in Refs.\ \cite{Kapustin} the
worldvolume gauge theory for ordinary stack of  $N$ 
$D-$branes may be written using section algebras of locally trivial $M_N(\KC)-$bundles (Azumaya algebras) over the
$D-$brane worldvolume $X.$

Based on the arguments we would expect the worldvolume gauge theory of a possibly infinite stack of $D-$branes in background sourceless
$RR-$flux to be expressible in terms of the $C^{\ast}-$algebras like $\KD$ above which are elements of $\Bun_X(\A \otimes \Cpct).$ 
It would also be interesting to see if specific D-brane configurations correspond  to specific $C^{\ast}-$algebras
$\KD \in \Bun_X(\A \otimes \Cpct)$.

\item In addition we argue about the lift of $S^1-$actions on $X$ to unique Rokhlin $\KR-$actions on $\D$ up to cocycle conjugacy.
We have seen that the lift is possible for a certain class of fiber algebras and might not be possible for other fiber algebras.
We propose a procedure for finding the Topological T-dual which always gives a T-dual $C^{\ast}-$algebra for
any fiber algebra and which agrees with the crossed product $C^{\ast}-$algebra T-dual when there is a Rokhlin lift of the circle
action on $X$ present. We also characterize the T-dual $C^{\ast}-$algebra in Thm.\ (\ref{TDAll}) above.

We use this formalism to calculate the Topological T-dual $C^{\ast}-$algebra for some examples. A more detailed analysis of the structure of
the T-dual calculated in Thm.\ (\ref{TDAll}) might give some more interesting-and possibly physically relevant-examples by analogy with Topological T-duality for continuous-trace $C^{\ast}-$algebras.

We propose two physical examples of this extension of Topological T-duality  in Subsection (\ref{SubSecThreeEx}) above:Fermionic T-duality and Timelike T-duality.

\end{itemize}

%
%

\section{Acknowledgements}

I thank Professor Jonathan M. Rosenberg of the University of
Maryland, College Park, for all his advice and help.

I thank the Department of Mathematics, Harish-Chandra Research Institute,
Allahabad, for support in the form a postdoctoral fellowship
during which the first draft of this paper was written.

I thank the School of Mathematics, NISER, HBNI, Bhubaneshwar for support in
the form of a Visiting Assistant Professorship during the writing of part of this paper.

I thank the Mathematical and Physical Sciences Division, School of Arts and Sciences, Ahmedabad University, Ahmedabad for all their help and
support. 

\providecommand{\href}[2]{#2}

\address{Division of Physical and Mathematical Sciences,\\
School of Arts and Sciences, Ahmedabad University, Ahmedabad, India.\\
\email{ashwin.s.pande@gmail.com, ashwin.pande@ahduni.edu.in}}

\end{document}